\documentclass[11pt,fleqn]{article}

\usepackage{subcaption}

\usepackage{amsfonts}
\usepackage{amsmath,amssymb}
\usepackage{amsthm}
\usepackage{algorithmic}
\usepackage{array}
\usepackage{epsf}
\usepackage{graphicx,psfrag,color,pst-grad,pstcol}
\graphicspath{ {./figures/} }
\usepackage{latexsym}
\usepackage{calc}
\usepackage{multicol}
\usepackage{float}

\def\qed{\hfill \mbox{\rule[0pt]{1.5ex}{1.5ex}} \vskip.1cm}
\def\R{\mathbb{R}}
\def\cG{{\cal G}}
\def\cL{{\cal L}}

\def\cT{{\cal T}}

\def\cB{{\cal B}}

\def\cE{{\cal E}}

\def\cP{{\cal P}}

\def\cM{{\cal M}}

\def\cQ{{\cal Q}}

\def\qed{\hfill \mbox{\rule[0pt]{1.5ex}{1.5ex}}\vskip.3cm}

\newtheorem{remark}{Remark}
\newtheorem{theorem}{Theorem}
\newtheorem{prop}{Proposition}
\newtheorem{assumption}{Assumption}

\def\BibTeX{{\rm B\kern-.05em{\sc i\kern-.025em b}\kern-.08em
    T\kern-.1667em\lower.7ex\hbox{E}\kern-.125emX}}

\title{\LARGE \bf Mixed Potential Approach to Convergence of Nonlinear RLC Circuits with Memristors}

\author{Mauro Di Marco, Mauro Forti, Luca Pancioni,\\ Giacomo Innocenti, Alberto Tesi
\thanks{M. Di Marco, M. Forti and L. Pancioni are with the Department of Information Engineering and Mathematics, University of Siena, v. Roma 56 - 53100 Siena, Italy, e--mail: {\tt\small mauro.dimarco@unisi.it, mauro.forti@unisi.it, luca.pancioni@unisi.it}.
G. Innocenti and A. Tesi are with the Department of Information Engineering,
University of Florence, via S. Marta 3 - 50139 Firenze, Italy, e--mail: {\tt\small giacomo.innocenti@unifi.it,alberto.tesi@unifi.it}.
}
}

\begin{document}

\maketitle

\begin{abstract}
The paper considers a large class of nonlinear circuits, termed RLCM, containing all four basic
circuit elements, i.e., resistors, inductors, capacitors and
memristors. A companion paper \cite{DiMarco2026BraytonTheor} has
introduced a \emph{mixed potential for RLCM circuits} generalizing that found by Brayton and Moser for circuits without memristors. In this paper, systematic
Lyapunov-like results on convergence of RLCM circuits are proved by means of the mixed
potential. These hold under the basic assumption that an RLCM circuit has a complete set
of variables in the flux-charge domain and they require, roughly speaking, that
there is a balance, which is quantitatively estimated, between capacitors and inductors. The
convergence results
are robust with respect to circuit parameter variations and they include cases where the
memristor circuits possess multiple stable equilibrium points, which is of importance for instance to
implement content addressable memories (CAMs). The results extend to circuits possessing
all four basic circuit elements previous results that pertain to circuits without
memristors or memristor circuits without inductors.
The main proofs are conducted by using the flux-charge analysis method (FCAM) to analyze RLCM
circuits in the flux-charge domain.
\end{abstract}

Keywords: Convergence, flux-Charge analysis method, Lyapunov method, memristor, mixed potential,
nonlinear dynamics.

\ \vskip1.cm

\section{Introduction}
\label{sect:intro}

Memristor has been introduced by L. Chua in a classical paper published in 1971 as
the fourth basic passive circuit element together with resistor, inductor and
capacitor~\cite{Chua1971}.
Memristor circuits are gaining an increasing importance in modern electronics for their
peculiar ability to mimic the processing capabilities of nervous systems \cite{huang2021editorial,Sirakoulis2022717,Wang2023AEM,Duan2024}. Memristors are
used to model neurons, since they are able to memorize their state without
using a power supply \cite{boybat2018neuromorphic,yang2022ResProgMem}. Moreover, they are tailor made to implement synaptic-like
connections by means of their state-dependent conductance. \emph{In-memory computing} schemes
have been devised in the literature where the processing and memorization are
at the same physical level in the memristors~ \cite{ielmini2020device,ascoli2020theoretical,
10518003,SpecIssue-InMemory2023,11053213,Xiao20242294TCASI,Yang20254127}. These schemes appear to be effective
to overcome the so-called memory wall and Von-Neumann bottleneck of
traditional computing machines.

One of the most important dynamical properties of nonlinear circuits and
neural networks (NNs) is convergence towards equilibrium points (EPs) \cite{chua1980dynamic,CG83,Hop84}.
In fact, nonlinear circuits can burst into undesirable oscillations,
so that finding design conditions ensuring convergence is of practical relevance.
Convergence is known to be of paramount interest in nonlinear circuits modeling
NNs. The main reason is that convergent NNs with
multiple stable EPs, also named multistable NNs, are tailor made to
implement content addressable
memories (CAMs) and to solve combinatorial optimization problems and
several other signal processing tasks in real time \cite{CG83,Hop84,CY88a}.

A huge and comprehensive body
of relevant contributions are available in the literature on convergence of
nonlinear circuits and NNs \emph{without memristors},
see, e.g.,~\cite{chua1980dynamic,brayton1971nonlinear,TNN-GAS-review,Liu20231098} and references therein.
Some main results
rely on Lyapunov approach,
monotone dynamic system theory for
cooperative systems~\cite{DFGP11d,DFGP11b} and the global consistency of decisions for competitive
NNs~\cite{Gro80}. The Nobel prize
awarded to J.\ J.\ Hopfield for the celebrated NN carrying his name has
recently given a significant boost in studying convergence of NNs.
In contrast, the study of convergence for nonlinear circuits and NNs \emph{with
memristors} is still in its infancy.
One reason is that only recently memristors have attracted
the attention of the scientific community. Another reason is that
analyzing circuits with memristors appears to be more complex due to the
elusive dynamic properties displayed by these elements. As a matter of fact,
the study of convergence when nonlinear circuits or NNs contain
emerging electronic elements as memristors is deemed to be a very
significant research topic. The importance is even more
evident when considering that, as it was noticed before,
the modern trend is to design NNs where both the interconnections
and the neurons are implemented with memristors.

Only a few results on convergence of memristor circuits are available
in the literature. A large part of them pertain to memristor circuits with a
neural architecture where use is made of the ideal memristor
introduced by L. Chua and some of its variants as the celebrated HP memristor~\cite{DFP16,di2017memristor,DiMarco2020LQprog,10144925,DiMarco2022rectifying,di2022convergence}. Another line of research concerns
convergence of NNs where memristors are modeled by switching resistors~\cite{WZ12,GWY2013,YCY14,ZSYS13,WWZ12}.
However, the relation of that model with the classical Chua's memristor
has not been investigated as yet.
A recent article has addressed convergence for a class of nonlinear
circuits, termed RCM, containing \emph{three basic circuit elements},
i.e., resistors, capacitors and memristors~\cite{di2025robust}.
It is proved that convergence
is robust within that class, i.e., it holds for any value of the circuit
parameters and memristor nonlinearities involved.
One basic limitation is that none of the quoted results pertains to
memristor circuits including \emph{both capacitors and inductors}.

In this paper, we consider a class of nonlinear circuits containing \emph{all four
basic circuit elements}, i.e., resistors, capacitors, inductors and
memristors.
The class, termed RLCM, generalizes that studied in~\cite{di2025robust}
where inductors are not included. Considering also inductors is relevant from
a theoretic as well as a practical viewpoint. In fact, we can expect a richer dynamical behavior,
including convergence, oscillatory and chaotic phenomena, when both the electric
and magnetic fields are simultaneously present in a nonlinear circuit. Inductors can be deliberately
inserted in a circuit to obtain particular types of dynamics. On the other hand, it
is well-known that any serious model of circuits operating at high frequencies needs to
necessarily account for parasitic inductors.

A companion paper has introduced a \emph{mixed potential} in the flux-charge domain (FCD)
for RLCM circuits~\cite{DiMarco2026BraytonTheor}. The mixed
potential extends that introduced by Brayton and Moser in the traditional
voltage-current domain (VCD) for memristor-less RLC circuits.
This paper uses the mixed potential in~\cite{DiMarco2026BraytonTheor}, in combination with the flux-charge
analysis method (FCAM)~\cite{Corinto-Forti-I,cfc2020}, to prove systematic
Lyapunov-type results on convergence for nonlinear RLCM circuits.
The main contributions can be summarized as follows:

(a) We establish easily checkable conditions ensuring convergence for relevant
subclasses of RLCM circuits.
Typical results state
that convergence is guaranteed provided a given matrix, depending on the
circuit topology and on capacitances and inductances, has a norm bounded
by a suitable quantity. More physically, this means that there is convergence
provided there is a balance between the inductors and capacitors in the
circuit.

(b) The convergence results are robust, i.e., they hold
on an open set of circuit parameters and for a large class
of memristor nonlinearities.
Moreover, they include the most
interesting case where the memristor circuits possess \emph{multiple stable EPs},
which is a key property
to implement for instance a CAM
or to use the circuits for solving combinatorial optimization problems or
other signal processing tasks in real time.

(c) Convergence is established both in the FCD and in the VCD.
For the considered subclasses, it is shown that additional dynamical
aspects can be singled out. These include the presence of invariants
of motion for the equations in the VCD and the possibility to
obtain a foliation of the state space in the VCD in invariant manifolds
where the circuits obey a reduced-order dynamics. Moreover, links are
established between the invariants of motion and the mixed potential
enabling to study how the convergence properties depend upon the
manifolds.

(d) The results are illustrated via the simulations of selected
RLCM circuits.

The obtained conditions for convergence are, to the authors knowledge, the only existing ones
that can be applied to nonlinear circuits that contain all four basic circuit elements. The results
are an extension of the previously quoted ones for nonlinear circuits, NNs with ideal
memristors and NNs with memristors modeled as switching devices, where once
more inductors are not considered.

The paper is organized as follows. The considered class of RLCM circuits is described in
Sect.\ \ref{sect:class_RLCM} and the SE description of RLCM circuits is provided in
Sect.\ \ref{sect:FCD}. Section\ \ref{sect:subclasses} considers some interesting
subclasses of RLCM circuits. The main results on convergence are established in
Sect.\ \ref{sect:RCM conv} and Sect.\ \ref{sect:convRLCM} and their significance is
discussed in Sect.\ \ref{sect:disc}. Examples illustrating the convergence results are
worked out in Sect.\ \ref{sect:ex} while the main conclusions are
collected in Sect.\ \ref{sect:concl}.

\section{Class RLCM of Memristor Circuits}
\label{sect:class_RLCM}

We consider a class of nonlinear circuits, denoted RLCM, containing:
\begin{itemize}
   \item[$\bullet$] $n_C$ passive capacitors $i_{Ci}=C_i \dot v_{Ci}$, where $C_i>0$, and $n_L$ passive inductors $v_{Li}=L_i \dot i_{Li}$, where $L_i>0$;
   \item[$\bullet$] $n_R$ resistors $v_{Ri}=R_i i_{R i}$; each resistor may be \emph{passive}, i.e., $R_i>0$, or \emph{active}, i.e., $R_i<0$;
   \item[$\bullet$] $n_{\Phi}$ ideal flux-controlled memristors $M_{\Phi i}$ with constitutive relations (CRs) $Q_{Mi}= \hat Q_{Mi}(\Phi_{Mi})$ \cite{Chua1971}, where
  $\Phi_{Mi}(t)=\int_{-\infty}^t v_{Mi}(\sigma)d\sigma$ is the flux (or voltage momentum) and
  $Q_{Mi}(t)=\int_{-\infty}^t i_{Mi}(\sigma)d\sigma$ is the charge (or current momentum). Moreover,
  $\hat Q_{Mi}: \R \to \R$ is the nonlinear memristor characteristic;
  \item[$\bullet$] $n_{Q}$ ideal charge-controlled memristors $M_{Qi}$ with CRs $\Phi_{Mi}= \hat \Phi_{Mi}(Q_{Mi})$, where
   $\hat \Phi_{Mi}: \R \to \R$ is the nonlinear memristor characteristic.
\end{itemize}

It is noticed that independent current or voltage sources are not explicitly considered.
As discussed in \cite{DiMarco2026BraytonTheor}, this is not restrictive, indeed those sources
are typically used to set initial conditions at $t=0$ for capacitors ($v_{c0i}=v_{Ci}(0)$), inductors
($i_{Li0}=i_{Li}(0)$) and memristors ($\Phi_{M0i}=\Phi_{Mi}(0)$ and $Q_{Mi0}=Q_{Mi}(0)$) but for $t \ge 0$ we may suppose that they vanish.

For the flux-controlled memristors, we suppose henceforth that the next assumptions hold.

\begin{assumption}
\label{assu:memp}
We have that:

1) $\hat Q_{Mi} \in C^1(\R)$ and $\hat Q_{Mi}(0)=0$;

2) $\hat Q_{Mi}'(\Phi_{Mi}) \ge 0$ for any $\Phi_{Mi}$;

3) there exist $G_{mi}>0$, $\tilde \Phi_{Mi} >0$ such that
\begin{equation}\label{Gmi}
    \frac{\hat Q_{Mi}(\Phi_{Mi})}{\Phi_{Mi}} \ge G_{mi}>0, \ \ |\Phi_{Mi}|> \tilde \Phi_{Mi}.
\end{equation}
\end{assumption}

A memristor implemented via
for instance a typical metal-insulator-metal (MIM) structure is a passive element since there are no internal sources of power. Passivity is equivalent to $\hat Q_{Mi}$ is a monotone non-decreasing function, which is guaranteed by condition 2) of Assumption\ \ref{assu:memp}. Dually, for a
a charge-controlled memristor, we assume the following.

\begin{assumption}
\label{assu:memc}
We have that:

1) $\hat \Phi_{Mi} \in C^1(\R)$ and $\hat \Phi_{Mi}(0)=0$;

2) $\hat \Phi_{Mi}'(Q_{Mi}) \ge 0$ for any $Q_{Mi}$;

3) there exist $R_{mi}>0$, $\tilde Q_{Mi} >0$ such that
\begin{equation}\label{Rmi}
    \frac{\hat \Phi_{Mi}(Q_{Mi})}{Q_{Mi}} \ge R_{mi}>0, \ \ |Q_{Mi}|> \tilde Q_{Mi}.
    \end{equation}
\end{assumption}


All ideal passive memristors considered in the literature satisfy these assumptions. This
is true for instance in the case of polynomial-type memristors as the cubic memristor
$Q_{Mi}=\hat Q_{Mi}(\Phi_{Mi})=a_1 \Phi_{Mi}+a_3 \Phi_{Mi}^3$, where $a_1 \ge 0$, $a_3>0$, or $C^1$ approximations
of piecewise-linear memristors $Q_{Mi}=\hat Q_{Mi}(\Phi_{Mi})=b \Phi_{Mi} +(1/2)(a-b)(|\Phi_{Mi}+\Phi_\sigma|-|\Phi_{Mi}-\Phi_\sigma|)$, where $a,b, \Phi_\sigma > 0$ \cite{Chua2015}. Also, the celebrated HP memristor \cite{Williams2008}, or a combination of them,
can be brought back to the considered model via a suitable transformation.

\section{SEs in the FCD and Mixed Potential}
\label{sect:FCD}

This section summarizes some main results obtained in \cite{DiMarco2026BraytonTheor} which are needed to study convergence towards EPs of an RLCM memristor circuit $\mathfrak{N}$. First, we recall
some basic facts about FCAM, a technique which is used throughout the analysis in the paper. Then,
we discuss the main assumptions
and procedure used in \cite{DiMarco2026BraytonTheor} to write the state equations (SEs) in the FCD of $\mathfrak{N}$, the concept of the mixed potential of $\mathfrak{N}$ and its relationships with the SE formulation.

\subsection{FCAM and Main Topological Assumptions}
\label{sect:FCAM}

FCAM is based on analyzing an RLCM circuit $\mathfrak{N}$ in the flux-charge domain (FCD) via: (a) the constitutive relations (CRs) of circuit elements
in terms of the incremental flux $ \varphi(t)=\int_{0}^t v(\sigma)d\sigma$ and incremental charge
$q(t)=\int_0^t i(\sigma)d\sigma$; (b)  Kirchhoff-flux-law (K$\varphi$L) and
Kirchhoff-charge-law (K$q$L) for incremental quantities. Henceforth, for simplicity, we omit the adjective incremental if there is no ambiguity.

It is shown in \cite[Sect.\ 2C]{DiMarco2026BraytonTheor} that, if we consider the analogies $\varphi \sim v$ and $ q \sim i$,
then an RLCM circuit $\mathfrak{N}$ in the FCD is the analogous of a nonlinear RLC circuit in the VCD. In particular, a capacitor (resp., an inductor) in the FCD is the analogous of a capacitor (resp., an inductor) in the VCD.
Moreover, a flux-controlled (resp., charge-controlled) memristor in the FCD is the equivalent
of a nonlinear initial-condition dependent voltage-controlled (resp., charge-controlled) resistor in the VCD.


We suppose as in \cite{DiMarco2026BraytonTheor} that the following hypotheses are enforced.

\begin{figure}[ht]
\begin{center}
\begin{subfigure}{0.8\columnwidth}
  \includegraphics[width=\linewidth]{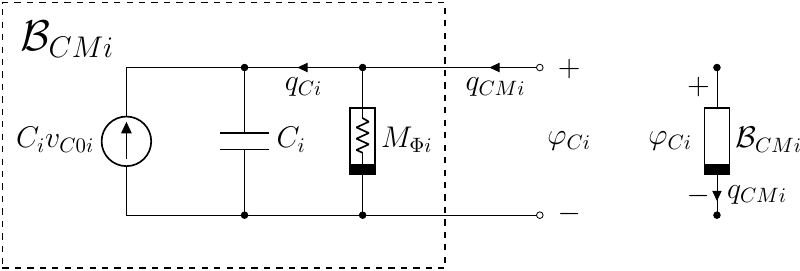}
  \caption{\small \small }
\end{subfigure}\\
\begin{subfigure}{0.8\columnwidth}
  \includegraphics[width=\linewidth]{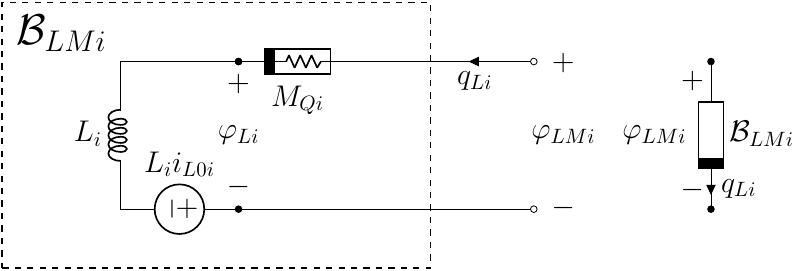}
  \caption{\small \small }
\end{subfigure}
\caption{\small (a) Composite element $\cB_{CMi}$ given by
a capacitor in parallel to a flux-controlled memristor and (b) composite
element $\cB_{LMi}$ given by an inductor in series with
a charge-controlled memristor.}
\label{fig:composite}
\end{center}
\end{figure}

\begin{assumption}
\label{assu:case 2 RLCM 1}
Each flux-controlled memristor is in parallel to a capacitor and each charge-controlled memristor is in series with an inductor.
\end{assumption}

On the basis of Assumption\ \ref{assu:case 2 RLCM 1}, we can suppose henceforth that $\mathfrak{N}$ contains the following two-terminal elements:

(i) $n_\Phi$ two-terminal elements $\cB_{CMi}$, $i=1,\dots,n_\Phi$, where $0 \leq n_\phi \le n_C$,
constituted by a capacitor
in parallel to a flux-controlled memristor (cf.\ Fig.\ \ref{fig:composite}(a)) satisfying
the CR
$$
q_{CMi}=C_i \dot \varphi_{Ci}+\hat Q_{Mi}(\varphi_{Ci}+\Phi_{M0i})-\hat Q_{Mi}(\Phi_{M0i})-C_iv_{C0i}.
$$

(ii) $n_C-n_\Phi$ elements $\cB_{Ci}$,
$i=n_{\Phi}+1,\dots,n_C$,
given by a capacitor without a memristor in parallel obeying the CR
\begin{equation}\label{Ci FCD}
    C_i \dot \varphi_{Ci}= q_{Ci}+C_iv_{C0i}
\end{equation}

(iii) $n_Q$ two-terminal elements termed $\cB_{LMi}$, $i=1,\dots,n_Q$, where $0 \leq n_Q \le n_L$, given by an inductor in series with a charge-controlled memristor (cf.\ Fig.\ \ref{fig:composite}(b))
satisfying
$$
\varphi_{LMi}=L_i \dot q_{Li}+\hat \Phi_{Mi}(q_{Li}+Q_{M0i})-\hat \Phi_{Mi}(Q_{M0i})-L_ii_{L0i}
$$

(iv) $n_L-n_Q$ elements $\cB_{Li}$,
$i=n_Q+1,\dots,n_L$, given by an inductor without a memristor
in series with CR
\begin{equation}\label{Li FCD}
    L_i \dot q_{Li}= \varphi_{Li}+L_i i_{L0i}.
\end{equation}

(v) $n_R$ resistors with CR $\varphi_{Ri}=R_i q_{R i}$.

It is worth to note that the CRs of circuit elements depend upon the initial conditions
$v_{C0i},i_{L0i},\Phi_{M0i}$ and $Q_{M0i}$ in the VCD.

Consider $\mathfrak{N}$, an associated digraph $\cG$, a tree $\cT$ and a
corresponding co-tree $\cL$ of $\cG$.

\begin{assumption}
\label{assu:case 2 RLCM 1b}
There is a tree $\cT$ and a corresponding co-tree $\cL$ such that:

1) all elements $\cB_{CM}$ and $\cB_C$ belong to $\cT$ and the remaining elements of $\cT$ are resistors;

2) all elements $\cB_{LM}$ and $\cB_L$ belong to $\cL$ and the remaining elements of $\cL$ are resistors;

3) each resistor in $\cL$ forms a loop exclusively with elements $\cB_{CM}$ and/or $\cB_C$.
\end{assumption}

\begin{remark}
Assumption\ \ref{assu:case 2 RLCM 1b}
guarantees that the capacitor fluxes $(\varphi_{C1},\dots, \varphi_{CnC})^\top \in \R^{n_C}$
and the inductor charges $(q_{L1},\dots, q_{LnL})^\top \in \R^{n_L}$ are a complete set of variables for
$\mathfrak{N}$ in the FCD (cf.\ \cite[Def.\ 1]{DiMarco2026BraytonTheor}).
\end{remark}

\subsection{SEs in the FCD}
\label{sect:SEs}

Consider an RLCM circuit $\mathfrak{N}$ and an associated digraph
$\cG$.
Suppose without loss of generality that $\cG$ is connected and
there are $b$ branches and $n$ nodes.
Let $\cT$ be a tree and $\cL$ the corresponding co-tree of $\cG$ chosen in a way that
Assumption\ \ref{assu:case 2 RLCM 1b} is satisfied.
Denote by $\varphi_\cT \in \R^{n-1}$, $q_\cT \in \R^{n-1}$ the tree fluxes and charges, respectively
and by $\varphi_\cL \in \R^{b-n+1}$, $q_\cL \in \R^{b-n+1}$ the co-tree fluxes and charges, respectively.
K$q$Ls for the fundamental cut-sets can be written in matrix-vector notation as \cite{chua-book}
\begin{equation}\label{kQL}
      q_\cT + A  q_\cL = 0
\end{equation}
where $A \in \R^{(n-1) \times (b-n+1)}$ is a topological matrix with elements $\{ -1,0,1 \}$. Decompose matrix $A$ as follows
\begin{equation}\label{P partition}
A=    \begin{pmatrix}
      A_{CL} & A_{CR} \\
      A_{RL} & A_{RR} \\
    \end{pmatrix}
\end{equation}
where $A_{CL}$ and $A_{CR}$ account for the contribution to the charges of elements $\cB_{CM}$ and $\cB_C$, belonging to $\cT$, due to the elements of $\cL$ (inductors and resistors, respectively). Similarly, $A_{RL}$ and $A_{RR}$ account for the contribution to the charges of the resistors belonging to $\cT$ due to the elements of $\cL$. Under Assumption\ \ref{assu:case 2 RLCM 1b}, it can be shown that $A_{RR}=0$
\cite[Prop.\ 2]{DiMarco2026BraytonTheor}.

Now, define matrix $H \in \R^{(n_C+n_L) \times (n_C+n_L)}$ as
\begin{equation}\label{H cas2}
    H=\begin{pmatrix}
        H_{\alpha \alpha} & H_{\alpha \beta} \\
        H_{\beta \alpha} & H_{\beta \beta} \\
      \end{pmatrix}=
      \begin{pmatrix}
        A_{CR} (R_{\cL})^{-1} A_{CR}^\top & A_{CL} \\
        -A_{CL}^\top & A_{RL}^\top R_{\cT} A_{RL} \\
      \end{pmatrix}
\end{equation}
where $R_{\cL}$ (resp., $R_{\cT}$) is a diagonal matrix whose entries are the resistances of the resistors belonging to $\cL$ (resp., $\cT$).

Let us introduce these notations:
\begin{itemize}
\item[$\bullet$] $\varphi_C=(\varphi_{Ci})_{i=1,\dots,n_C} \in \R^{n_C}$
\item[$\bullet$] $\varphi_{CM}=(\varphi_{Ci})_{i=1, \dots,n_\Phi} \in \R^{n_\Phi}$
\item[$\bullet$] $\hat Q_{MC}(\cdot)=(\hat{Q}_{Mi}(\cdot))_{i=1, \dots,n_\Phi}: \R^{n_\Phi} \to \R^{n_\Phi}$
\item[$\bullet$] $\Phi_{M0}=(\Phi_{M0i})_{i=1, \dots,n_\Phi} \in \R^{n_\Phi}$
\item[$\bullet$] $v_{C0}=(v_{C0i})_{i=1,\dots,n_C} \in \R^{n_C}$
\item[$\bullet$] $C={\rm diag }(C_1,\dots,C_{nC}) \in \R^{n_C \times n_C}$
\item[$\bullet$] $q_L=(q_{Li})_{i=1,\dots,n_L} \in \R^{n_L}$
\item[$\bullet$]  $q_{LM}=(q_{Li})_{i=1, \dots,n_Q} \in \R^{n_Q}$
\item[$\bullet$] $\hat \Phi_{ML}(\cdot)=(\hat \Phi_{Mi}(\cdot))_{i=1, \dots,n_Q}: \R^{n_Q} \to \R^{n_Q}$
\item[$\bullet$] $Q_{M0}=(Q_{M0i})_{i=1, \dots,n_Q} \in \R^{n_Q}$
\item[$\bullet$] $i_{L0}=(i_{L0i})_{i=1,\dots,n_L} \in \R^{n_L}$
\item[$\bullet$] $L={\rm diag }(L_1,\dots,L_{nL}) \in \R^{n_L \times n_L}$.
\end{itemize}

Under Assumptions\ \ref{assu:case 2 RLCM 1}, \ref{assu:case 2 RLCM 1b}, the SEs describing the behavior of $\mathfrak{N}$ in the FCD
are obtained as {\cite{DiMarco2026BraytonTheor}
\begin{align}\label{SEsFCDRLCM}
\begin{split}
    \begin{pmatrix}
      C \dot \varphi_C \\
      L \dot q_L \\
    \end{pmatrix}
=&
-H
\begin{pmatrix}
       \varphi_C \\
       q_L \\
    \end{pmatrix}\\
 &
        -\begin{pmatrix}
       \hat Q_{MC}( \varphi_{CM}+\Phi_{M0})-
       \hat Q_{MC}(\Phi_{M0})\\
       0 \\
       \hat \Phi_{ML}( q_{LM}+Q_{M0})-
       \hat \Phi_{ML}(Q_{M0})\\
       0 \\
     \end{pmatrix}\\
     &+
     \begin{pmatrix}
      C v_{C0} \\
      L i_{L0} \\
    \end{pmatrix}.
\end{split}
\end{align}
It is worth to notice that the
vector field defining the SEs depends upon the initial conditions
\begin{equation}\label{w0}
    w_0=(v_{C0}, i_{L0}, \Phi_{M0},Q_{M0}) \in \R^{n_C+n_L+n_\Phi+n_Q}
\end{equation}
of the state variables in the VCD.


The SEs of $\mathfrak{N}$ in the VCD can be obtained by differentiating (\ref{SEsFCDRLCM}),
namely,
\begin{align}\label{SEs VCD RLCM}
\begin{split}
    \begin{pmatrix}
      C \dot v_{CM} \\
      L \dot i_{LM} \\
    \end{pmatrix}
=&
-H
\begin{pmatrix}
       v_C \\
       i_L \\
    \end{pmatrix}
        -\begin{pmatrix}
       \hat Q'_{MC}( \Phi_{M})v_C\\
       0 \\
       \hat \Phi'_{ML}( Q_{M})i_L\\
       0 \\
     \end{pmatrix}
     \\
     \dot \Phi_{M}=&v_{CM}\\
     \dot Q_{M}=&i_{LM}.
\end{split}
\end{align}

\subsection{Mixed Potential}
\label{sect:mixed}

It is shown in \cite[Th.\ 1]{DiMarco2026BraytonTheor} that, under Assumptions \ref{assu:case 2 RLCM 1}, \ref{assu:case 2 RLCM 1b},
it is possible to define the following mixed potential
$\cP(\varphi_C,q_L;w_0):
\R^{n_C+n_L} \to \R$ for an RLCM circuit $\mathfrak{N}$
\begin{align}\label{mixed n1}
\begin{split}
    &\cP(\varphi_C,q_L;w_0)=
    \frac{1}{2}  \varphi_C^\top H_{\alpha \alpha}  \varphi_C
    -\frac{1}{2}  q_L^\top H_{\beta \beta}  q_L
    +  \varphi_C^\top H_{\alpha \beta}  q_L\\
    &+\sum_{i=1}^{n_{\Phi}} \int_0^{ \varphi_{\alpha i}}(\hat Q_{Mi}(\rho+\Phi_{M0i})-
       \hat Q_{Mi}(\Phi_{M0i}))d\rho \\
    &-\sum_{j=1}^{n_{Q}} \int_0^{ q_{\beta j}}
(\hat \Phi_{Mj}(\rho+Q_{M0j})-
       \hat \Phi_{Mj}(Q_{M0j}))d\rho \\
    &- \varphi_C^\top C v_{C0}
+  q_L^\top L i_{L0}.
\end{split}
\end{align}
Note that the mixed potential depends upon $w_0$, i.e., the initial conditions for the
state variables in the VCD (cf.\ (\ref{w0})).

One key property is that, via the mixed potential, the SEs
(\ref{SEsFCDRLCM}) of $\mathfrak{N}$ in the FCD can be compactly
written in the form
\begin{equation}\label{SEs FCD RLCM mixed}
     \begin{pmatrix}
      C \dot \varphi_C \\
      L \dot q_L \\
    \end{pmatrix}
=    \begin{pmatrix}
      -\frac{\partial}{\partial  \varphi_C} \cP( \varphi_C, q_L;w_0) \\
      \frac{\partial}{\partial  q_L} \cP( \varphi_C, q_L;w_0) \\
    \end{pmatrix}.
\end{equation}

\begin{remark}
\label{rem:recipro}
The existence of a mixed potential for $\mathfrak{N}$ is related to the reciprocity
of a multi-port where the elements $\cB_{CM}$, $\cB_C$, $\cB_{LM}$ and $\cB_L$
are connected (cf.\ \cite{DiMarco2026BraytonTheor}). This in turn is equivalent to the property
that, for structural reasons, matrix $H$ in (\ref{H cas2}) has a special
structure where $H_{\alpha \alpha}$ and $H_{\beta \beta}$ are symmetric matrices,
whereas we have $H_{\beta \alpha}=-H_{\alpha \beta}^\top$.
\end{remark}

\begin{remark}
The explicit expression of the mixed potential in (\ref{mixed n1}) will
play a crucial role to prove Lyapunov-like
results on convergence for RLCM circuits in Sect.\ \ref{sect:convRLCM}.
\end{remark}

\section{Relevant Subclasses of RLCM}
\label{sect:subclasses}

In this section, we single out three relevant subclasses of RLCM circuits for which
we will be able to prove systematic convergence results (Sect.\ \ref{sect:convRLCM}).

\subsection{Subclass RLCM$a$}
\label{sect:subc RLCMa}

Suppose that each capacitor has in parallel a
flux-controlled memristor and there is no other memristor. We term
this subclass RLCM$a$.
In this case, we have $n_{\Phi}=n_C$, $\varphi_{CM}= \varphi_C$ and $\Phi_M=\varphi_C+\Phi_{M0}$.
The SEs of $\mathfrak{N}$ in the FCD are
\begin{align}\label{SEs FCD RLCM case 1}
\begin{split}
      \begin{pmatrix}
      C \dot \varphi_C \\
      L \dot q_L \\
    \end{pmatrix}
=& -H
\begin{pmatrix}
       \varphi_C \\
       q_L \\
    \end{pmatrix}\\
    &-\begin{pmatrix}
       \hat Q_{MC}( \varphi_C+\Phi_{M0})-
       \hat Q_{MC}(\Phi_{M0})\\
       0 \\
    \end{pmatrix}\\
     &+
     \begin{pmatrix}
      C v_{C0} \\
      L i_{L0} \\
    \end{pmatrix}
\end{split}
\end{align}
where $H$ is as in (\ref{H cas2}).

Suppose that sub-matrix $H_{\beta \beta}$ is nonsingular and consider
function $\cQ:\R^{2n_C+n_L}
    \to \R^{n_C}$ given by
\begin{align}\label{Q(t) case 1}
\begin{split}
   \cQ=&\cQ(v_C,i_L,\Phi_M)\\
   =&(H_{\alpha \alpha}-H_{\alpha \beta}H_{\beta \beta}^{-1}(-H_{\alpha \beta}^\top))\Phi_M
    +\hat Q_{MC}(\Phi_M)\\
    &+Cv_C-H_{\alpha \beta} H_{\beta \beta}^{-1} Li_L.
    \end{split}
\end{align}
The change of variables $x=\Phi_M=\varphi_C+\Phi_{M0}$,
$y=q_L-H_{\beta \beta}^{-1}(-H_{\alpha \beta}^{\top} \Phi_{M0}+Li_{L0})$ yields
the following form for the SEs in the FCD, which will be useful for the analysis that follows
\begin{align}\label{SEs FCD RLCM case 1 xy}
\begin{split}
    \begin{pmatrix}
      C \dot x \\
      L \dot y \\
    \end{pmatrix}
=& -H
\begin{pmatrix}
       x \\
       y \\
    \end{pmatrix}
        -\begin{pmatrix}
       \hat Q_{MC}(x)\\
       0 \\
    \end{pmatrix}
     +
     \begin{pmatrix}
      Q_0 \\
      0 \\
    \end{pmatrix}
\end{split}
\end{align}
where
\begin{align}\label{Q0 case 1}
\begin{split}
    Q_0 =\cQ(0)=&(H_{\alpha \alpha}-H_{\alpha \beta}H_{\beta \beta}^{-1}(-H_{\alpha \beta}^\top))\Phi_{M0}\\
    &+\hat Q_{MC}(\Phi_{M0})+Cv_{C0}-H_{\alpha \beta} H_{\beta \beta}^{-1} Li_{L0}
\end{split}
\end{align}
is a constant term depending upon the initial conditions $w_0$ for the state variables in the VCD.

Differentiating (\ref{SEs FCD RLCM case 1}), or (\ref{SEs FCD RLCM case 1 xy}), we obtain
that the SEs of $\mathfrak{N}$ in the VCD are given by
\begin{align}\label{SEs VCD RLCM case 1}
\begin{split}
        \begin{pmatrix}
      C \dot v_C \\
      L \dot i_L \\
    \end{pmatrix}
=&
-H
\begin{pmatrix}
       v_C \\
       i_L \\
    \end{pmatrix}
        -\begin{pmatrix}
       \hat Q'_{MC}( \Phi_M)v_C\\
       0 \\
       \end{pmatrix}
     \\
     \dot  \Phi_M=&v_C.
\end{split}
\end{align}

A relevant dynamical property is that the SEs in the VCD
admit \emph{invariants of motion}, i.e., functions of the state variables that are constant along the
solutions. Indeed, it is immediate to check that we have
\begin{align}
\begin{split}
\dot \cQ = 0
\end{split}
\end{align}
i.e., $\cQ$ is constant along the solutions to the SEs.
This means that the SEs have $n_C$ invariants of motion given by
the components of function $\cQ$.
As a consequence, the dynamics evolves on the invariant manifolds
\begin{align}
\label{manifold RLCMa}
\cM_{\mathrm{RLCM}a}(Q_0)= \{ (v_C,i_L,\Phi_M) \in \R^{2n_C+n_L}: \cQ=Q_0 \}
\end{align}
where $Q_0 \in \R^{n_C}$ is termed \emph{manifold index}.

\subsection{Subclass RLCM$b$}
\label{sect:subc RLCMb}

Consider now the subclass RLMC$b$ where only $n_{\Phi}<n_C$ capacitors have in parallel a flux-controlled memristor and there is no other memristor. Without loss of generality, we can assume that those are the first $n_{\Phi}$ capacitors, and introduce the partition $\varphi_C=(\varphi_{CM}, \varphi_{CN})$, where $\varphi_{CM}=(\varphi_{Ci})_{i=1,\dots, n_\Phi}$, $\varphi_{CN}=(\varphi_{Ci})_{i=n_\Phi+1,\dots, n_C}$. We have $\Phi_M=\varphi_{CM}+\Phi_{M0}$. Define the state sub-vector $Z=(\varphi_{CN}, q_L)$ and matrices $C_M={\rm diag }(C_{1}, \dots,C_{n\Phi})$, $M_{CL}={\rm diag }(C_{n\Phi+1}, \dots,C_{nC},$ $ L_1, \dots, L_{nL})$.
Then, the SEs of $\mathfrak{N}$ in the FCD are given by
\begin{align}\label{SEs FCD RLCM case NM<NC}
\begin{split}
        \begin{pmatrix}
      C_M \dot \varphi_{CM} \\
      M_{CL} \dot Z \\
    \end{pmatrix}
=& -H
\begin{pmatrix}
       \varphi_{CM} \\
       Z \\
    \end{pmatrix} \\
       &-\begin{pmatrix}
       \hat Q_{MC}( \varphi_{CM}+\Phi_{M0})-
       \hat Q_{MC}(\Phi_{M0})\\
       0 \\
    \end{pmatrix}\\
    &+
     \begin{pmatrix}
      C_M v_{CM0} \\
      M_{CL} z_{0} \\
    \end{pmatrix}
\end{split}
\end{align}
where $z_0=(v_{CN0},i_{L0})$ and
$$
H=\begin{pmatrix}
        H_{11} & H_{12} \\
        H_{21} & H_{22} \\
      \end{pmatrix}
$$
is the partition induced by the state vector $(\varphi_{CM}, Z)$ on matrix $H$.
Since (\ref{SEs FCD RLCM case NM<NC}) has the same form of (\ref{SEs FCD RLCM case 1}), it is possible to derive, under the assumption that submatrix $H_{22}$ is nonsingular, the expressions of invariants (cf.~(\ref{Q(t) case 1})), the form for the SEs in the FCD (cf.\ (\ref{SEs FCD RLCM case 1 xy})), the SEs in the VCD (cf.~(\ref{SEs VCD RLCM case 1})), and the $n_{CM}$-dimensional invariant manifolds (cf.~(\ref{Q(t) case 1}) and (\ref{manifold RLCMa})), by simply replacing in those equations quantities
$$
(\varphi_C, q_L, C, L, H_{\alpha \alpha}, H_{\beta \beta}, H_{\alpha \beta}, -H_{\alpha, \beta}^\top, v_{C0}, i_{L0})
$$
with
 $$
 (\varphi_{CM}, Z, C_M, M, H_{11}, H_{22}, H_{12}, H_{21}, v_{CM0},z_0).
 $$

\subsection{Subclass RLMC$c$}
\label{sect:subc RLCMc}

Finally, consider the case where each capacitor has in parallel a flux-controlled
memristor and each inductor has in series a charge-controlled memristor.
For this subclass, which is termed RLMC$c$,
we have $n_{\Phi}=n_C$ and $n_{Q}=n_L$, while $\Phi_{MC}=\varphi_{CM}+\Phi_{MC0}$
and $Q_{ML}=q_{LM}+Q_{ML0}$.

It is shown in \cite{DiMarco2026BraytonTheor} that the for subclass RLCMc the SEs (\ref{SEsFCDRLCM})
boil down to
\begin{align}\label{SEs FCD RLCM special}
\begin{split}
    \begin{pmatrix}
      C \dot \Phi_{MC}  \\
      L \dot Q_{ML} \\
    \end{pmatrix}
=&
-H
\begin{pmatrix}
       \Phi_{MC} \\
       Q_{ML} \\
    \end{pmatrix}
        -\begin{pmatrix}
       \hat Q_{MC}( \Phi_{MC} )\\
       \hat \Phi_{ML}( Q_{ML})\\
     \end{pmatrix}
    +
\begin{pmatrix}
       Q_0\\
       \Phi_0 \\
    \end{pmatrix}.
\end{split}
\end{align}
Here, we have let
\begin{align}
\begin{split}
\label{Q0 case2}
Q_0=\cQ(0)=& Cv_{C0}+H_{\alpha \alpha} \Phi_{MC0}+H_{\alpha \beta} Q_{ML0}\\
    &+\hat Q_{MC}(\Phi_{MC0})
\end{split}
\end{align}
and
\begin{align}
\begin{split}
\label{Phi0 case2}
\Phi_0=\Psi(0)=& Li_{L0}-H_{\alpha \beta}^\top \Phi_{MC0}+H_{\beta \beta} Q_{ML0}\\
    &+\hat \Phi_{ML}(Q_{ML0})
\end{split}
\end{align}
where $\cQ:\R^{2(n_C+n_L)} \to \R^{n_C}$ and $\Psi:\R^{2(n_C+n_L)} \to \R^{n_C}$
are defined as
\begin{align}
\begin{split}
\label{Q(t) case2}
    \cQ =&\cQ(v_C,i_L,\Phi_{MC},Q_{ML})\\
    =& Cv_C+H_{\alpha \alpha} \Phi_{MC}+H_{\alpha \beta} Q_{ML}\\
    &+\hat Q_{MC}(\Phi_{MC})
\end{split}
\end{align}
and
\begin{align}
\begin{split}
\label{Phi(t) case2}
    \Psi =&\Psi(v_C,i_L,\Phi_{MC},Q_{ML})\\
    =& Li_L-H_{\alpha \beta}^\top \Phi_{MC}+H_{\beta \beta} Q_{ML}\\
    &+\hat \Phi_{ML}(Q_{ML}).
\end{split}
\end{align}
Again, $Q_0$ and $\Phi_0$ are constant terms depending upon the initial conditions $w_0$ for the state variables in the VCD.

By differentiation, the SEs in the VCD are given by
\begin{align}\label{SEs VCD RLCM special}
\begin{split}
      \begin{pmatrix}
      C \dot v_C \\
      L \dot i_L \\
    \end{pmatrix}
=&
-H
\begin{pmatrix}
       v_C \\
       i_L \\
    \end{pmatrix}
        -\begin{pmatrix}
       \hat Q'_{MC}(\Phi_{MC})v_C \\
       \hat \Phi'_{ML}(Q_{ML})i_L \\
     \end{pmatrix}
     \\
     \dot \Phi_{MC}=&v_C\\
     \dot Q_{ML}=&i_L.
\end{split}
\end{align}
Since
$$
\dot \cQ=0; \ \ \dot \Psi=0
$$
it follows that the SEs (\ref{SEs VCD RLCM special}) admit $n_C+n_L$ invariants of motion given by
the components of functions
$\cQ$ and $\Psi$.
Therefore, the dynamics evolves on the invariant manifolds
\begin{align}
\begin{split}
\cM_{\mathrm{RLCM}c}(Q_0,\Phi_0)=& \{ (v_C,q_L,\Phi_{MC},Q_{ML}) \in \R^{2(n_C+n_L)}:\\
\ \cQ=Q_0, \Psi=\Phi_0 \}
\end{split}
\end{align}
where $(Q_0,\Phi_0) \in \R^{n_C+n_L}$ is the manifold index.

\section{Convergence Results for Class RCM}
\label{sect:RCM conv}

The three subclasses defined in Sect.\ \ref{sect:subclasses} are characterized by the simultaneous
presence of capacitors and inductors. Their convergence properties will be studied in detail
Sect.\ \ref{sect:convRLCM}. Here, we address convergence in the special case where only one type of
reactive elements is present, i.e., capacitors. An analogous treatment holds when
the only reactive elements are inductors (details are omitted). The goal is to show is that in this case
convergence is guaranteed for any capacitors and circuit parameters under some mild technical assumptions. This result will be
compared in Sect.\ \ref{sect:disc} with the convergence results for subclasses RLCM$a$, $b$ and $c$, which instead
require, roughly speaking, additional assumptions on the relative weight of
the inductors and capacitors.

Under Assumption\ \ref{assu:case 2 RLCM 1}, matrix $H$ in (\ref{H cas2}) reduces to the symmetric
conductance matrix $G \in \R^{n_C \times n_C}$ given by
\begin{equation}\label{matr G}
    G=\begin{pmatrix}
        G_{\Phi \Phi} & G_{\Phi C} \\
        G_{\Phi C}^\top & G_{C C} \\
      \end{pmatrix}=  A_{CR} (R_\cL)^{-1} A_{CR}^\top
\end{equation}
where $G_{\Phi \Phi} \in \R^{n_\Phi \times n_\Phi}$, $G_{\Phi C}\in \R^{  n_\Phi \times (n_C-n_\Phi)}$ and $G_{C C}\in \R^{(n_C-n_\Phi) \times (n_C-n_\Phi)}$.
On this basis, we obtain the SEs describing $\mathfrak{N}$ in the FCD
\begin{align}\label{SEs RCM FCD}
\begin{split}
    C \dot \varphi_C
= &-G  \varphi_C \\
        &-\begin{pmatrix}
       \hat Q_{MC}( \varphi_{CM}+\Phi_{M0})-
       \hat Q_{MC}(\Phi_{M0})\\
       0 \\
     \end{pmatrix}+C v_{C0}.
     \end{split}
\end{align}

The main result on convergence in the FCD is as follows.

\begin{theorem}
\label{th:conv bounded_RCM}
Suppose that Assumptions\ \ref{assu:memp}-\ref{assu:case 2 RLCM 1b} are satisfied.
Then, any bounded solution to (\ref{SEs RCM FCD}) converges to the
set $E$ of EPs
and it converges to a singleton $\bar \varphi_C \in E$ if the EPs are isolated.
\end{theorem}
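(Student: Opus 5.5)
The plan is to treat (\ref{SEs RCM FCD}) as a gradient system and apply LaSalle's invariance principle. When there are no inductors, the mixed potential (\ref{mixed n1}) reduces to the scalar function
\begin{align*}
\cP(\varphi_C;w_0)=&\frac{1}{2}\varphi_C^\top G\varphi_C
+\sum_{i=1}^{n_\Phi}\int_0^{\varphi_{Ci}}\bigl(\hat Q_{Mi}(\rho+\Phi_{M0i})-\hat Q_{Mi}(\Phi_{M0i})\bigr)d\rho\\
&-\varphi_C^\top C v_{C0}.
\end{align*}
By (\ref{SEs FCD RLCM mixed}), this satisfies $C\dot\varphi_C=-\partial\cP/\partial\varphi_C$. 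The function $\cP$ is $C^1$ (indeed $C^2$, by Assumption~\ref{assu:memp}(1)), and $G$ is symmetric by (\ref{matr G}). Along any solution we then have
\[
\dot\cP=\Bigl(\frac{\partial\cP}{\partial\varphi_C}\Bigr)^\top\dot\varphi_C=-\dot\varphi_C^\top C\dot\varphi_C\le 0,
\]
with equality if and only if $\dot\varphi_C=0$, because $C$ is diagonal with positive entries. So $\cP$ is nonincreasing along trajectories, and $\dot\cP=0$ holds exactly at the equilibrium points of (\ref{SEs RCM FCD}).

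Next, take a bounded solution $\varphi_C(t)$ with $t\ge0$. Its $\omega$-limit set $\Omega$ is nonempty, compact, connected and invariant, and $\varphi_C(t)\to\Omega$. Since $\cP$ is continuous and $\cP(\varphi_C(t))$ is nonincreasing and bounded below on the bounded trajectory, it tends to a limit $c$. Hence $\cP\equiv c$ on $\Omega$. By invariance, any solution starting in $\Omega$ stays in $\Omega$, so $\dot\cP=0$ along it. This forces $\dot\varphi_C=0$, i.e.\ $\Omega\subset E$, which is the first claim. Note that boundedness is assumed, so we do not need radial unboundedness of $\cP$ (this is where Assumption~\ref{assu:memp}(3) would otherwise enter). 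Also, $G$ may be indefinite because active resistors are allowed, and this causes no trouble since only monotonicity of $\cP$ along solutions is used.

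For the singleton statement: $\Omega$ is connected and contained in $E$. If the equilibrium points are isolated, $E\cap K$ is finite for any compact $K$ containing $\Omega$. A connected subset of a discrete set is a single point, so $\Omega=\{\bar\varphi_C\}$. Because $\varphi_C(t)$ approaches $\Omega$, it follows that $\varphi_C(t)\to\bar\varphi_C$.

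The main obstacle is really only bookkeeping: verifying that the gradient identity (\ref{SEs FCD RLCM mixed}) specializes correctly when $n_L=0$. This requires that $H$ reduces to the symmetric $G$, which holds by Remark~\ref{rem:recipro} and the structure $A_{CR}R_\cL^{-1}A_{CR}^\top$. Once that identity is in hand, LaSalle's principle gives the result directly.
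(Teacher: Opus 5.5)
Your proposal is correct and follows essentially the same route as the paper. The paper also writes (\ref{SEs RCM FCD}) as the gradient system $C\dot\varphi_C=-\nabla W(\varphi_C)$, with $W$ coinciding with the mixed potential, and then invokes standard convergence results for gradient systems; your LaSalle and connected-$\omega$-limit-set argument simply spells those results out.
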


\emph{Proof.} Since $G$ is symmetric, the SEs (\ref{SEs RCM FCD}) can be written as the
gradient system $C \dot\varphi_C=-\nabla W( \varphi_C)$, where function $W: \R^{n_C} \to \R$
is given by
\begin{align}
\begin{split}
\label{E}
    W( \varphi_C) =& \frac{1}{2} \varphi_C^\top G  \varphi_C
    + \sum_{i=1}^{n_{\Phi}} \int_0^{ \varphi_{CMi}}(\hat Q_{MCi}(\rho+\varphi_{MC0i})\\
    &-        \hat Q_{MCi}(\varphi_{MC0i}))d\rho- \varphi_C^\top C v_{C0}.
\end{split}
\end{align}
Then, $W$ is strictly decreasing along non-stationary solutions to (\ref{SEs RCM FCD})
and the theorem follows from standard results on convergence of gradient systems \cite{GVK022705937}. \qed

We refrain from explicitly giving conditions ensuring boundedness of solutions to (\ref{SEs RCM FCD}).
Conditions of this kind can be obtained as a special case of those found
in Sect.\ \ref{sect:convRLCM} in the more general situations where also inductors are
present.

Differentiating (\ref{SEs RCM FCD}), we obtain the SEs of $\mathfrak{N}$ in the VCD
\begin{align}\label{SEs VCD RCM}
\begin{split}
    C \dot v_C
=&
-G v_C
-\begin{pmatrix}
       \hat Q'_{MC}(\Phi_{M})v_{CM}\\
       0 \\
     \end{pmatrix}
     \\
     \dot \Phi_{M}=&v_{CM}.
\end{split}
\end{align}
We denote by $\cE=\{(\bar v_C, \bar \varphi_C) \in \R^{2n_C}: \bar v_C=0, \bar \varphi_C \in \R^{n_C} \}$ the set of EPs of (\ref{SEs VCD RCM}).
Note that at any EP the capacitor voltages vanish.



The next result addresses convergence of solutions to the SEs in the VCD.

\begin{theorem}
\label{th:conv VCD RCM}
Suppose that Assumptions\ \ref{assu:memp}-\ref{assu:case 2 RLCM 1b} are satisfied.
Then, any bounded solution to the SEs in the VCD (\ref{SEs VCD RCM})
converges to the set $\cE$ of EPs,
moreover, $v_C$ converges to 0. If, for given initial conditions $v_{C0}$ and
$\Phi_{M0}$,
the EPs of the corresponding SEs (\ref{SEs RCM FCD})
in the FCD are isolated, then the solution to (\ref{SEs VCD RCM}) converges to a
singleton $(0,\bar \varphi_C) \in \cE$.
\end{theorem}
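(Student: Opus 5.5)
The plan is to reduce the VCD statement to Theorem~\ref{th:conv bounded_RCM} via the correspondence between solutions in the two domains. Take a bounded solution $(v_C(t),\Phi_M(t))$ of (\ref{SEs VCD RCM}) with initial condition $(v_{C0},\Phi_{M0})$, and set $\varphi_C(t)=\int_0^t v_C(\sigma)d\sigma$. Here $\varphi_{CM}(t)=\Phi_M(t)-\Phi_{M0}$ is bounded by hypothesis. First I will check that $\varphi_C$ solves the FCD equations (\ref{SEs RCM FCD}) associated with these initial conditions. Integrating (\ref{SEs VCD RCM}) from $0$ to $t$ gives
\begin{equation*}
C v_C(t)-Cv_{C0}=-G\varphi_C(t)-\begin{pmatrix}\hat Q_{MC}(\Phi_M(t))-\hat Q_{MC}(\Phi_{M0})\\ 0\end{pmatrix},
\end{equation*}
which, since $\Phi_M=\varphi_{CM}+\Phi_{M0}$, is exactly (\ref{SEs RCM FCD}).

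The main obstacle is boundedness of the full vector $\varphi_C(t)$. The hypothesis bounds $v_C$ and $\Phi_M$, hence $\varphi_{CM}$, but not the components $\varphi_{CN}$ of capacitors without a memristor, which are integrals of bounded voltages. I would handle this with the Lyapunov function instead of Theorem~\ref{th:conv bounded_RCM} directly. Along the FCD solution,
\begin{equation*}
\frac{d}{dt}W(\varphi_C(t))=-\dot\varphi_C^\top C\dot\varphi_C=-v_C^\top C v_C\le 0 .
\end{equation*}
Moreover, the identity above shows that $\nabla W(\varphi_C(t))=-Cv_C(t)$ is bounded. So I will try to show that $W(\varphi_C(t))$ is bounded below along the trajectory, so that $\int_0^\infty v_C^\top Cv_C\,dt<\infty$.

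For the lower bound I would use the integrated relation once more. Componentwise, the $CN$ rows give $G_{\Phi C}^\top\varphi_{CM}+G_{CC}\varphi_{CN}=C_N(v_{CN0}-v_{CN})$, which is bounded. Substituting this into the quadratic form shows that $W$ is a bounded quantity plus terms linear in the bounded quantities, and hence bounded. In the simplest case $G_{CC}$ is nonsingular, and then $\varphi_{CN}$ itself is bounded. In that case I can invoke Theorem~\ref{th:conv bounded_RCM} directly, and I will do so whenever possible. In general I use $\dot v_C$ bounded, read off from (\ref{SEs VCD RCM}) with $\hat Q'_{MC}$ continuous on bounded sets. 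Barbalat's lemma then gives $v_C(t)\to 0$.

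It remains to show convergence to $\cE$ and, under the isolation hypothesis, to a singleton. Since $\cE=\{v_C=0\}$ and $v_C\to0$, the distance to $\cE$ tends to zero, so convergence to $\cE$ follows immediately. For the singleton claim, note that $\varphi_C(t)$ is a solution of the FCD gradient system whose $\omega$-limit set lies in the EP set $E$ of (\ref{SEs RCM FCD}); this uses $\nabla W(\varphi_C(t))=-Cv_C(t)\to0$. When $\varphi_C$ is bounded, this $\omega$-limit set is nonempty, compact and connected. If the EPs are isolated, it must therefore be a single point $\bar\varphi_C$, as in Theorem~\ref{th:conv bounded_RCM}. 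Then $\Phi_M(t)=\varphi_{CM}(t)+\Phi_{M0}$ converges as well, and the VCD solution converges to $(0,\bar\varphi_C)$, interpreted in the flux coordinates. The point I expect to need the most care is guaranteeing that $\varphi_{CN}$ stays bounded so that the $\omega$-limit set is nonempty. If $G_{CC}$ is singular, I would argue that the unbounded directions lie in $\ker G_{CC}$, restrict $W$ to the complementary coordinates, and note that the isolation of EPs excludes a nontrivial kernel direction (a kernel direction would generate a continuum of equilibria), which forces boundedness.
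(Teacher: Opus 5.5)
Your reduction (integrate (\ref{SEs VCD RCM}) to recover (\ref{SEs RCM FCD}) with $\varphi_C(0)=0$, then use the gradient structure of Theorem~\ref{th:conv bounded_RCM} and $v_C=\dot\varphi_C$) is the paper's proof. It is complete whenever $\varphi_C(t)$ is bounded. That covers $n_\Phi=n_C$, which the paper tacitly assumes when it writes $\varphi_C=\Phi_M-\Phi_{M0}$ and $\cE\subset\R^{2n_C}$, and, as you note, the case $G_{CC}$ nonsingular. You are right that for $n_\Phi<n_C$ this boundedness is not given, but your patch for that case fails. The integrated equation makes $G\varphi_C(t)$ bounded. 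With $\varphi_{CM}$ bounded, this only bounds $G_{\Phi C}\varphi_{CN}$ and $G_{CC}\varphi_{CN}$, so $\varphi_{CN}$ can drift along $K=\ker G_{\Phi C}\cap\ker G_{CC}$. The quadratic part of $W$ stays bounded, but the linear term $-\varphi_{CN}^\top C_Nv_{CN0}$ need not. Hence $W(\varphi_C(t))$ need not be bounded below, and Barbalat's lemma has no input. The isolation argument also breaks down. A direction in $K$ produces a continuum of EPs only if $E\neq\emptyset$. Since the range of the $CN$ block row $(G_{\Phi C}^\top\ \ G_{CC})$ is $K^\perp$, we get $E=\emptyset$ as soon as $C_Nv_{CN0}\not\perp K$, and isolation then holds vacuously.

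No argument can close this case, because the claim is false there. For $k\in K$, the quantity $k^\top C_Nv_{CN}$ is an invariant of (\ref{SEs VCD RCM}) (multiply the $CN$ rows by $k^\top$), so $v_C\to0$ forces $C_Nv_{CN0}\perp K$. For a concrete example, take $\cB_{CM1}$ in parallel with $R_2>0$ and, sharing a node with it, plain capacitors $C_2=C_3=1$ in series closed by $R_1>0$. Assumptions~\ref{assu:memp}--\ref{assu:case 2 RLCM 1b} hold with tree $\{\cB_{CM1},\cB_{C2},\cB_{C3}\}$. With suitable reference directions, $G_{\Phi C}=0$ and $G_{CC}=R_1^{-1}\begin{pmatrix}1&1\\1&1\end{pmatrix}$. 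Take $v_{C2}(0)=1$ and $v_{C3}(0)=0$. The solution $(v_C,\Phi_M)$ is bounded, yet $v_{C2}-v_{C3}\equiv1$ and $v_{C2}+v_{C3}\to0$. Hence $v_C\not\to0$, $\varphi_{C2}(t)=t/2+O(1)$, $W\to-\infty$ and $E=\emptyset$. What is true, and what you should use instead of a lower bound on $W$, is the converse. If $C_Nv_{CN0}\perp K$ (automatic when $K=\{0\}$), then $\frac{d}{dt}\,k^\top C_N\varphi_{CN}=k^\top C_Nv_{CN}=k^\top C_Nv_{CN0}=0$, so $k^\top C_N\varphi_{CN}(t)\equiv0$ for all $k\in K$. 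Together with the bounded component of $\varphi_{CN}$ in $K^\perp$, this bounds $\varphi_{CN}$, and Theorem~\ref{th:conv bounded_RCM} then applies directly. The statement therefore needs this extra hypothesis, or $n_\Phi=n_C$ as the paper implicitly has.
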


\emph{Proof.} Consider the solution $(v_C,\Phi_M)$ to the SEs (\ref{SEs VCD RCM})
with initial condition $(v_{C0},\Phi_{M0})$ at $t=0$. It can be checked that
$\varphi_C$ is the solution of (\ref{SEs RCM FCD}) with initial condition
$\varphi_C(0)=0$. Since $\Phi_M$ is bounded, and $\varphi_C=\Phi_M-\Phi_{M0}$, also
$\varphi_C$ is bounded and, due to Theorem\ \ref{th:conv bounded_RCM},
$\varphi_C$ converges to the set $E$ of EPs of (\ref{SEs RCM FCD}). Moreover,
it is seen from (\ref{SEs RCM FCD}) that $v_C=C \dot\varphi_C$ converges to 0.
If, in particular, we have isolated EPs in the FCD for
(\ref{SEs RCM FCD}), then $\varphi_C$ converges to a singleton
(cf.\ Theorem\ \ref{th:conv bounded_RCM}).
\qed

%
%
%

\begin{remark}
\label{rem:W and P}
For an RCM circuit, the mixed potential (\ref{mixed n1}) is
given by
\begin{align}\label{mixed RCM}
\begin{split}
    \cP( \varphi_C)=&
    \frac{1}{2}  \varphi_C^\top G  \varphi_C
    +\sum_{i=1}^{n_{\Phi}} \int_0^{ \varphi_{C i}}(\hat Q_{MCi}(\rho+\varphi_{Mi0})\\
    &-
       \hat Q_{MCi}(\varphi_{Mi0}))d\rho
    - \varphi_C^\top C v_{C0}.
\end{split}
\end{align}
Note that
$$
\cP( \varphi_C) = W( \varphi_C)
$$
i.e., the Lyapunov function used in the proof of Theorem\ \ref{th:conv bounded_RCM} coincides with the mixed potential. This is in agreement with the fundamental relation (\ref{SEs FCD RLCM mixed}) and Brayton-Moser theory \cite{brayton1964theoryI}, according to which $\mathfrak{N}$ obeys
$$
C \dot \varphi_C = -\frac{\partial}{\partial  \varphi_C}\cP( \varphi_C)
=-\nabla W( \varphi_C).
$$
\end{remark}

\begin{remark}
From the proof of Theorem\ \ref{th:conv bounded_RCM}, it is seen that convergence of a memristor circuit in RCM follows since its SEs in the FCD are a gradient system. It is worth to stress that this
is intrinsically related to the reciprocity properties discussed in Remark\ \ref{rem:recipro}, which in turn imply that $G$ is a symmetric matrix.
\end{remark}

\begin{remark}
The result in Theorem\ \ref{th:conv bounded_RCM} is analogous to that in \cite[Th.\ 1]{di2025robust}.
It is worth to note that the class RCM here considered, as it can easily be verified, does
not coincide with that considered in the quoted paper. Moreover, as it was
noted in Remark\ \ref{rem:W and P}, the Lyapunov function here used has a clear
physical meaning, since it coincides with the mixed potential, while the Lyapunov function
in \cite{di2025robust} is found via an abstract mathematical argument.
\end{remark}

\section{Convergence Results for Class RLCM}
\label{sect:convRLCM}

We have seen in Sect.\ \ref{sect:RCM conv} that the class RCM of memristor circuits with resistors, capacitors and memristors, but no inductors, has strong convergence properties. This is since, for structural reasons, the SEs describing an RCM memristor circuit in the FCD are the gradient
of a Lyapunov function $W$ coinciding with the Brayton-Moser mixed potential $\cP$
(cf.\ Remark\ \ref{rem:W and P}).
Now, we consider the general class RLCM where capacitors and inductors are simultaneously present and look for conditions that still guarantee convergence.

To study convergence of an RLCM circuit $\mathfrak{N}$,
we will rely on some fundamental results
obtained by Brayton and Moser for nonlinear RLC circuits, in particular Theorems\ 3, 5
in \cite{brayton1964theoryI}. Once more, we can use those results on the basis of
the analogy between an RLCM circuit in the FCD and a nonlinear RLC circuit in the VCD
(Sect.\ \ref{sect:FCAM}).

It is important to point out that, differently from the class RCM, the SEs (\ref{SEsFCDRLCM})
describing a circuit in RLCM are not the gradient of the mixed potential $\cP$
in Sect.\ \ref{sect:mixed}. Indeed, they can be written as follows
\begin{equation}\label{JP}
 -J \dot x
 =\frac{\partial}{\partial x} \cP(x)
\end{equation}
where $x=(\varphi_c,q_L)$ and matrix
$$
J=\begin{pmatrix}
   C & 0 \\
   0 & -L \\
 \end{pmatrix}.
$$
This is not a gradient system, since $J$ is \emph{indefinite} when there are both capacitors and
inductors. The main idea in \cite{brayton1964theoryI} is to describe (\ref{JP}), under suitable
assumptions, by another pair $(J^*,\cP^*)$, i.e.,
\begin{equation}\label{JP_star}
 -J^* \dot x
 =\frac{\partial}{\partial x} \cP^*(x)
\end{equation}
where $J^*$ is such that its symmetric part, $(1/2)(J^*+(J^*)^\top)$ is
 \emph{positive definite}. In this case, (\ref{JP_star}) is again a gradient
system for which convergence can be proved via standard methods.





Recall that, by Assumption\ \ref{assu:case 2 RLCM 1}, any memristor is either in
parallel to a capacitor (if it is flux-controlled) or in series with an inductor
(if it is charge-controlled).
Let $\cG$ be a digraph of $\mathfrak{N}$, $\cT$ a maximal tree of $\cG$ and $\cL$ the corresponding co-tree. Taking into account Assumption\ \ref{assu:case 2 RLCM 1b},
we can choose $\cT$ as follows. Let $\cT'$ be the sub-tree containing all and only the $n_C$
elements $\mathfrak{B}_{CMi}$ and $\mathfrak{B}_{Ci}$
and let $\cL'$ be a sub-co-tree containing all branches of $\cG$ making a loop with branches of $\cT'$, only. Then, $\cL'$ contains only linear resistors. Moreover, $\cL-\cL'$ contains all and only the $n_L$
elements $\mathfrak{B}_{LMi}$ and $\mathfrak{B}_{Li}$, while $\cT-\cT'$ has only linear resistors.



In the next sections, we address separately convergence for the three main subclasses RLCM$a$,
RLCM$b$ and RLCM$c$ (cf.\ Sect.\ \ref{sect:class_RLCM}).
In the analysis, we will make use of these notations.
Given a symmetric matrix $A$, we denote by $\mu(A)$ the least (real) eigenvalue
of $A$. If $A$ is (possibly) non-symmetric, we consider for the matrix the spectral norm
$\| A \|_2= \sqrt{\lambda_\mathrm{max}(A^\top A)}$, where $\lambda_\mathrm{max}(\cdot)$ denotes
the maximum eigenvalue. If $A$ is an $n \times m$ matrix, we let
$\|A\|_\infty=\displaystyle{\max_{i=1,\dots,n} \sum_{j=1,\dots,m} |a_{ij}|}$.

\subsection{Convergence for Subclass RLCM$a$}

Consider first the subclass RLCM$a$ of memristor circuits defined in Sect.\ \ref{sect:subc RLCMa}.
In this case, each capacitor has in parallel a flux-controlled memristor and
there is no other memristor. Recall that each memristor satisfies the standing assumptions
made in Sect.\ \ref{sect:class_RLCM}.


\begin{theorem}
\label{th:RLCMa}
Consider a circuit $\mathfrak{N}$ in the class RLCM$a$ and suppose that
Assumptions\ \ref{assu:memp}-\ref{assu:case 2 RLCM 1b} are
satisfied. If the following conditions hold:

1) we have (cf.\ (\ref{Gmi}))
\begin{equation}\label{Gm}
    G_\mathrm{m}  \doteq \min_{i=1,\dots,n_\Phi} \{ G_{mi} \}  >|\mu  (H_{\alpha \alpha})|;
\end{equation}

2) the symmetric matrix $H_{\beta \beta}$ in (\ref{H cas2}) is positive definite;

3) we have
\begin{equation}\label{norm2 RLCMa}
    \nu \doteq \| L^{1/2} H_{\beta \beta}^{-1} H_{\alpha \beta}^\top C^{-1/2} \|_2<1
\end{equation}

\noindent then, any solution to the SEs (\ref{SEs FCD RLCM case 1}) in the FCD
is bounded and converges to the set $E$ of EPs, moreover,
it converges to a singleton $(\bar \varphi_C,\bar q_L) \in E$ when the EPs
are isolated.
\end{theorem}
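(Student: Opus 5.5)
The plan is to follow Brayton and Moser's Theorem~3 in \cite{brayton1964theoryI}. I would replace the indefinite pair $(J,\cP)$ in (\ref{JP}) by a new function $\cP^*$ that decreases strictly along non-stationary solutions and is radially unbounded. I would work in the shifted coordinates $(x,y)$ of (\ref{SEs FCD RLCM case 1 xy}); this is legitimate because $H_{\beta\beta}$ is nonsingular by condition 2), and boundedness and convergence are invariant under the constant translation. Write $A=H_{\alpha\beta}$, so $H_{\beta\alpha}=-A^\top$, and $K=H_{\beta\beta}^{-1}\succ 0$. Let $F(x)=\sum_i\int_0^{x_i}\hat Q_{Mi}(\rho)\,d\rho$. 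The mixed potential in these coordinates is
$$
\cP(x,y)=\tfrac12 x^\top H_{\alpha\alpha}x-\tfrac12 y^\top H_{\beta\beta}y+x^\top A y+F(x)-x^\top Q_0 ,
$$
so that $C\dot x=-\partial_x\cP$ and $L\dot y=\partial_y\cP=A^\top x-H_{\beta\beta}y$.

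The candidate Lyapunov function is $V=\cP+(\partial_y\cP)^\top K\,\partial_y\cP$. Along solutions, $\dot\cP=-\dot x^\top C\dot x+\dot y^\top L\dot y$. Moreover,
$$
\tfrac{d}{dt}\big[(\partial_y\cP)^\top K\partial_y\cP\big]=2\dot y^\top L K(A^\top\dot x-H_{\beta\beta}\dot y)=2\dot y^\top LKA^\top\dot x-2\dot y^\top L\dot y .
$$
Set $u=C^{1/2}\dot x$ and $w=L^{1/2}\dot y$. Then
$$
\dot V=-|u|^2-|w|^2+2w^\top\big(L^{1/2}H_{\beta\beta}^{-1}A^\top C^{-1/2}\big)u\le-(1-\nu)\big(|u|^2+|w|^2\big),
$$
using Cauchy--Schwarz and $2ab\le a^2+b^2$. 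By condition 3), $\dot V<0$ except at EPs, where $\dot x=\dot y=0$.

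Next I would show that $V$ is radially unbounded. Expanding $V$ gives
$$
V=\tfrac12 x^\top(H_{\alpha\alpha}+2AKA^\top)x-x^\top Ay+\tfrac12 y^\top H_{\beta\beta}y+F(x)-x^\top Q_0 .
$$
Completing the square in $y$ rewrites this as
$$
V=\tfrac12(y-KA^\top x)^\top H_{\beta\beta}(y-KA^\top x)+\tfrac12x^\top(H_{\alpha\alpha}+AKA^\top)x+F(x)-x^\top Q_0 .
$$
Since $AKA^\top\succeq0$, the quadratic form in $x$ is at least $\tfrac12\mu(H_{\alpha\alpha})|x|^2$. By Assumption~\ref{assu:memp}, part 3), $F(x)\ge\tfrac12 G_{\rm m}|x|^2-c$ for some constant $c$. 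Condition 1) then gives that the $x$-part grows like $\tfrac12(G_{\rm m}+\mu(H_{\alpha\alpha}))|x|^2$ with a positive coefficient. Together with $H_{\beta\beta}\succ0$, this makes $V$ coercive in $(x,y)$, so every sublevel set is compact and all solutions are bounded.

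Finally I would apply LaSalle's invariance principle. The $\omega$-limit set lies in $\{\dot V=0\}$, which equals the set $E$ of EPs; hence every solution converges to $E$. The $\omega$-limit set is also connected, so when the EPs are isolated it reduces to a single point $(\bar\varphi_C,\bar q_L)$. I expect the main obstacle to be finding the correct weighting of the correction term, here the factor $1$ in front of $(\partial_y\cP)^\top K\partial_y\cP$. This weight must be chosen so that the $\dot y^\top L\dot y$ term changes sign, while the coercivity estimate in $x$ still uses only $\mu(H_{\alpha\alpha})$ and $G_{\rm m}$.
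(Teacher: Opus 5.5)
Your proposal is correct and is essentially the paper's argument: the paper invokes Brayton--Moser's Theorem~3, whose Lyapunov function is exactly your $V=\cP+(\partial_y\cP)^\top H_{\beta\beta}^{-1}\partial_y\cP$ (this is the $\cP^*$, with the accompanying $J^*$, written out in the remark after the theorem), and it uses condition~1) through the same bound $\int_0^{\Phi_{Mi}}\hat Q_{Mi}(\rho)\,d\rho\ge\frac12 G_{mi}(\Phi_{Mi}^2-\tilde\Phi_{Mi}^2)$. The only difference is that you carry out the decrease estimate and the coercivity of $V$ (by completing the square in $y$) explicitly, whereas the paper cites Brayton--Moser and checks only their growth hypothesis $\Upsilon(\varphi_C)+\|H_{\alpha\beta}^\top\varphi_C\|\to\infty$; your version is therefore self-contained but otherwise follows the same route.
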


\emph{Proof.}
Since there are no charge-controlled memristors in series with inductors and there are $n_\Phi=n_C$ flux-controlled memristors in parallel to capacitors, for any circuit $\mathfrak{N}$ in the class RLCM$a$ the mixed potential (\ref{mixed n1}) can be cast in a semilinear form
\begin{equation}\label{semilinear}
\cP(\varphi_C,q_L)= -\frac{1}{2}q_L^\top \Gamma q_L+ \Upsilon(\varphi_C)+q_L^\top (\gamma \varphi_c-a)
\end{equation}
by letting
$\Gamma=H_{\beta \beta}$, $\gamma=H^\top_{\alpha \beta}$, $a=-Li_{L0}$ and
\begin{align}
\begin{split}
\Upsilon(\varphi_C)=&\frac{1}{2}\varphi^\top_C H_{\alpha \alpha} \varphi_C  +\sum_{i=1}^{n_{C}} \int_0^{ \varphi_{Ci}}(\hat Q_{Mi}(\rho+\varphi_{Mi0})\\
 &-\hat Q_{Mi}(\varphi_{Mi0}))d\rho - \varphi_C^\top C v_{C0}.
\end{split}
\end{align}
This is the case tackled by \cite[Th.\ 3]{brayton1964theoryI}, so we need to prove that conditions 1), 2) and 3) imply that the hypotheses of that theorem are satisfied.
It is easy to check that conditions 2) and 3) are the same required in that theorem for ensuring convergence to the set of EPs, i.e., $\Gamma$ is positive definite and $\| L^{1/2} \Gamma^{-1} \gamma C^{-1/2} \|_2<1$. Then, it remains to show that condition 1) guarantees that  $\Upsilon(\varphi_C)+\|H^\top_{\alpha \beta} \varphi_C \| \to \infty$ as $\|\varphi_C\| \to \infty$. In turn, this ensures boundedness of solutions to the SEs in the FCD.
First, we show that, due to the assumptions on the function $\hat Q_{Mi}(\Phi_{Mi})$ we have, for any $\Phi_{Mi}$
\begin{equation}
\label{eq:min_int}
\int_0^{\Phi_{Mi}} \hat Q_{Mi}(\rho) d\rho \geq \frac{1}{2}G_{mi}(\Phi_{Mi}^2-\tilde{\Phi}_{Mi}^2).
\end{equation}
Consider two cases: first, let us assume that $|\Phi_{Mi}| \leq \tilde{\Phi}_{Mi}$. We have
\begin{align*}
\int_0^{\Phi_{Mi}} & \hat Q_{Mi}(\rho) d\rho = \int_0^{\Phi_{Mi}} \hat Q_{Mi}(\rho) d\rho \\
 &+ \frac{1}{2}G_{mi}\Phi_{Mi}^2-\frac{1}{2}G_{mi}\Phi_{Mi}^2 \\
  \geq & \frac{1}{2}G_{mi}\Phi_{Mi}^2 + \min_{|\Phi_{Mi}|<\tilde{\Phi}_{Mi}}\left( \int_0^\varphi \hat Q_{Mi}(\rho) d\rho \right)\\
 &- \max_{|\Phi_{Mi}|<\tilde{\Phi}_{Mi}} \left(\frac{1}{2}G_{mi}\varphi^2 \right)
 =\frac{1}{2}G_{mi}\Phi_{Mi}^2 - \frac{1}{2}G_{mi}\tilde{\Phi}_{Mi}^2.
\end{align*}
Now, assume $|\Phi_{Mi}| > \tilde{\Phi}_{Mi}$, Due to~(\ref{Gmi}), also in this second case we obtain
\begin{align*}
\int_0^{\Phi_{Mi}} & \hat Q_{Mi}(\rho) d\rho = \int_0^{\frac{|\Phi_{Mi}|}{\Phi_{Mi}}\tilde{\Phi}_{Mi}} \hat Q_{Mi}(\rho) d\rho \\
 &+ \int_{\frac{|\Phi_{Mi}|}{\Phi_{Mi}}\tilde{\Phi}_{Mi}}^{\Phi_{Mi}} \hat Q_{Mi}(\rho) d\rho \\
  \geq & \int_{\tilde{\Phi}_{Mi}}^{|\Phi_{Mi}|} G_{mi}\rho d\rho= \frac{1}{2} G_{mi} \Phi_{Mi}^2- \frac{1}{2} G_{mi} \tilde{\Phi}_{Mi}^2.
\end{align*}
Since~(\ref{eq:min_int}) holds, there exist $\nu_{i0}$ and $\nu_{i1}$ such that
$
\int_0^{ \varphi_{Ci}}(\hat Q_{Mi}(\rho+\varphi_{Mi0})-
       \hat Q_{Mi}(\varphi_{Mi0}))d\rho \ge \frac{1}{2} G_{mi} \varphi_{Ci}^2 + \nu_{i1} \varphi_{Ci}+ \nu_{i0}
$
and, for $\varphi_C \ne 0$,
\begin{align*}
\sum_{i=1}^{n_{C}} & \int_0^{ \varphi_{Ci}}(\hat Q_{Mi}(\rho+\varphi_{Mi0})- \hat Q_{Mi}(\varphi_{Mi0}))d\rho \\
 \geq & \frac{1}{2} \varphi_C^\top \mathrm{diag}(G_{m,1}, \dots,G_{m,nC}) \varphi_C \\
 & + (\nu_{11}, \dots, \nu_{nC1})^\top \varphi_C + \sum_{i=1}^{n_C} \nu_{i0} \\
  \geq &  \|\varphi_C\|^2 \left( G_\mathrm{m} \! + \! (\nu_{11}, \dots, \nu_{nC1})^\top \frac{\varphi_C}{\|\varphi_C\|^2} + \frac{\sum_{i=1}^{n_C} \nu_{i0}}{\|\varphi_C\|^2} \right).
\end{align*}
Taking into account that $\varphi^\top_C H_{\alpha \alpha} \varphi_C \geq \mu (H_{\alpha \alpha}) \|\varphi_C\|^2$, we can write
\begin{align*}
\begin{split}
\Upsilon(\varphi_C)  \ge & \left(\frac{\mu (H_{\alpha \alpha})+G_\mathrm{m} }{2} + (\nu_{11}, \dots, \nu_{nC1})^\top \frac{\varphi_C}{\|\varphi_C\|^2} \right. \\
& + \left. \frac{\sum_{i=1}^{n_C} \nu_{i0}}{\|\varphi_C\|^2} \right) \|\varphi_C\|^2.
\end{split}
\end{align*}
From condition 1) we have $\mu (H_{\alpha \alpha})+G_\mathrm{m} >0$. Therefore, $\Upsilon(\varphi_C)+\|H^\top_{\alpha \beta} \varphi_C \| \geq \Upsilon(\varphi_C) \to \infty$ as $ \|\varphi_C\| \to \infty$, thus completing the proof. \qed

Consider now the SEs (\ref{SEs VCD RLCM case 1}) describing a memristor circuit
$\mathfrak{N}$ in the VCD. The following holds.

\begin{theorem}
\label{th:conv VCD RlCMa}
Suppose that the assumptions of Theorem\ \ref{th:RLCMa}
are satisfied.
Then, any solution to the SEs in the VCD (\ref{SEs VCD RLCM case 1})
is bounded and converges to the set $\cE$ of EPs,
moreover, $v_C$ and $i_L$ converge to 0. If, for given initial conditions $w_0$
in the VCD,
the EPs of the corresponding SEs (\ref{SEs FCD RLCM case 1})
in the FCD are isolated, then the solution to (\ref{SEs VCD RLCM case 1}) with
initial conditions $w_0$ converges to a
singleton $(\bar v_C=0, \bar i_L=0, \bar \varphi_M, \bar Q_M) \in \cE$.
\end{theorem}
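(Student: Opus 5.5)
The plan mirrors the proof of Theorem~\ref{th:conv VCD RCM}: transfer the convergence result in the FCD (Theorem~\ref{th:RLCMa}) to the VCD through the correspondence between the two sets of state equations.

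\emph{Step 1: from VCD solutions to FCD solutions.} Fix an initial condition $w_0=(v_{C0},i_{L0},\Phi_{M0})$ for (\ref{SEs VCD RLCM case 1}) and let $(v_C,i_L,\Phi_M)$ be the corresponding solution. Define the incremental quantities
\[
\varphi_C(t)=\int_0^t v_C(\sigma)\,d\sigma=\Phi_M(t)-\Phi_{M0},\qquad q_L(t)=\int_0^t i_L(\sigma)\,d\sigma .
\]
Integrating (\ref{SEs VCD RLCM case 1}) from $0$ to $t$, and using $\dot\Phi_M=v_C$, shows that $(\varphi_C,q_L)$ solves the FCD equations (\ref{SEs FCD RLCM case 1}) with the same $w_0$ and with $\varphi_C(0)=0$, $q_L(0)=0$. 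Indeed, the integral of $\hat Q'_{MC}(\Phi_M)v_C$ equals $\hat Q_{MC}(\varphi_C+\Phi_{M0})-\hat Q_{MC}(\Phi_{M0})$, and the constants $Cv_{C0}$, $Li_{L0}$ come from the initial values. Equivalently, this is the statement that the invariants $\cQ$ are conserved.

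\emph{Step 2: boundedness.} Theorem~\ref{th:RLCMa} gives that $(\varphi_C,q_L)$ is bounded and converges to the set $E$ of EPs of (\ref{SEs FCD RLCM case 1}). Hence $\Phi_M=\varphi_C+\Phi_{M0}$ is bounded. For the voltages and currents, I read them off the FCD equations:
\[
(Cv_C,\,Li_L)=(C\dot\varphi_C,\,L\dot q_L)=F(\varphi_C,q_L;w_0),
\]
where $F$ denotes the continuous right-hand side of (\ref{SEs FCD RLCM case 1}). Since $F$ is continuous and $(\varphi_C,q_L)$ stays in a compact set, $v_C$ and $i_L$ are bounded. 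So the whole VCD solution is bounded.

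\emph{Step 3: $v_C,i_L\to0$.} The $\omega$-limit set of $(\varphi_C,q_L)$ lies in $E$, where $F=0$. Because $F$ is uniformly continuous on compact sets and the trajectory approaches $E$, we get $F(\varphi_C(t),q_L(t))\to0$, hence $v_C\to0$ and $i_L\to0$.

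\emph{Step 4: convergence to $\cE$.} The EPs of (\ref{SEs VCD RLCM case 1}) are exactly the points with $v_C=0$ and $i_L=0$: setting $v_C=0$ makes $\dot\Phi_M=0$, and the remaining equations reduce to $H(0,0)^\top=0$. So Step 3 gives convergence to $\cE$. If the EPs of the FCD system for this $w_0$ are isolated, Theorem~\ref{th:RLCMa} gives $(\varphi_C,q_L)\to(\bar\varphi_C,\bar q_L)$. Then $\Phi_M\to\bar\varphi_C+\Phi_{M0}$, and the VCD solution converges to the singleton $(0,0,\bar\varphi_C+\Phi_{M0})$. The charge component in the statement, $\bar Q_M$, corresponds to the limit $\bar q_L$ of the integrated inductor current.

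\emph{Main obstacle.} The only delicate step is Step 1: checking that the integrated VCD trajectory is exactly the FCD trajectory with zero initial flux and charge, and that the forcing term matches $w_0$. Once that is in place, everything else follows from Theorem~\ref{th:RLCMa} and continuity.
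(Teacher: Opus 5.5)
Your proposal is correct and follows the paper's route. The paper only says the proof is analogous to that of Theorem~\ref{th:conv VCD RCM}: identify the integrated VCD trajectory with the FCD solution starting at $(\varphi_C,q_L)=(0,0)$, import boundedness and convergence from Theorem~\ref{th:RLCMa} (here boundedness goes from the FCD to the VCD, as you correctly do), and read $v_C,i_L$ off the FCD right-hand side; one small point is that Step~4 only needs the inclusion $\{v_C=0,\ i_L=0\}\subseteq\cE$, which is what your argument shows, while the reverse inclusion implicit in ``exactly'' also uses that $H_{\beta\beta}$ is nonsingular (condition~2)).
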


\emph{Proof.}
Analogous to that of Theorem\ \ref{th:conv VCD RCM}.
\qed


\begin{remark}
On the basis of the proof of Theorem\ \ref{th:RLCMa}, and Theorem\ 3 in \cite{brayton1964theoryI},
the SEs (\ref{SEs FCD RLCM case 1}) in the FCD describing a memristor circuit in
the class RLCM$a$ can be put in the gradient form (\ref{JP_star}),
where $J^*$ and $P^*$ are explicitly given as
$$
J^*= \begin{pmatrix}
C & - 2 H_{\alpha \beta} H^{-1}_{\beta \beta} L \\
0 & L
\end{pmatrix}
$$
%
%
and
\begin{align*}
\begin{split}
& \cP^*(\varphi_C,q_L) = \cP(\varphi_C,q_L)
  + (i^\top_{L0}L+\varphi^\top_C H_{\alpha \beta} - q^\top_L H_{\beta \beta}) \cdot \\
  & \cdot H^{-1}_{\beta \beta} (L i_{L0}L+ H^\top_{\alpha \beta} \varphi_C  - H_{\beta \beta} q_L).
\end{split}
\end{align*}

\end{remark}

\begin{remark}
\label{rem:condTha}
Let us discuss in more detail the role of Conditions 1)-3) in Theorem\ \ref{th:RLCMa}
in view of the applications.

From the
theorem proof, Conditions\ 1) and 2) are seen to ensure boundedness of solutions,
while Condition\ 3) guarantees convergence of solutions.

Roughly speaking, Condition\ 1) requires that for large $\Phi_{Mi}$ the slope of the memristor characteristics $\hat Q(\Phi_{Mi})$ is positive and it dominates the eigenvalues (in absolute value) of matrix $H_{\alpha \alpha}$.

Consider now Condition\ 2).
Matrix $H_{\beta \beta}=A_{RL}^\top R_{\cT} A_{RL}$ in that condition depends upon the linear resistors belonging to
$\cT -\cT'$ (cf.\ (\ref{H cas2})). If such resistors are positive (i.e., $R_{\cT}$ is a diagonal positive definite matrix), then we can write $H_{\beta \beta}=(R_{\cT}^{-1/2}A_{RL})^\top (R_{\cT}^{-1/2}A_{RL})$, which is positive definite if and only if $\mathrm{rank}(R_{\cT}^{-1/2}A_{RL})=\mathrm{rank}(A_{RL})=n_L$.
This requires in particular that the number of linear resistors in $\cT -\cT'$ is equal to
or larger than the number $n_L$ of inductors.

If Condition\ 2) holds,
i.e., $H_{\beta \beta}$ is positive definite, then condition
3) is certainly satisfied provided the inductances (resp., capacitances)
are sufficiently small (resp., sufficiently large), as it is discussed in
Example\ 1.
\end{remark}

\begin{remark}
\label{rem:multEPsTha}
A relevant role is played by active (i.e., negative) resistors. Indeed, if all
linear resistors are positive, it can be shown that the SEs (\ref{SEs FCD RLCM case 1}) in the FCD have only
one stable EP attracting all solutions. Negative resistors are therefore
necessary for multiple stable EPs.
On the basis of the previous discussion, if linear resistors in $\cT -\cT'$ are
chosen positive to make $H_{\beta \beta}$ positive definite, then we need to place
linear negative resistors in $\cL-\cL'$. Note that the symmetric matrix
$H_{\alpha \alpha}$ depends upon such linear negative resistors and is in general
indefinite (cf.\ (\ref{H cas2})).
\end{remark}

\subsection{Convergence for Subclass RLCM$b$}

Consider now the subclass RLCM$b$ of memristor circuits defined in Sect.\ \ref{sect:subc RLCMb}.
In this case, there are $n_\Phi<n_C$ flux-controlled memristors in parallel to capacitors and
there is no other memristor.

\begin{theorem}
\label{th_RLCMb}
Consider a circuit $\mathfrak{N}$ in the class RLCM$b$ and suppose that
Assumptions\ \ref{assu:memp}-\ref{assu:case 2 RLCM 1b} are
satisfied. Suppose that
the following conditions hold:

1) any active resistor $R_i<0$ is in parallel to a capacitor and also to
a flux-controlled memristor $M_{\Phi i}$, moreover, we have $G_{mi}>|1/R_i|$ (cf.\ (\ref{Gmi}));

2) the combined $n_C \times (n_C+n_L)$ matrix
\begin{equation}\label{combined}
\begin{pmatrix}
  A_{CL}, A_{CR} \\
\end{pmatrix}
\end{equation}
has rank equal to the number $n_C$ of capacitors (cf.\ (\ref{P partition}));

3) the symmetric matrix $H_{\beta \beta}$ in (\ref{H cas2}) is positive definite;

4) we have
\begin{equation}\label{norm2 RLCMabis}
    \| L^{1/2} H_{\beta \beta}^{-1} H_{\alpha \beta}^\top C^{-1/2} \|_2<1.
\end{equation}

\noindent Then, any solution to the SEs (\ref{SEs FCD RLCM case NM<NC}) in the FCD
is bounded and converges to the set $E$ of EPs, moreover,
it converges to a singleton $(\bar \varphi_C,\bar q_L) \in E$ when the EPs
are isolated.
\end{theorem}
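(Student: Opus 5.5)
The plan is to follow the proof of Theorem~\ref{th:RLCMa}. First, the mixed potential (\ref{mixed n1}) is still in the semilinear form (\ref{semilinear}), because there are no charge-controlled memristors. We take $\Gamma=H_{\beta\beta}$, $\gamma=H_{\alpha\beta}^\top$, $a=-Li_{L0}$, and
\begin{align*}
\Upsilon(\varphi_C)=&\frac12\varphi_C^\top H_{\alpha\alpha}\varphi_C+\sum_{i=1}^{n_\Phi}\int_0^{\varphi_{Ci}}(\hat Q_{Mi}(\rho+\Phi_{M0i})-\hat Q_{Mi}(\Phi_{M0i}))d\rho\\
&-\varphi_C^\top Cv_{C0},
\end{align*}
where the integral now runs only over the first $n_\Phi$ components. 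Conditions 3) and 4) are exactly the hypotheses of \cite[Th.\ 3]{brayton1964theoryI}: $\Gamma$ positive definite and $\|L^{1/2}\Gamma^{-1}\gamma C^{-1/2}\|_2<1$. That theorem then gives convergence of bounded solutions to $E$, and convergence to a singleton when the EPs are isolated. What remains is the growth condition $\Upsilon(\varphi_C)+\|H_{\alpha\beta}^\top\varphi_C\|\to\infty$ as $\|\varphi_C\|\to\infty$, which yields boundedness. The difficulty is that the memristor term no longer controls the components $\varphi_{CN}$, so condition 1) of Theorem~\ref{th:RLCMa} must be replaced by conditions 1) and 2).

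The plan is to split $H_{\alpha\alpha}=A_{CR}R_\cL^{-1}A_{CR}^\top$ into its passive and active parts. We write $R_\cL^{-1}=D_+-D_-$ with $D_\pm\ge0$ diagonal. By condition 1), each active resistor in $\cL$ sits in parallel with one capacitor--memristor pair $\cB_{CMi}$ with $i\le n_\Phi$. In the fundamental cut-set matrix its column of $A_{CR}$ is therefore a unit vector $\pm e_i$. Hence the active part contributes only $-\sum_{\text{active}}|1/R_i|\,\varphi_{Ci}^2$, which is diagonal on memristor components. Applying the lower bound (\ref{eq:min_int}) to these components, the sum of the memristor integrals and the active quadratic is at least $\frac12\sum(G_{mi}-|1/R_i|)\varphi_{Ci}^2-c_1\|\varphi_C\|-c_0$. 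This is nonnegative up to lower-order terms. If several active resistors share one capacitor, we combine them and require $G_{mi}$ to exceed the total conductance, which is what 1) intends. For the remaining terms we have $\frac12\varphi_C^\top A_{CR}D_+A_{CR}^\top\varphi_C=\frac12\|D_+^{1/2}A_{CR}^\top\varphi_C\|^2\ge0$ and $\|H_{\alpha\beta}^\top\varphi_C\|=\|A_{CL}^\top\varphi_C\|$.

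The main obstacle is turning these three nonnegative pieces into coercivity. The quadratic part gives a lower bound $\frac12\varphi_C^\top K\varphi_C+\|A_{CL}^\top\varphi_C\|-c_1\|\varphi_C\|-c_0$, where $K=\mathrm{diag}(G_{mi}-|1/R_i|)\oplus0+A_{CR}^{+}D_+(A_{CR}^{+})^\top$ and $A_{CR}^{+}$ denotes the passive columns. Let $V=\ker K$. On any direction outside $V$ the quadratic term dominates. On $V$ we need $A_{CL}^\top\varphi_C\ne0$; this is where condition 2) enters. The columns of $A_{CR}$ for active resistors are unit vectors on memristor coordinates, which $K$ controls anyway. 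So $\ker K\cap\ker A_{CL}^\top\subseteq\ker(A_{CL},A_{CR})^\top=\{0\}$ by the rank condition. By compactness of the unit sphere, this gives $\frac12\varphi_C^\top K\varphi_C+\|A_{CL}^\top\varphi_C\|\ge\min(\kappa\|\varphi_C\|^2,\kappa\|\varphi_C\|)$ for large $\|\varphi_C\|$ and some $\kappa>0$. Decompose $\varphi_C=u+w$ with $u\in V$, $w\in V^\perp$.

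A second worry is the linear term $-\varphi_C^\top Cv_{C0}$, together with the $\nu_{i1}$ terms. Along $V$ these compete with the merely linear growth of $\|A_{CL}^\top\varphi_C\|$. I would first check whether Brayton--Moser's hypothesis really requires only $\Upsilon+\|\gamma\varphi_C\|\to\infty$ in the same sense, and whether the constant $a$ term allows absorbing them. If not, I would use the invariant structure of Sect.~\ref{sect:subc RLCMb} (shift by $H_{22}^{-1}$ as in (\ref{SEs FCD RLCM case 1 xy})) to remove the constant forcing on the non-memristive components before applying the estimate.
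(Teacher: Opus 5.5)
Your plan follows the paper's own route: the semilinear form of $\cP$, conditions 3) and 4) as the hypotheses of \cite[Th.\ 3]{brayton1964theoryI}, and conditions 1)--2) for the growth condition. The paper handles that last step in two lines. It merges each active resistor with its parallel memristor into an equivalent memristor with slope bound $G_{mi}-|1/R_i|$, and then cites \cite[Sect.\ 19]{brayton1964theoryII} for $\Upsilon(\varphi_C)+\|H_{\alpha\beta}^\top\varphi_C\|\to\infty$. Your argument is a correct, explicit version of that step: $K\succeq 0$, the active columns of $A_{CR}$ are $\pm e_i$, and $\ker K\cap\ker A_{CL}^\top\subseteq\ker(A_{CL},A_{CR})^\top=\{0\}$.

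As written, however, your proof is incomplete. You only reach the bound $\kappa\|\varphi_C\|-c_1\|\varphi_C\|-c_0$ and leave open how $\kappa$ compares with $c_1$. Your worry is justified, and your first escape route cannot succeed, because on $\ker K$ one has exactly $\Upsilon(\varphi_C)=-\varphi_C^\top Cv_{C0}$. Suppose a non-memristive capacitor $C_j$ has a zero row in $A_{CR}$, e.g. it lies only in a loop with an inductor and a tree resistor; this is compatible with conditions 1)--4). Along $\varphi_C=te_j$ you get $\Upsilon+\|H_{\alpha\beta}^\top\varphi_C\|=t(\|A_{CL}^\top e_j\|-C_jv_{C0j})$, which tends to $-\infty$ as soon as $C_jv_{C0j}>\|A_{CL}^\top e_j\|$. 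So the growth condition in the literal form you (and the paper) use fails for large initial voltages. The constant $a$ enters only through the inductor side and cannot repair this.

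The missing idea is to use the quadratic term inside the Brayton--Moser Lyapunov function rather than its stated hypothesis. Theorem~3 of \cite{brayton1964theoryI} works with $\cP^*=\cP+(\partial_{q_L}\cP)^\top\Gamma^{-1}\partial_{q_L}\cP$ (cf.\ the Remark following Theorem~\ref{th:RLCMa}), which is nonincreasing under conditions 3)--4). With $b=\gamma\varphi_C-a$ it equals $\frac12(q_L-\Gamma^{-1}b)^\top\Gamma(q_L-\Gamma^{-1}b)+\Upsilon(\varphi_C)+\frac12 b^\top\Gamma^{-1}b$. Boundedness therefore only needs $\Upsilon(\varphi_C)+\frac12 b^\top\Gamma^{-1}b\to\infty$. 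Your estimate then gives the lower bound $\frac12\varphi_C^\top\big(K+\lambda_{\max}(\Gamma)^{-1}A_{CL}A_{CL}^\top\big)\varphi_C-c_1'\|\varphi_C\|-c_0'$. That matrix is positive definite precisely by your kernel fact, so the quadratic growth absorbs every linear term.

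Your second route also works: translate $Z=(\varphi_{CN},q_L)$ as in Sect.~\ref{sect:subc RLCMb}. This leaves linear terms only on memristive coordinates, where $\frac12\sum_i(G_{mi}-|1/R_i|)\varphi_{Ci}^2$ dominates them. But you must first show that $H_{22}$ is nonsingular, and it is. Let superscript $N$ denote the rows of the non-memristive capacitors. The Schur complement $H_{\alpha\alpha}^{N N}+A_{CL}^{N}H_{\beta\beta}^{-1}(A_{CL}^{N})^\top$ is positive semidefinite, because the active-resistor columns of $A_{CR}$ vanish on those rows. Any $u$ in its kernel satisfies $(A_{CL},A_{CR})^\top(0,u)=0$, hence $u=0$ by condition 2).
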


\emph{Proof.} The proof is analogous to that of Theorem~\ref{th:RLCMa} and it makes use of \cite[Th.\ 3]{brayton1964theoryI}. Indeed, since there are no charge-controlled memristors and there are $n_\Phi<n_C$ flux-controlled memristors, for any circuit $\mathfrak{N}$ in the class RLCM$b$ the mixed potential (\ref{mixed n1}) can be cast in the same semilinear form~(\ref{semilinear}), with the only difference that $\Upsilon(\cdot)$ now takes the form
\begin{align*}
\begin{split}
\Upsilon(\varphi_C)=&\frac{1}{2}\varphi^\top_C H_{\alpha \alpha} \varphi_C  +\sum_{i=1}^{n_{\Phi}} \int_0^{ \varphi_{CMi}}(\hat Q_{Mi}(\rho+\varphi_{Mi0})\\
       &-
       \hat Q_{Mi}(\varphi_{Mi0}))d\rho - \varphi_C^\top C v_{C0}.
\end{split}
\end{align*}
Conditions 3) and 4) are exactly Conditions 2) and 3) of \cite[Th.\ 3]{brayton1964theoryI}, respectively, of Theorem~\ref{th:RLCMa}. To complete the proof, we have to show that Conditions 1) and 2) guarantee that  $\Upsilon(\varphi_C)+\|H^\top_{\alpha \beta} \varphi_C\| \to \infty$ as $\|\varphi_C\| \to \infty$.

To this end, we first note that the parallel interconnection of a resistor $R_i$ with a flux-controlled memristor is described by the CR
\begin{equation}\label{RM}
    q_i(\varphi_{Mi})=\hat \cQ_{Mi}(\varphi_{Mi}) \doteq \frac{\varphi_{Mi}}{R_i}+ \hat Q_{Mi}(\varphi_{Mi})
\end{equation}
hence, it is equivalent to a flux-controlled memristor with characteristic
$\hat \cQ_{Mi}$.
Consequently, under Condition 1) any (possibly nonlinear) equivalent resistor satisfies
(\ref{Gmi}) with $\cG_{Mi}=G_{Mi}+1/R_i$. This and Condition 2) guarantee that $\Upsilon(\varphi_C)+\|H^\top_{\alpha \beta} \varphi_C\| \to \infty$ as $\|\varphi_C\| \to \infty$
(cf.\ \cite[Sect.\ 19]{brayton1964theoryII}). \qed

\begin{remark}
For a circuit in the class RLCM$b$, we can repeat analogous considerations as in
Remark\ \ref{rem:condTha}, \ref{rem:multEPsTha}.
Moreover, in the VCD,
a result analogous to Theorem\ \ref{th:conv VCD RlCMa}
can be proved (details are omitted).
\end{remark}

\subsection{Convergence for Subclass RLCM$c$}

\label{sect:conv RLCMc}

Consider now class RLMC$c$ defined in Sect.\ \ref{sect:subc RLCMc}, i.e., the case where each capacitor has in parallel a flux-controlled
memristor and each inductor has in series a charge-controlled memristor, hence $n_C=n_\Phi$ and $n_L=n_Q$.

We need some additional notations. Under Assumptions\ \ref{assu:memp}, \ref{assu:memc},
for each flux-controlled memristor we let
\begin{equation}\label{rho}
    \rho_{Mi} \doteq \inf_{\Phi_{Mi} \in \R} \hat Q_{Mi}'(\Phi_{Mi}) \ge 0
\end{equation}
and for each charge-controlled memristor
\begin{equation}\label{gamma}
    \gamma_{Mi} \doteq \inf_{Q_{Mi} \in \R} \hat \Phi_{Mi}'(Q_{Mi})  \ge 0.
\end{equation}

\begin{theorem}
\label{th_RLCMc}
Consider a circuit $\mathfrak{N}$ in the class RLCM$c$ and suppose that
Assumptions\ \ref{assu:memp}-\ref{assu:case 2 RLCM 1b} are
satisfied. Suppose that
$$
\mu_1+\mu_2>0
$$
where we let
\begin{equation}\label{mu1}
\mu_1 \doteq
    \mu(L^{-1/2}(H_{\beta \beta}+{\rm diag}(\rho_{M1},\dots,\rho_{Mn_{\Phi}}))L^{-1/2})
\end{equation}
and
\begin{equation}\label{mu2}
\mu_2 \doteq    \mu(C^{-1/2}(H_{\alpha \alpha}+{\rm diag}(\gamma_{M1},\dots,\gamma_{Mn_{Q}}))C^{-1/2}).
\end{equation}

\noindent Then, any bounded solution to the SEs~(\ref{SEs FCD RLCM special}) in the FCD converges to the set $E$ of EPs,
moreover, it converges to a singleton $(\bar \varphi_C,\bar q_L) \in E$ when the EPs
are isolated.
\end{theorem}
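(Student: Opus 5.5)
The plan is to use the second Brayton--Moser stability criterion, \cite[Th.\ 5]{brayton1964theoryI}, and apply it through the FCD/VCD analogy of Sect.\ \ref{sect:FCAM}. Write (\ref{SEs FCD RLCM special}) in the mixed-potential form (\ref{JP}), with $x=(\Phi_{MC},Q_{ML})$. By (\ref{mixed n1}), the potential is
\begin{align*}
\begin{split}
\cP(x)=&\tfrac12 \Phi_{MC}^\top H_{\alpha\alpha}\Phi_{MC}-\tfrac12 Q_{ML}^\top H_{\beta\beta}Q_{ML}+\Phi_{MC}^\top H_{\alpha\beta}Q_{ML}\\
&+\textstyle\sum_i \int_0^{\Phi_{MCi}}\hat Q_{Mi}-\sum_j\int_0^{Q_{MLj}}\hat\Phi_{Mj}-\Phi_{MC}^\top Q_0+Q_{ML}^\top\Phi_0 .
\end{split}
\end{align*}
Its Hessian has diagonal blocks $\cP_{\varphi\varphi}=H_{\alpha\alpha}+{\rm diag}(\hat Q'_{Mi})$ and $\cP_{qq}=-(H_{\beta\beta}+{\rm diag}(\hat\Phi'_{Mj}))$.

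\textbf{Step 1.} Following Brayton--Moser, build a new pair $(J^*,\cP^*)$ as in (\ref{JP_star}). I will take
$$
\cP^*=\frac{\mu_1-\mu_2}{2}\,\cP+\frac12\,\dot x^\top J\dot x ,
$$
where $\dot x$ is expressed through $\partial\cP/\partial x$, that is,
$$
\frac12\Bigl(\frac{\partial\cP}{\partial\varphi}\Bigr)^{\!\top}C^{-1}\frac{\partial\cP}{\partial\varphi}+\frac12\Bigl(\frac{\partial\cP}{\partial q}\Bigr)^{\!\top}L^{-1}\frac{\partial\cP}{\partial q}.
$$
(I would first check against Th.~5 whether the sign convention should be $\mu_1-\mu_2$ or $\mu_2-\mu_1$, and keep whichever makes the algebra close.) Differentiating along solutions gives
$$
\dot\cP^*=-\dot x^\top\Bigl[\tfrac{\mu_1-\mu_2}{2}J+\cP_{xx}\Bigr]\dot x+\ldots ,
$$
and the blocks reorganize so that the $H_{\alpha\beta}$ cross terms cancel because of the skew structure of Remark~\ref{rem:recipro}. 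What remains is
$$
\dot\cP^*=-\dot\varphi^\top\bigl(\cP_{\varphi\varphi}+\tfrac{\mu_1-\mu_2}{2}C\bigr)\dot\varphi-\dot q^\top\bigl(-\cP_{qq}-\tfrac{\mu_1-\mu_2}{2}L\bigr)\dot q ,
$$
with $\cP_{\varphi\varphi}$ and $\cP_{qq}$ evaluated along the trajectory, up to the exact way the $C^{-1}$ and $L^{-1}$ weights enter.

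\textbf{Step 2.} Bound both quadratic forms using the infima (\ref{rho}) and (\ref{gamma}). Since $\hat Q'_{Mi}\ge\rho_{Mi}$ and $\hat\Phi'_{Mj}\ge\gamma_{Mj}$, we have, uniformly in the state,
$$
C^{-1/2}\cP_{\varphi\varphi}C^{-1/2}\ge\mu_2 I,\qquad -L^{-1/2}\cP_{qq}L^{-1/2}\ge\mu_1 I .
$$
(The diagonal indices in (\ref{mu1})--(\ref{mu2}) appear swapped relative to the dimensions, and I would read them in the dimensionally consistent way.) Substituting these bounds, both bracketed matrices are bounded below by $\tfrac{\mu_1+\mu_2}{2}$ times $C$ and $L$, respectively. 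Hence
$$
\dot\cP^*\le-\tfrac{\mu_1+\mu_2}{2}\bigl(\dot\varphi^\top C\dot\varphi+\dot q^\top L\dot q\bigr)<0
$$
along non-stationary solutions.

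\textbf{Step 3.} $\cP^*$ is $C^1$ and strictly decreasing along any non-equilibrium solution. For a bounded solution the $\omega$-limit set is nonempty, compact and invariant, and $\cP^*$ is constant on it. So $\dot x\equiv0$ there, and LaSalle's invariance principle gives convergence to $E$. If the EPs are isolated, the $\omega$-limit set is connected and contained in a discrete set, so it is a singleton.

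The main obstacle is the exact bookkeeping in Step 1: the cross term in $\dot\cP^*$ must cancel, and the weights must combine so that the sum $\mu_1+\mu_2$ appears rather than the minimum. I would mirror the computation in the proof of \cite[Th.\ 5]{brayton1964theoryI}, where $\cP^*=\frac{\mu_1-\mu_2}{2}\cP+\frac12(\cP_\varphi^\top C^{-1}\cP_\varphi+\cP_q^\top L^{-1}\cP_q)$. In that computation, the factor multiplying $\cP$ shifts the two quadratic forms by opposite amounts, which equalizes their lower bounds to $(\mu_1+\mu_2)/2$. The memristor nonlinearities only enter through the Hessian diagonals, and they are controlled by $\rho_{Mi}$ and $\gamma_{Mj}$.
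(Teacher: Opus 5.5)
Your proposal is correct and follows the paper's route. The paper cites \cite[Th.\ 5]{brayton1964theoryI} and verifies its eigenvalue hypothesis with exactly your Step~2 bounds, including the same $\rho_{Mi}$/$\gamma_{Mj}$ index swap you flagged; you resolve it correctly by pairing $\gamma_{Mj}$ with $H_{\beta\beta}$ and $\rho_{Mi}$ with $H_{\alpha\alpha}$. Your Steps~1 and~3, with $\cP^*=\frac{\mu_1-\mu_2}{2}\cP+\frac12(\ldots)$ being the right sign, simply write out the proof of that theorem, which has the small advantage of using only the boundedness of the given solution, so you need no growth condition on $\cP^*$ of the kind Brayton--Moser impose.
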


\emph{Proof.}
To show that each bounded solution converges to the set of EPs, we will make use of Theorem 5 in \cite{brayton1964theoryI}.
We first note that we can recast the expression of the mixed potential~(\ref{mixed n1}) in the form
$$
\cP(\varphi_C,q_L)= -\Gamma(q_L)+ \Upsilon(\varphi_C)+q_L^\top \gamma \varphi_c
$$
once we choose
\begin{align*}
\Gamma(q_L)=&\frac{1}{2}q_L^\top H_{\beta \beta} q_L  +\sum_{i=1}^{n_{L}} \int_0^{ q_{Li}}(\hat \Phi_{Mi}(\rho+q_{Mi0})\\
&- \hat \Phi_{Mi}(q_{Mi0}))d\rho - q_L^\top L i_{L0}
\end{align*}
\begin{align}
\Upsilon(\varphi_C)=&\frac{1}{2}\varphi^\top_C H_{\alpha \alpha} \varphi_C  +\sum_{i=1}^{n_{C}} \int_0^{ \varphi_{Ci}}(\hat Q_{Mi}(\rho+\varphi_{Mi0})\\
&-\hat Q_{Mi}(\varphi_{Mi0}))d\rho - \varphi_C^\top C v_{C0}
\end{align}
and
$
\gamma=H_{\alpha \beta}^\top.
$
To use Theorem\ 5, we have to prove that
\begin{equation}
\label{eq:condition-Theo-5-BM}
\mu (L^{-1/2} \nabla^2 \Gamma(q_L) L^{-1/2}) + \mu (C^{-1/2} \nabla^2 \Upsilon(\varphi_C)C^{-1/2}) >0
\end{equation}
for all $q_L$ and all $\varphi_C$, where
$$
\nabla^2 \Gamma(q_L)= \frac{\partial^2 \Gamma(q_L)}{\partial{q_{Li}} \partial{q_{Lj}}},
\ \ \nabla^2 \Upsilon(\varphi_C)= \frac{\partial^2 \Upsilon(\varphi_C)}{\partial{\varphi_{Ci}} \partial{\varphi_{Cj}}}.
$$
Since
\begin{align*}
& \nabla^2 \Gamma(q_L)= H_{\beta \beta}\\
&+ \mathrm{diag}\left(\Phi'_{M1}(q_{L1}+Q_{M10}),\dots,\Phi'_{MnL}(q_{nL}+Q_{MnL0})\right)
\end{align*}
is a symmetric matrix, then for all $q_L$ we have
\begin{align*}
&\mu (L^{-1/2} \nabla^2 \Gamma(q_L) L^{-1/2})\\
 = &\min_{\|x\|=1} \left ( x^\top L^{-1/2} \nabla^2 \Gamma(q_L) L^{-1/2} x \right ) \\
 = &
\min_{\|x\|=1} \Big ( x^\top L^{-1/2} H_{\beta \beta} L^{-1/2} x \\
 &+   x^\top L^{-1/2} \mathrm{diag}(\Phi'_{M1}(Q_{L1}+Q_{M10}), \\
 & \dots,\Phi'_{MnL}(Q_{nL}+Q_{MnL0})) L^{-1/2} x \big)  \\
\geq & \min_{\|x\|=1} \Big ( x^\top L^{-1/2} H_{\beta \beta} L^{-1/2} x+  x^\top L^{-1/2} {\rm diag}(\rho_{M1}, \\
& \dots,\rho_{Mn_{\Phi}}) L^{-1/2} x \Big )\\
= &  \mu(L^{-1/2}(H_{\beta \beta}+{\rm diag}(\rho_{M1},\dots,\rho_{Mn_{\Phi}})L^{-1/2})= \mu_1.
\end{align*}
Using the same argument, we can also prove that
\begin{align*}
& \mu (C^{-1/2} \nabla^2 \Upsilon(\varphi_C)C^{-1/2}) \\
 & \ge  \mu(C^{-1/2}(H_{\alpha \alpha}+{\rm diag}(\gamma_{M1},\dots,\gamma_{Mn_{Q}}))C^{-1/2})=\mu_2.
\end{align*}
Hence, we have shown that $\mu_1+\mu_2>0$ guarantees that~(\ref{eq:condition-Theo-5-BM}) holds true.
In turn this implies that any solution to (\ref{SEs FCD RLCM special}) converges to the set $E$ of EPs, moreover,
it converges to a singleton $(\bar \varphi_C,\bar q_L) \in E$ when the EPs are isolated. \qed

The next result gives a sufficient condition for boundedness of solutions.

\begin{prop}
\label{th_RLCMc-bound}
Consider a circuit $\mathfrak{N}$ in the class RLCM$c$ and suppose that
Assumptions\ \ref{assu:memp}-\ref{assu:case 2 RLCM 1b} are
satisfied. If we have (cf.\ (\ref{Gmi}) and (\ref{Rmi}))
\begin{align}
\label{RLMC_condition1}
\begin{split}
\min_{i=1,\dots,n_C} G_{mi}&  > \| H_{\alpha \alpha} , H_{\alpha \beta} \|_\infty \\
\min_{i=1,\dots,n_L} R_{mi}& > \|- H_{\alpha \beta}^\top  ,  H_{\beta \beta}  \|_\infty
\end{split}
\end{align}
\noindent then any solution to the SEs~(\ref{SEs FCD RLCM special}) in the FCD
is bounded.
\end{prop}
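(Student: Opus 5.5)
We prove boundedness by comparing each state component with the dominant diagonal memristor term, in the spirit of a componentwise (infinity-norm) Lyapunov argument. Write the SEs (\ref{SEs FCD RLCM special}) in the form $C\dot\Phi_{MC}=-F_C(\Phi_{MC},Q_{ML})$, $L\dot Q_{ML}=-F_L(\Phi_{MC},Q_{ML})$, where the $i$-th capacitor component is
\[
F_{Ci}=\hat Q_{Mi}(\Phi_{MCi})+\sum_j (H_{\alpha\alpha})_{ij}\Phi_{MCj}+\sum_k (H_{\alpha\beta})_{ik}Q_{MLk}-Q_{0i}.
\]
The inductor components $F_{Lk}$ have the analogous form with the rows $(-H_{\alpha\beta}^\top, H_{\beta\beta})$, $\hat\Phi_{Mk}$ and $\Phi_0$. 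Boundedness of $(\Phi_{MC},Q_{ML})$ is equivalent to boundedness of $(\varphi_C,q_L)$, since the two differ by constants.

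We use the Lyapunov-like function $V(x)=\|x\|_\infty=\max\{\max_i|\Phi_{MCi}|,\max_k|Q_{MLk}|\}$. Set $\kappa_1=\|H_{\alpha\alpha},H_{\alpha\beta}\|_\infty$ and $\kappa_2=\|-H_{\alpha\beta}^\top,H_{\beta\beta}\|_\infty$. The key step is to show that there is $r>0$ such that, whenever $V(x)=r'\ge r$ and a component attains the maximum, that component is strictly decreasing in absolute value. Suppose $|\Phi_{MCi}|=V(x)>\tilde\Phi_{Mi}$. By Assumption \ref{assu:memp}, $\hat Q_{Mi}(\Phi_{MCi})\Phi_{MCi}\ge G_{mi}\Phi_{MCi}^2$. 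Moreover,
\[
\Big|\sum_j(H_{\alpha\alpha})_{ij}\Phi_{MCj}+\sum_k(H_{\alpha\beta})_{ik}Q_{MLk}\Big|\le \kappa_1 V(x).
\]
Combining these gives
\[
\Phi_{MCi}F_{Ci}\ge (G_{mi}-\kappa_1)V(x)^2-|Q_{0i}|V(x).
\]
This is positive once $V(x)>|Q_{0i}|/(G_{mi}-\kappa_1)$, using the first inequality in (\ref{RLMC_condition1}). Then $\frac{d}{dt}\tfrac12 C_i\Phi_{MCi}^2=-\Phi_{MCi}F_{Ci}<0$. The same computation applies to the inductor components, using Assumption \ref{assu:memc} and the second inequality in (\ref{RLMC_condition1}). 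We take $r$ to be the maximum of all the thresholds $\tilde\Phi_{Mi}$, $\tilde Q_{Mk}$, $|Q_{0i}|/(G_{mi}-\kappa_1)$ and $|\Phi_{0k}|/(R_{mk}-\kappa_2)$.

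Next we conclude that the cube $\{V\le r'\}$ is positively invariant for every $r'\ge r$, so each solution satisfies $V(x(t))\le\max\{V(x(0)),r\}$. The main obstacle is that $V$ is only Lipschitz, not differentiable. We handle it with the standard argument for max-type functions (Danskin's lemma): the upper Dini derivative of $V$ equals the maximum of the derivatives of $|x_j|$ over the active indices $j$. The different positive weights $C_i$, $L_k$ cause no difficulty, since only the sign of $\dot x_j$ on the face matters. Equivalently, we note that at a boundary point of the cube the vector field points strictly inward on each active face, and apply Nagumo's invariance theorem. In either form this rules out escape, and the solution also exists globally since it stays bounded. Finally, we observe that $r$ depends on the initial conditions $w_0$ only through $Q_0$ and $\Phi_0$, which is consistent with the statement.
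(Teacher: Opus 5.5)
Your proposal is correct and follows essentially the same route as the paper. Both use the Lyapunov function $V=\|(\Phi_{MC},Q_{ML})\|_\infty$ and a dominant-component estimate that combines the growth bounds (\ref{Gmi}), (\ref{Rmi}) with the row-sum norms $\|H_{\alpha\alpha},H_{\alpha\beta}\|_\infty$ and $\|-H_{\alpha\beta}^\top,H_{\beta\beta}\|_\infty$, with a radius fixed by $\tilde\Phi_{Mi}$, $\tilde Q_{Mi}$, $Q_0$ and $\Phi_0$. The remaining differences are only technical:
\begin{itemize}
\item The paper sets $C=L=I$ and handles nonsmoothness with the convex-analysis chain rule and the subgradient $\xi=\mathrm{sgn}(Z_k)e_k$ (Appendix A).
\item You keep the weights $C_i$, $L_k$ and use Dini derivatives or Nagumo's theorem, which is equally valid.
\item To keep the threshold inequalities strict, argue with cubes of radius $r'>r$ rather than $r'\ge r$.
\end{itemize}
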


\emph{Proof.}
Since $C$ and $L$ are diagonal positive definite matrices, they do not alter the sign of the right-hand side of~(\ref{SEs FCD RLCM special}). Hence, without loss of generality, we can assume for simplicity that $C_i=1$, $i=1,\dots,n_C$ and $L_j=1$, $j=1,\dots,n_L$.
Let us introduce the state vector $Z=(\Phi_M,Q_M) \in \R^{n_C+n_L}$, and consider function $V(Z)=\|Z\|_\infty=\max_{i=1,\dots,n_C+n_L}|z_i|: \R^{n_C+n_L} \to \R$, which is convex, and radially unbounded in $\R^{n_C+n_L}$. Let $Z(t), t \in [0, T]$, where $0<T<+\infty$, be a solution to~(\ref{SEs FCD RLCM special}). Let $k(t) \in \{1,\dots, n_C+n_L\}$ be such that $|Z_{k(t)}(t)|=\|Z(t)\|_\infty$ for $t \geq 0$.
It can be shown that, as long as
\begin{equation}
\label{eq:conditionZ}
|Z_{k(t)}(t)| >\max \left\{ \max_{i=1,\dots,n_C}\{\tilde \Phi_{Mi} \} \!, \! \max_{i=1,\dots, n_L} \{ \tilde Q_{Mi}\} \right\}
\end{equation}
we have (see Appendix\ A): if $k(t) \in {1,\dots,n_C}$, then
\begin{align}
\label{eq:min-dotVC}
\begin{split}
\dot V(Z(t))\leq & - \| Z_{k(t)}(t)\|_\infty \\
& \cdot  \Bigg (G_{Mk(t)} - \frac{ \| Q_0\|_\infty}{ \|Z_{k(t)}(t)\|_\infty} -
\| H_{\alpha \alpha}, H_{\alpha \beta} \|_\infty  \Bigg )
\end{split}
\end{align}
whereas, if $k(t) \in {n_C+1,\dots,n_M}$, then
\begin{align}
\label{eq:min-dotVL}
\begin{split}
\dot V(Z(t))\leq &  - \| Z_{k(t)}(t)\|_\infty \\
& \cdot    \Bigg( R_{Mk(t)-n_C} \! - \! \frac{ \| \Phi_0\|_\infty}{\|Z_{k(t)}(t)\|_\infty} \! \\
& - \! \| -H_{\alpha \beta}^\top, H_{\beta \beta} \|_\infty  \Bigg ).
\end{split}
\end{align}

Now, choose $\varepsilon= \min \{ \varepsilon_1, \varepsilon_2 \}$, where
\begin{align*}
\varepsilon_1=&\frac{1}{2}\left(\min_{i=1,\dots,n_C} \{ G_{mi} \} -\| H_{\alpha \alpha},  H_{\alpha \beta} \|_\infty  \right) >0\\
\varepsilon_2=& \frac{1}{2}\left( \min_{j=1,\dots,n_L} \{R_{mj} \} - \|-H_{\alpha \beta}^\top, H_{\beta \beta}  \|_\infty \right) >0
\end{align*}
and pick
\begin{align*}
R(\varepsilon)=&\max \left\{\max_{i=1,\dots,n_C}\{ \tilde \Phi_{Mi} \}, \max_{i=1,\dots, n_L} \{ \tilde Q_{Mi}\} , \right. \\
& \left. \frac{ \|Q_0\|_\infty}{\varepsilon},
\frac{\|\Phi_0\|_\infty}{\varepsilon} \right  \}.
\end{align*}
Then, for any $Z(t)$ such that $\|Z(t)\|_\infty>R(\varepsilon)$, due to~(\ref{eq:min-dotVC}) and
(\ref{eq:min-dotVL}), we have $\dot V(Z(t)) < -\varepsilon/2<0$.
As a consequence, the set $B_{R(\varepsilon)} = \{Z \in \R^{n_M}: |Z| \leq R(\varepsilon)\}$ is positively invariant and globally attractive for the solutions to~(\ref{SEs FCD RLCM special}). Therefore, these are bounded and defined for $t \geq 0$. \qed

\begin{remark} Let us discuss in more detail conditions of Theorem\ \ref{th_RLCMc}
and Proposition\ \ref{th_RLCMc-bound}.
While conditions in the proposition guarantee boundedness of solutions, those in the theorem guarantee convergence of solutions.

Considering Proposition\ \ref{th_RLCMc-bound},
we can make analogous considerations as in Remark\ \ref{rem:condTha}.
To ensure conditions of Theorem\  \ref{th_RLCMc} hold, it is necessary that we have either
$\mu_1>0$ or $\mu_2>0$. To this end, a possibility is to proceed as follows.
Suppose $H_{\beta \beta}$ is positive definite (cf.\ Remark\ \ref{rem:condTha}).
Then, since ${\rm diag}(\rho_{M1},\dots,\rho_{Mn_{Mf}})$ has
non-negative diagonal entries, we have
$$
\mu(H_{\beta \beta}+{\rm diag}(\rho_{M1},\dots,\rho_{Mn_{Mf}})).
\ge \mu (H_{\beta \beta})
$$
Therefore, we obtain
$$
\mu_1=\mu(L^{1/2}(H_{\beta \beta}+{\rm diag}(\rho_{M1},\dots,\rho_{Mn_{Mf}}))L^{1/2})>0.
$$
If the capacitors are fixed, it follows that such condition holds for
sufficiently small inductors, see the examples in the next section. Dual considerations hold if
$H_{\alpha \alpha}$ is positive definite.
\end{remark}

\begin{remark}
Concerning the existence of multiple EPs, there hold considerations analogous to
those in Remark\ \ref{rem:multEPsTha}, in particular, we should place
negative resistors in $\cL-\cL'$. Moreover, when a negative resistor $R_i$ is in
parallel to a memristor, we need to have $|R_i| > \rho_{Mi}$ (cf.\ (\ref{rho})).
See Example\ 2 for an illustration.
\end{remark}

\section{Discussion}
\label{sect:disc}

Some general remarks are in order to discuss the obtained convergence results.

\begin{remark}[Comparison with existing results]
To the authors knowledge, the obtained results are the only existing ones pertaining to
nonlinear circuits containing all four basic circuit elements. As such, they extend
previous results for nonlinear circuits where memristors are not considered \cite{brayton1964theoryII}
and nonlinear circuits where inductors are not included
\cite{di2025robust}. They are also an extension of convergence results for NNs using ideal memristors
\cite{DFP16,di2017memristor,DiMarco2020LQprog,10144925,DiMarco2022rectifying,di2022convergence}
and those where use is made of memristors modeled as
switching devices, where once more inductors are not considered, see, e.g.,
\cite{WZ12,GWY2013,YCY14,ZSYS13,WWZ12}, and references therein.
\end{remark}

\begin{remark}[Convergence is parameter dependent]
Let us compare the convergence result for RCM circuits (Theorem\ \ref{th:conv bounded_RCM})
with those for RLCM circuits (Theorems\ \ref{th:RLCMa}, \ref{th_RLCMb}, \ref{th_RLCMc}). While
in the former case convergence
is guaranteed for any circuit parameters, in the latter case we need a crucial additional
assumption, i.e., we need a balance between capacitors and inductors (cf., e.g., Condition 3) in Theorem\ \ref{th:RLCMa}).
This means that convergence is parameter dependent for RLCM circuits.
\end{remark}

\begin{remark}[Convergence is robust]
It is stressed that Theorems\ \ref{th:RLCMa}, \ref{th_RLCMb}, \ref{th_RLCMc} give conditions for convergence which
are robust with respect to circuit parameter variations. Consider for example Theorem\ \ref{th:RLCMa}. Conditions 1) and 3) in that
theorem are defined by strict inequalities, while matrix $H_\mathrm{\beta \beta}$ in condition 2)
is for structural reasons symmetric (cf.\ Remark\ \ref{rem:recipro})
and its positive definiteness property is
robust with respect to small perturbations. Analogous considerations hold for the other two theorems.
\end{remark}

\begin{figure}[H]
\begin{center}
\begin{subfigure}{.9 \columnwidth}
  \includegraphics[width=\linewidth]{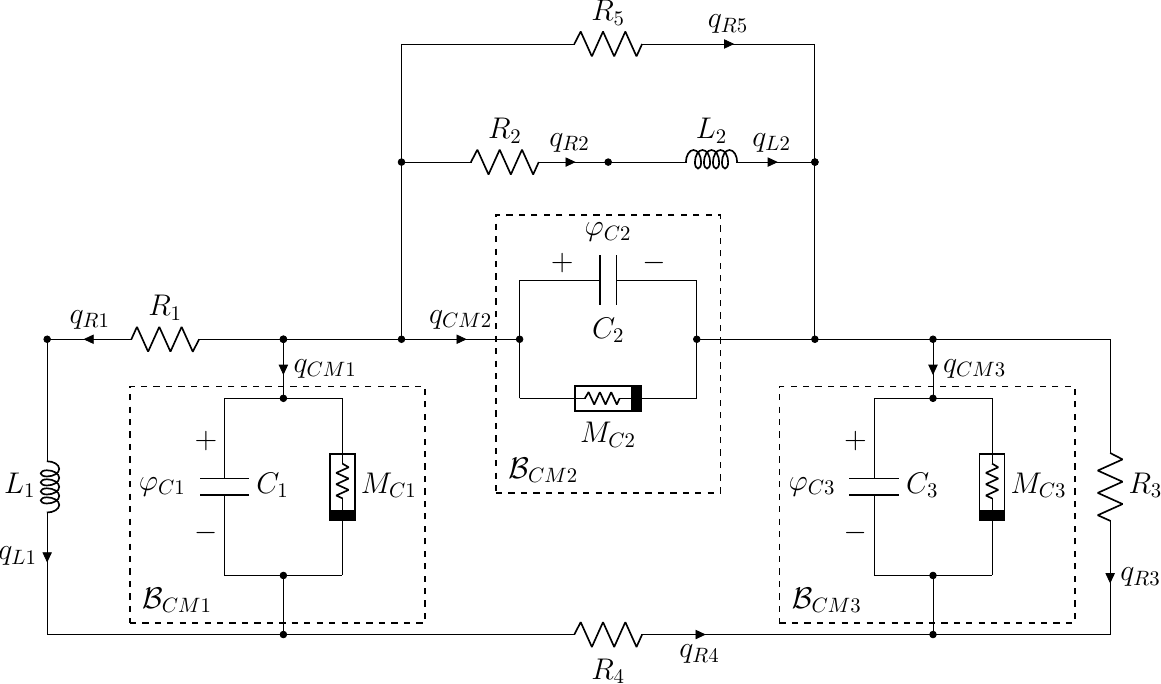}
  \caption{\small \small }
\end{subfigure}\\
\begin{subfigure}{0.57\columnwidth}
  \includegraphics[width=\linewidth]{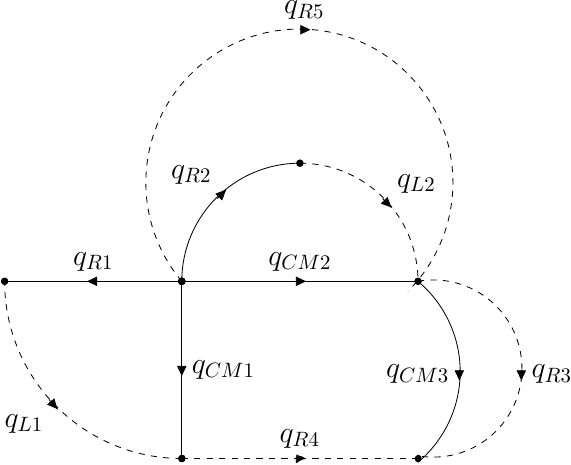}
  \caption{\small \small }
\end{subfigure} \\
\caption{\small (a) Memristor circuit $\mathfrak{N}$ in the class RLCM$a$
in Example\ 1.
(b) Associated digraph where tree branches are indicated by solid lines and
co-tree branches by dashed lines.
}
\label{fig:fluxcirc1}
\end{center}
\end{figure}

\begin{remark}[Convergence is manifold independent]
\label{rem:cond_mani}
It is stressed that the conditions for convergence in the theorems
are manifold independent, i.e., if they are satisfied for a given manifold,
then they are satisfied for any manifold.
\end{remark}

\begin{remark}[Active resistors and multiple stable EPs] We have seen that in any case the active (i.e., negative) resistors play
a crucial role to guarantee the existence of multiple stable EPs. This property is important if we wish to
use RLCM circuits to implement a content addressable memory (CAM) or to solve other signal processing tasks in real time.
\end{remark}

\section{Simulation examples}
\label{sect:ex}
\subsection{Example\ 1}

Consider the memristor circuit $\mathfrak{N}$ in the class RLCM$a$ shown in Fig.\ \ref{fig:fluxcirc1}(a)
(cf.\ Sect.\ \ref{sect:subc RLCMa}).
There are 3 capacitors, 2 inductors and 5 linear resistors, moreover, any capacitor has in parallel a flux-controlled memristor and no other
memristor is present. Choose for instance $\cT=\{ \cB_{CM1},\cB_{CM2},\cB_{CM3},R_1,R_2 \}$, $\cL=\{ \cB_{L1}, \cB_{L2}, R_3,R_4,R_5 \}$, $\cT'=\{ \cB_{CM1},\cB_{CM2},\cB_{CM3} \}$,
$\cL'=\{ R_3,R_4,R_5 \}$, $\cT-\cT'=\{ R_1,R_2 \}$ and $\cL-\cL'=\{ \cB_{L1},\cB_{L2} \}$
(Fig.\ \ref{fig:fluxcirc1}(b)).
It can be checked that Assumptions\ \ref{assu:case 2 RLCM 1}, \ref{assu:case 2 RLCM 1b} are satisfied.
The topological matrix $A$ in (\ref{P partition}) amounts to
\begin{equation}\label{Pex1}
    A=\begin{pmatrix}
        A_{CL} & A_{CR} \\
        A_{RL} & A_{RR} \\
      \end{pmatrix}=
      \left(
      \begin{array}{c c | c c c}
        -1 & 0 & 0 & 1 & 0 \\
        0 & -1 & 0 & -1 & -1 \\
        0 & 0 & -1 & -1 & 0 \\
        \hline
        1 & 0 & 0 & 0 & 0 \\
        0 & 1 & 0 & 0 & 0 \\
      \end{array}
      \right).
\end{equation}

Linear resistors belonging to $\cT-\cT'$ are chosen as $R_1=R_2=0.5$ Ohm, while
resistors in $\cL'$ are $R_3=-0.2$ Ohm, $R_4=5$ Ohm and $R_5=-0.1$ Ohm (normalized values).
On this basis, $R_\cT=\mathrm{diag}(0.5,0.5)$, $R_\cL=\mathrm{diag}(-0.2,5,-0.1)$ and matrix $H$ in (\ref{H cas2}) is given by
\begin{equation}\label{Hex2}
    H \! \! = \! \! \begin{pmatrix}
        H_{\alpha \alpha} & H_{\alpha \beta} \\
        -H_{\alpha \beta}^\top & H_{\beta \beta} \\
      \end{pmatrix} \! \! = \! \!
      \left (
      \begin{array}{c c c | c c}
        0.2 & -0.2 & -0.2 & -1 & 0 \\
        -0.2 & -9.8 & 0.2 & 0 & -1 \\
        -0.2 & 0.2 & -4.8 & 0 & 0 \\
        \hline
        1 & 0 & 0 & 0.5 & 0 \\
        0 & 1 & 0 & 0 & 0.5 \\
      \end{array}
      \right ) \! .
\end{equation}

\begin{figure}[t]
  \centering
\includegraphics[width=0.57\linewidth]{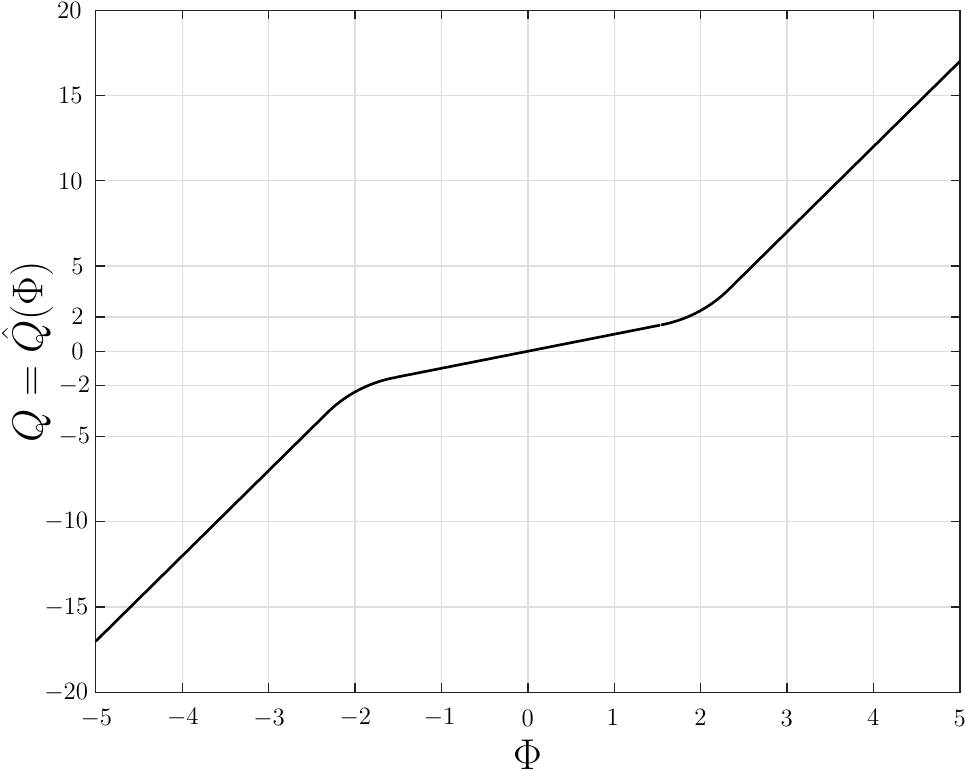}
\caption{\small Smooth $C^1$ approximation of the piecewise linear memristor characteristic $Q=\hat Q(\Phi)
=10 \Phi -(9/2)
(|\Phi+2|-|\Phi-2|)$ used in the examples.}
\label{fig:eleFCD}
\end{figure}

The capacitor values have been set to $C_1=0.1$ F, $C_2=0.25$ F and $C_3=0.2$ F,
while the inductors are $L_1=k_L \cdot 1.5$ H and $L_2=k_L$ H, where $k_L>0$
is a parameter which is varied in the simulations.
All flux-controlled memristors have the same characteristic given by a $C^1$ approximation
of the piecewise-linear function $Q=\hat Q(\Phi)=10 \Phi -(9/2)
(|\Phi+2|-|\Phi-2|)$ (Fig.\ \ref{fig:eleFCD}). In
particular, we have $G_{mi}=G_\mathrm{\mathrm{m}}=10$ Ohm$^{-1}$, $i=1,2,3$
(cf.\ (\ref{Gmi})). All elements have normalized values. We have that
Assumptions\ \ref{assu:case 2 RLCM 1}, \ref{assu:case 2 RLCM 1b} are satisfied.

Since linear resistors belonging to $\cT'$ are positive and
${\rm rank}(A_{CL}\ \  A_{CR})=3=n_C$, then $H_{\beta \beta}$ is positive
definite, as it can be checked from (\ref{Hex2}). Hence, Condition\ 2)
in Theorem\ \ref{th:RLCMa} is satisfied for any $k_L$. Instead,
due to the negative resistors in $\cL'$, $H_{\alpha \alpha}$ is indefinite.
However, $\lambda_\mathrm{min}(H_{\alpha \alpha})=-9.8$,
hence $G_\mathrm{m} =10>|\lambda_\mathrm{min}(H_{\alpha \alpha})|=9.8$.
Consequently,
Condition\ 1) in Theorem\ \ref{th:RLCMa} ensuring boundedness of solutions is
satisfied for any $k_L$.
Finally, it can be checked that Condition\ 3) in Theorem\ \ref{th:RLCMa}
is satisfied for small inductors, namely, when $k_L<k_{L\mathrm{max}}=0.775$,
while this condition fails when $k_L>k_{L\mathrm{max}}$.
Since $H_{\beta \beta}$ is positive definite and, hence, non-singular,
the circuit admit invariants of motion and invariant manifolds as discussed in
Sect.\ \ref{sect:subc RLCMa}.
It is worth to recall that Conditions\ 1), 2) and 3) in Theorem\ \ref{th:RLCMa} are
manifold independent (cf.\ Remark\ \ref{rem:cond_mani}). We also note that,
as it is verified in the next simulations, $k_L$ does not affect
number of EPs of SEs (\ref{SEs FCD RLCM case 1 xy}) while
it affects their stability properties.
The manifold index $Q_0$ affects both the number and stability of EPs
(cf.\ (\ref{Q0 case 1})).

The SEs describing the circuit in the FCD are given in (\ref{SEs FCD RLCM case 1 xy}),
where $Q_0 \in \R^3$ is the manifold index.
In a first experiment we have let $Q_0=0$, i.e., we have considered the
dynamics on the invariant manifold $\cM(Q_0=0)$ and varied parameter $k_L$.
Via a suitable MATLAB program, we have found that the number of EPs of
the SEs (\ref{SEs FCD RLCM case 1 xy}) is equal to 9.

a) Let $k_L=0.01<k_{L\mathrm{max}}$. In this case, since we have $\nu=0.77<1$,
also Condition\ 3) for convergence
in Theorem\ \ref{th:RLCMa} is satisfied
(cf.\ (\ref{norm2 RLCMa})). For this value of $k_L$ there are 4 stable
EPs and 5 unstable EPs. MATLAB simulations show that
any solution is bounded and converges to an EP in accordance with
Theorem\ \ref{th:RLCMa}. Specifically, Fig.\ \ref{fig:fluxcirc1f}
(resp., Fig.\ \ref{fig:chargecirc1c}) shows the capacitor
fluxes (resp., inductor charges), for 4 solutions of
(\ref{SEs FCD RLCM case 1 xy}) converging to different
stable EPs. Also shown are the capacitor voltages (Fig.\ \ref{fig:voltagescirc1v})
and inductor currents (Fig.\ \ref{fig:currentscirc1i}), which  tend to 0 in accordance
with Theorem\ \ref{th:conv VCD RlCMa}. Figure\ \ref{fig:statespacecirc1} shows the behavior of a number
of convergent solutions starting at different initial conditions in the phase space
in the VCD.

b) Let $k_L=0.1>k_{L\mathrm{max}}$. Condition\ 3) in Theorem\ \ref{th:RLCMa} is not satisfied
since $\nu=2.45>1$. We still have 4 stable EPs. From simulations it has been
verified that also in this case solutions appear to be convergent. This is not a contradiction
since the conditions in Theorem\ \ref{th:RLCMa} are only sufficient for convergence.

c) Finally, let $k_L=1>k_{L\mathrm{max}}$. Condition\ 3) in Theorem\ \ref{th:RLCMa} is not satisfied
since $\nu=7.75>1$. There are still 4 stable EPs. In this case, in addition to convergent
solutions there are also solutions that display nonvanishing oscillations
(cf.\ Fig.\ \ref{fig:circ1oscill}). This means that, for the considered circuit, there is coexistence
of stable EPs and a stable limit cycle.

In a final experiment, we have changed the manifold by letting
$Q_0=\zeta(0.8, -0.5, 0.2)$, where $\zeta$ ia a parameter. Via a MATLAB program,
we have found the following. When $\zeta=5$ the
circuit still has 4 stable EPs and 5 unstable EPs; when
$\zeta=6$, there are
3 stable EPs and 4 unstable EPs; when $\zeta=7$, we have
2 stable EPs and three unstable EPs;
when $\zeta=8$, there are
2 stable EPs and one unstable EP and, finally, when
$\zeta=44$ there is only one stable EP. This shows that, as
expected, the number of stable and unstable EPs is largely
dependent upon the manifold.

%
%
%
%
%
%
%

%

\begin{figure}[H]
\begin{center}
\begin{subfigure}{1.\columnwidth}
  \includegraphics[width=\linewidth]{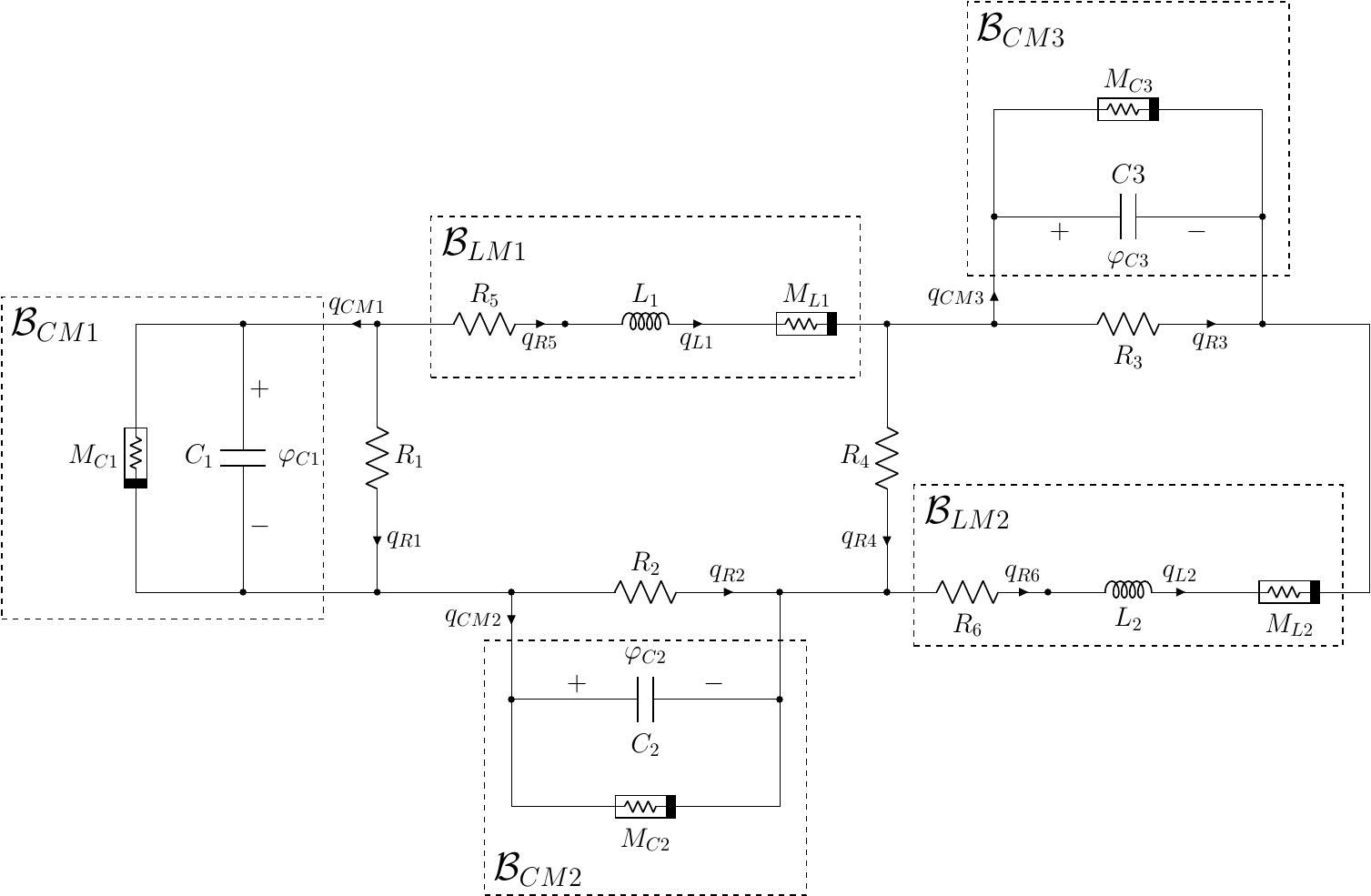}
  \caption{\small \small }
\end{subfigure}\\
\begin{subfigure}{0.7\columnwidth}
  \includegraphics[width=\linewidth]{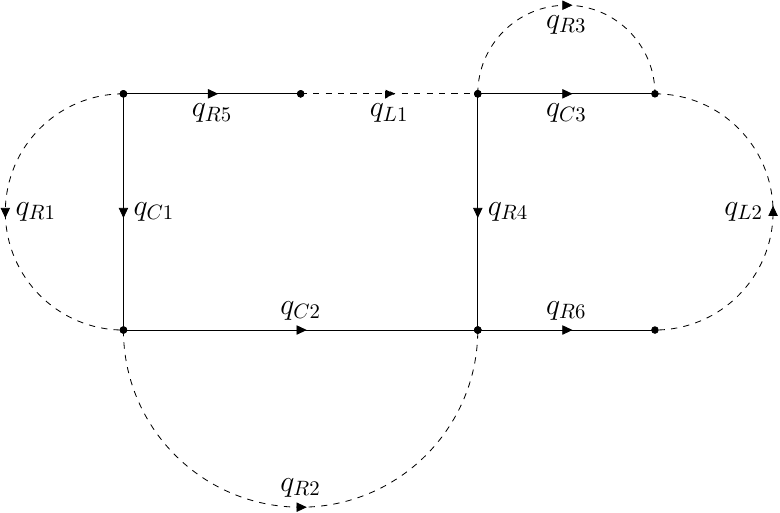}
  \caption{\small \small }
\end{subfigure} \\
\caption{\small (a) Memristor circuit $\mathfrak{N}$ in the class RLCM$c$ in
Example\ 2.
(b) Associated digraph where tree branches are indicated by solid lines and
co-tree branches by dashed lines.
}
\label{fig:fluxcirc2}
\end{center}
\end{figure}


\begin{figure}[t]
\begin{center}
\begin{subfigure}{0.49\columnwidth}
  \includegraphics[width=\linewidth]{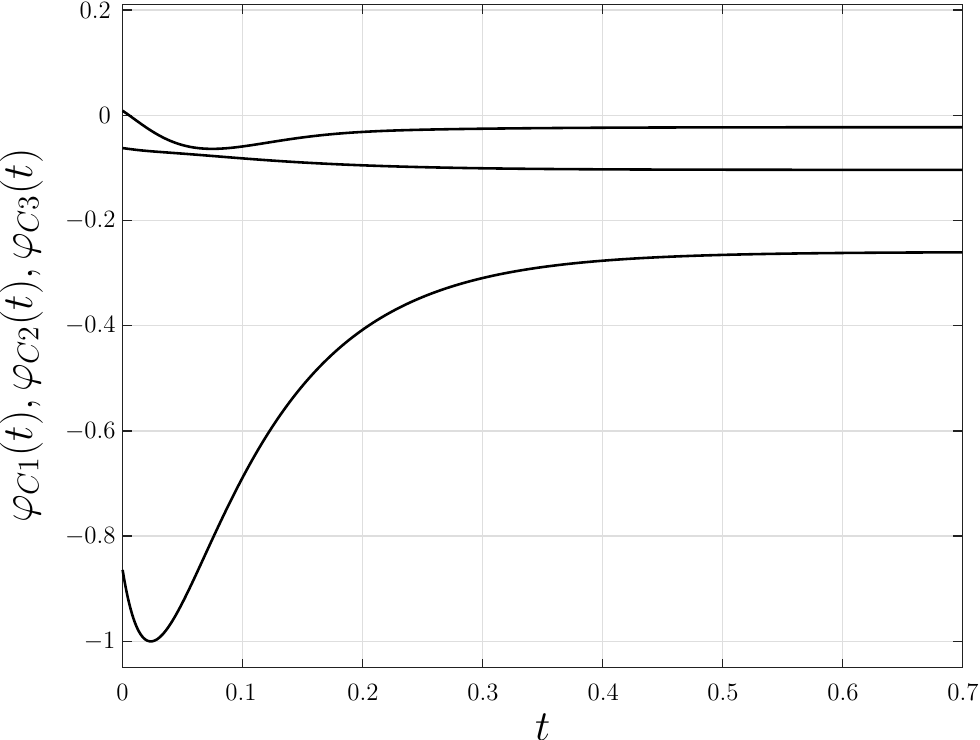}
  \caption{\small \small }
\end{subfigure}
\begin{subfigure}{0.49\columnwidth}
  \includegraphics[width=\linewidth]{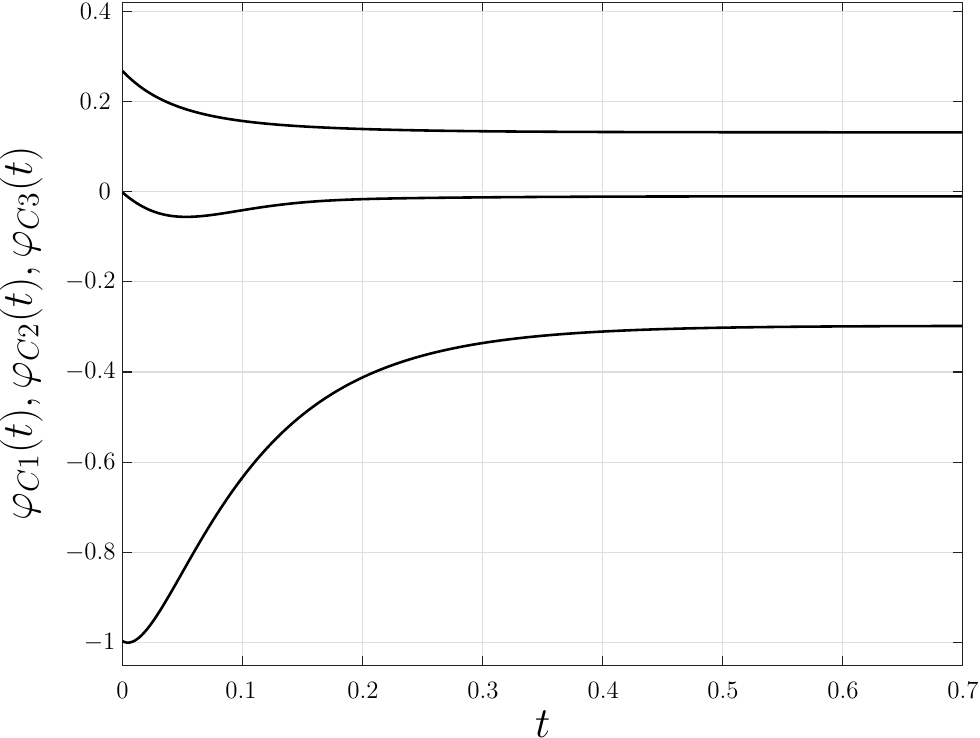}
  \caption{\small \small }
\end{subfigure} \\
\begin{subfigure}{0.49\columnwidth}
  \includegraphics[width=\linewidth]{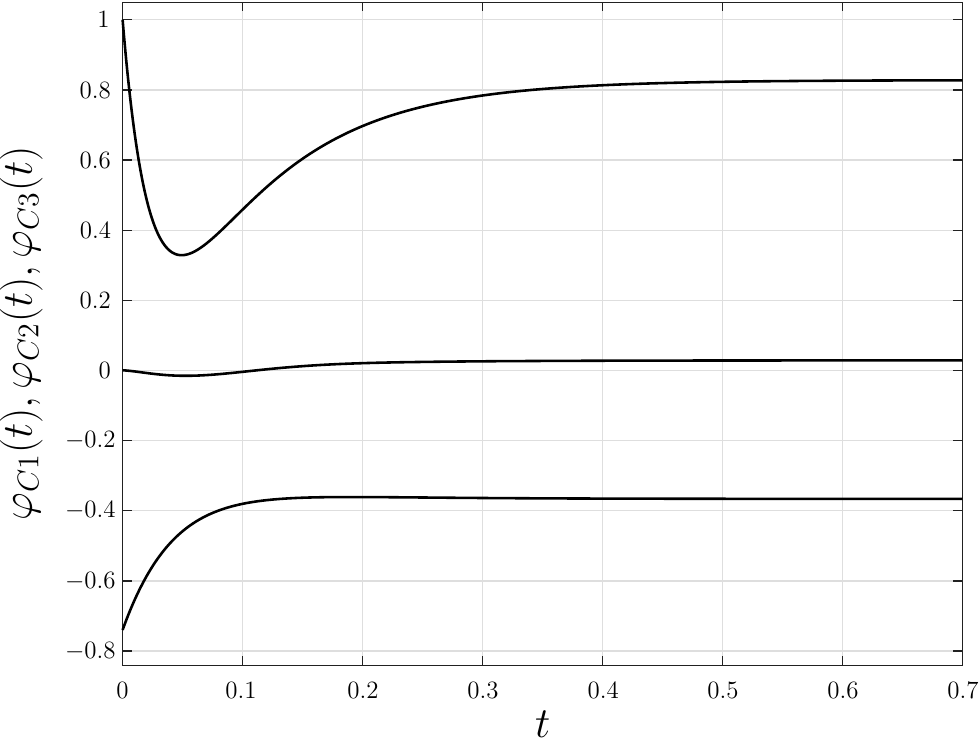}
  \caption{\small \small }
\end{subfigure}
\begin{subfigure}{0.49\columnwidth}
  \includegraphics[width=\linewidth]{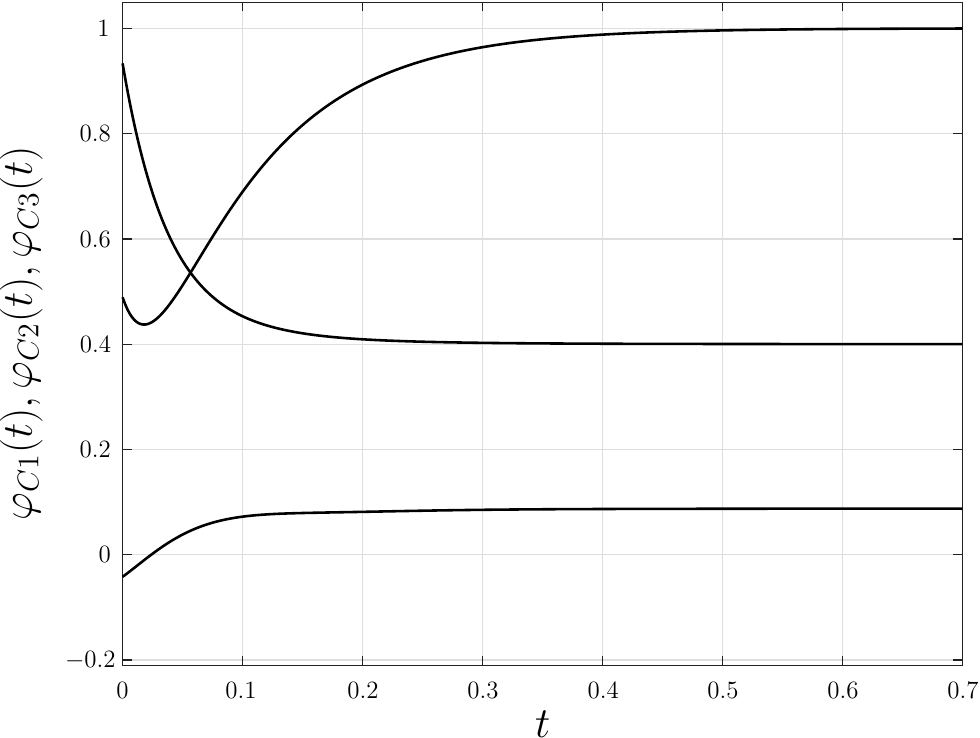}
  \caption{\small \small }
\end{subfigure}\\
\caption{\small Capacitor fluxes $\varphi_{Ci}$, $i=1,2,3$, for four solutions
of the memristor circuit in Example\ 1. Case $k_L=0.01$. Horizontal axis: time. Vertical axis: fluxes in
normalized values.}
\label{fig:fluxcirc1f}
\end{center}
\end{figure}

\begin{figure}[t]
\begin{center}
\begin{subfigure}{0.49\columnwidth}
  \includegraphics[width=\linewidth]{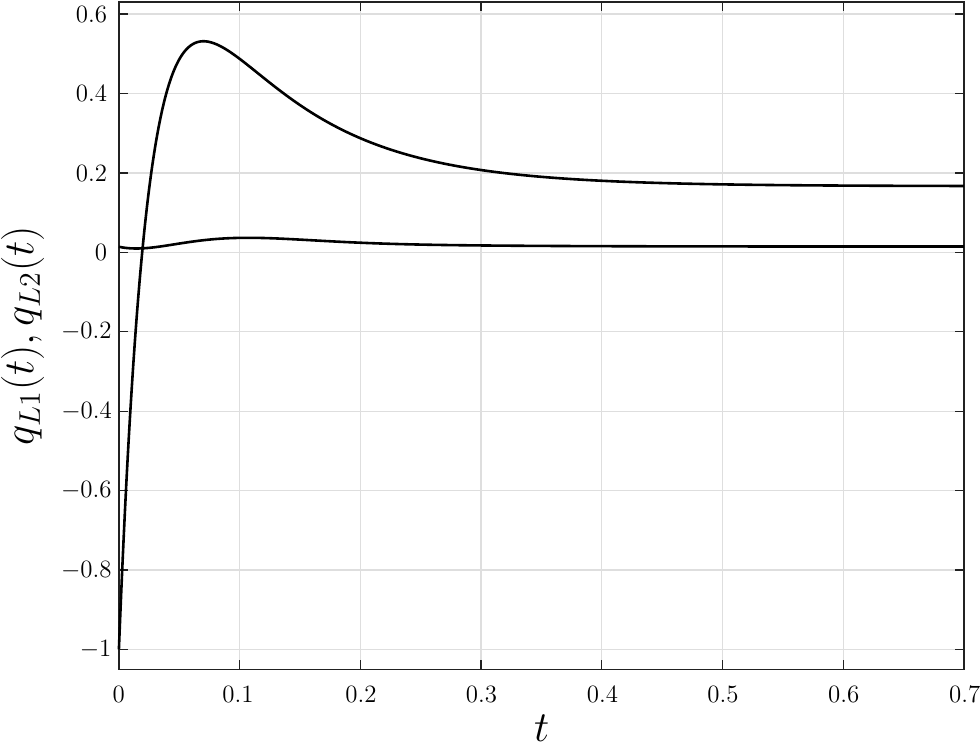}
  \caption{\small \small }
\end{subfigure}
\begin{subfigure}{0.49\columnwidth}
  \includegraphics[width=\linewidth]{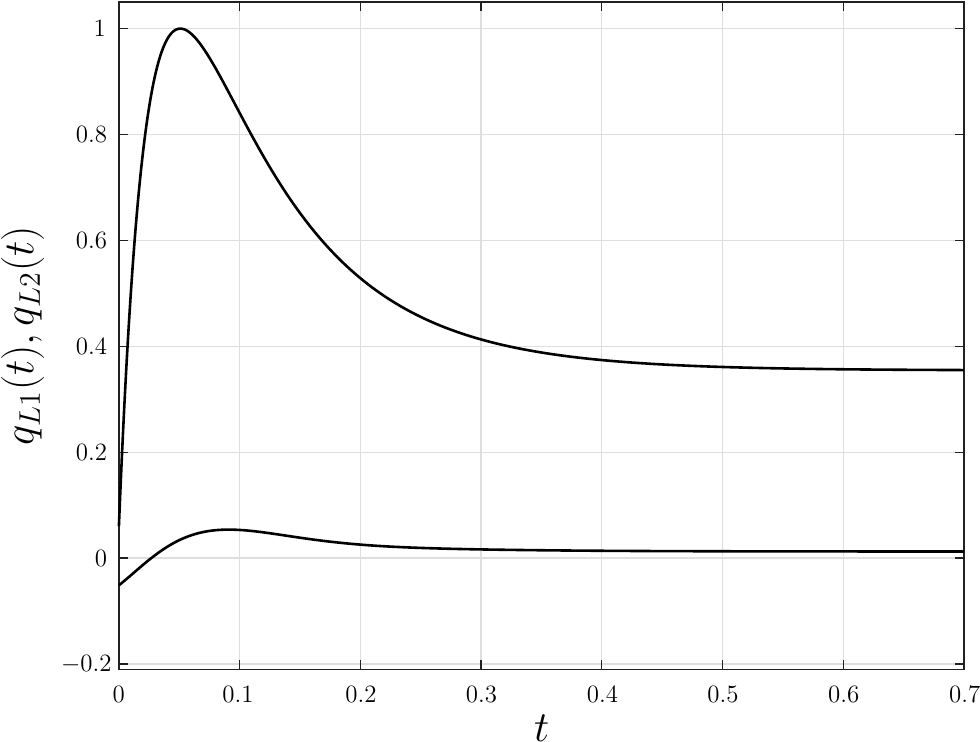}
  \caption{\small \small }
\end{subfigure} \\
\begin{subfigure}{0.49\columnwidth}
  \includegraphics[width=\linewidth]{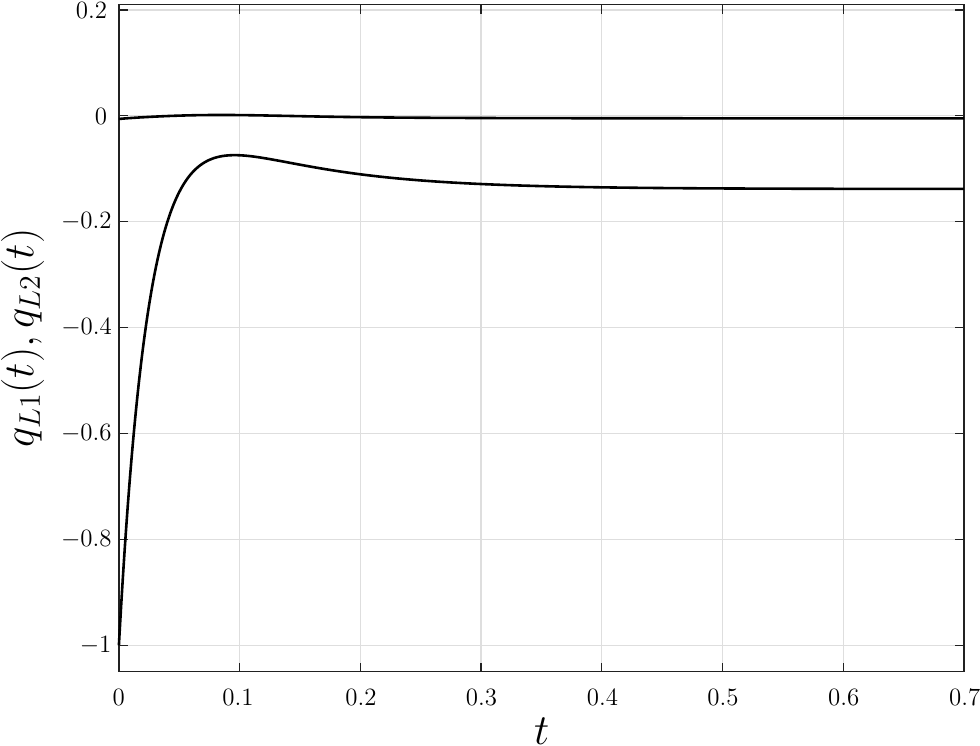}
  \caption{\small \small }
\end{subfigure}
\begin{subfigure}{0.49\columnwidth}
  \includegraphics[width=\linewidth]{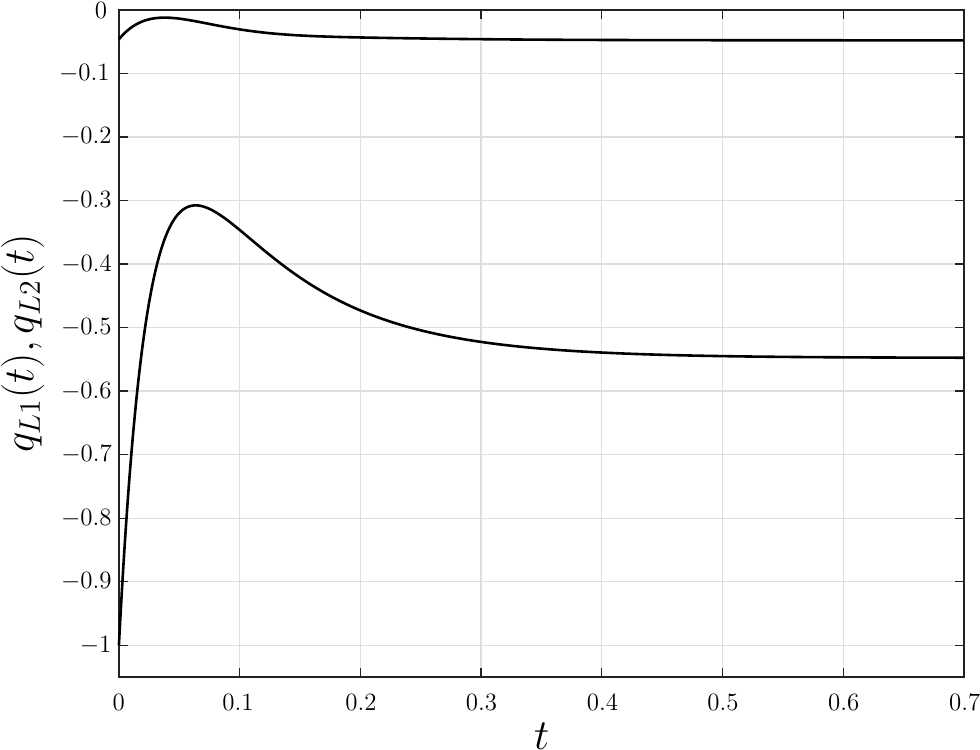}
  \caption{\small \small }
\end{subfigure}\\
\caption{\small Inductor charges $q_{Li}$, $i=1,2$, for four solutions
of the memristor circuit in Example\ 1. Case $k_L=0.01$. Horizontal axis: time. Vertical axis:
inductor charges in normalized values.}
\label{fig:chargecirc1c}
\end{center}
\end{figure}

\begin{figure}[t]
\begin{center}
\begin{subfigure}{0.49\columnwidth}
  \includegraphics[width=\linewidth]{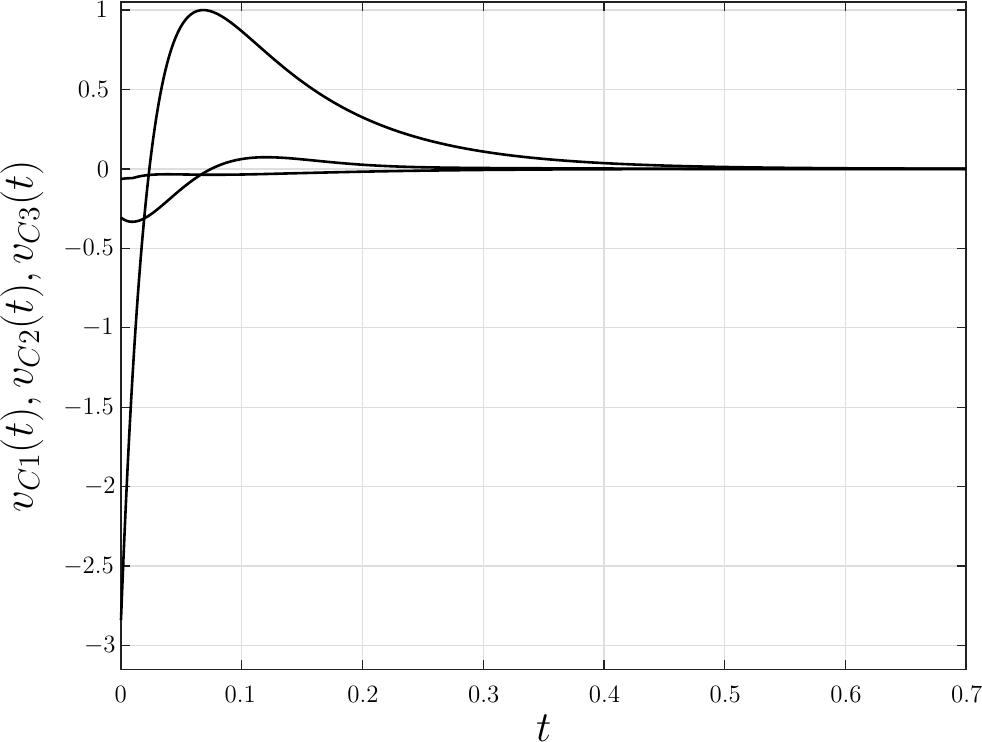}
  \caption{\small \small }
\end{subfigure}
\begin{subfigure}{0.49\columnwidth}
  \includegraphics[width=\linewidth]{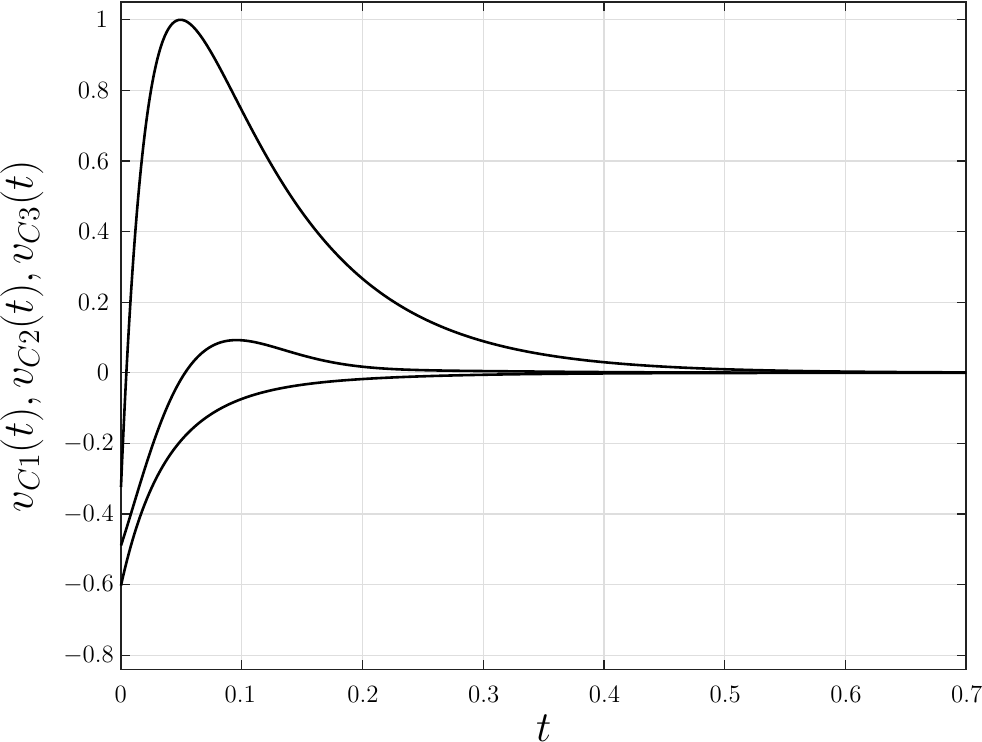}
  \caption{\small \small }
\end{subfigure} \\
\begin{subfigure}{0.49\columnwidth}
  \includegraphics[width=\linewidth]{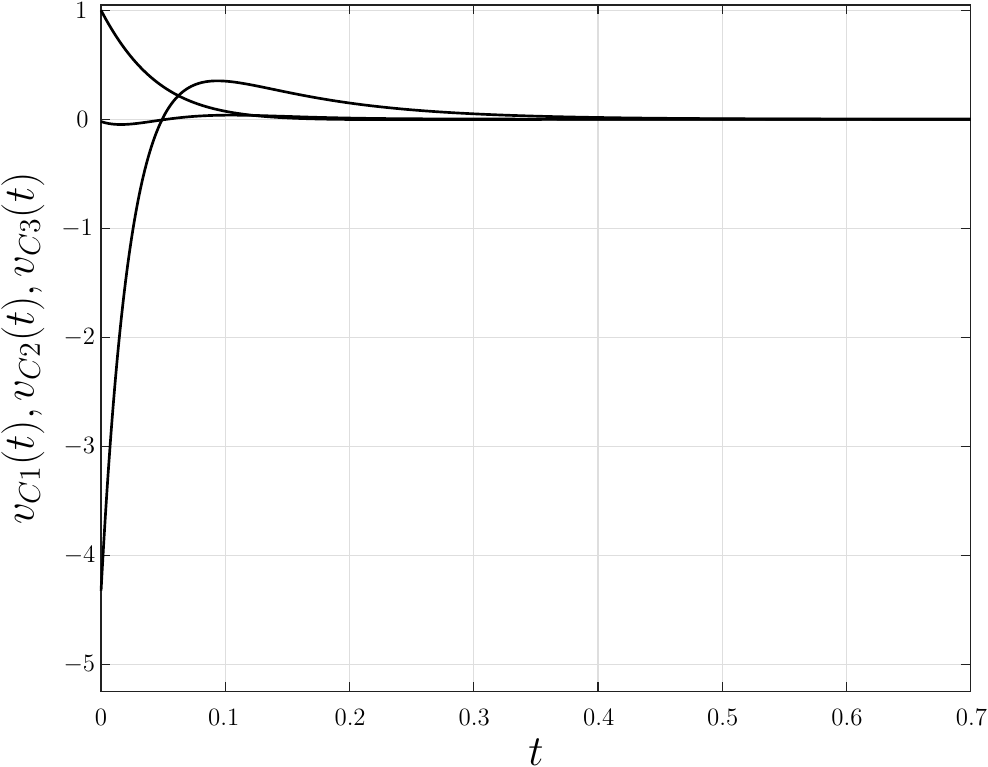}
  \caption{\small \small }
\end{subfigure}
\begin{subfigure}{0.49\columnwidth}
  \includegraphics[width=\linewidth]{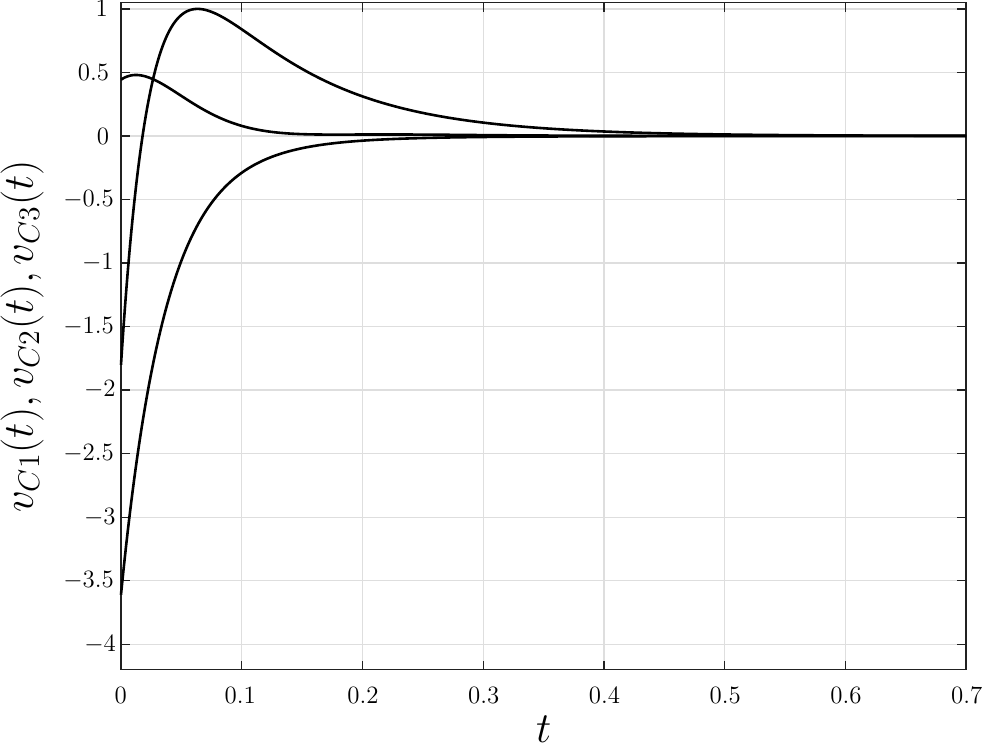}
  \caption{\small \small }
\end{subfigure}\\
\caption{\small Capacitor voltages $v_{Ci}$, $i=1,2,3$, for four solutions
of the memristor circuit in Example\ 1. Case $k_L=0.01$. Horizontal axis: time. Vertical axis: capacitor
voltages in
normalized values.}
\label{fig:voltagescirc1v}
\end{center}
\end{figure}

\begin{figure}[t]
\begin{center}
\begin{subfigure}{0.49\columnwidth}
  \includegraphics[width=\linewidth]{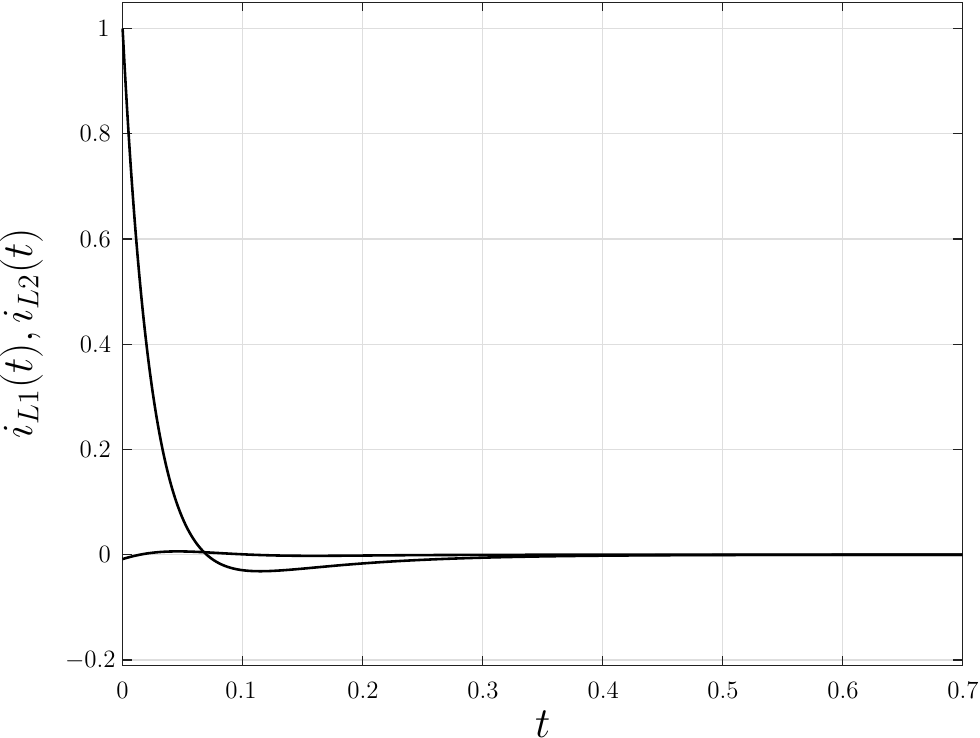}
  \caption{\small \small }
\end{subfigure}
\begin{subfigure}{0.49\columnwidth}
  \includegraphics[width=\linewidth]{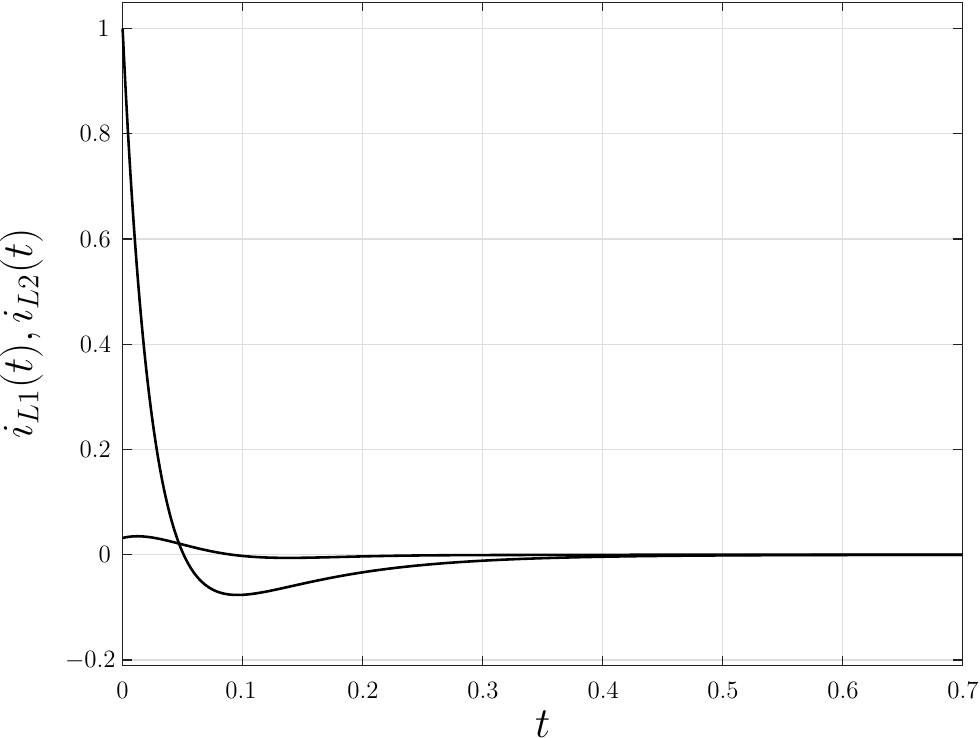}
  \caption{\small \small }
\end{subfigure} \\
\begin{subfigure}{0.49\columnwidth}
  \includegraphics[width=\linewidth]{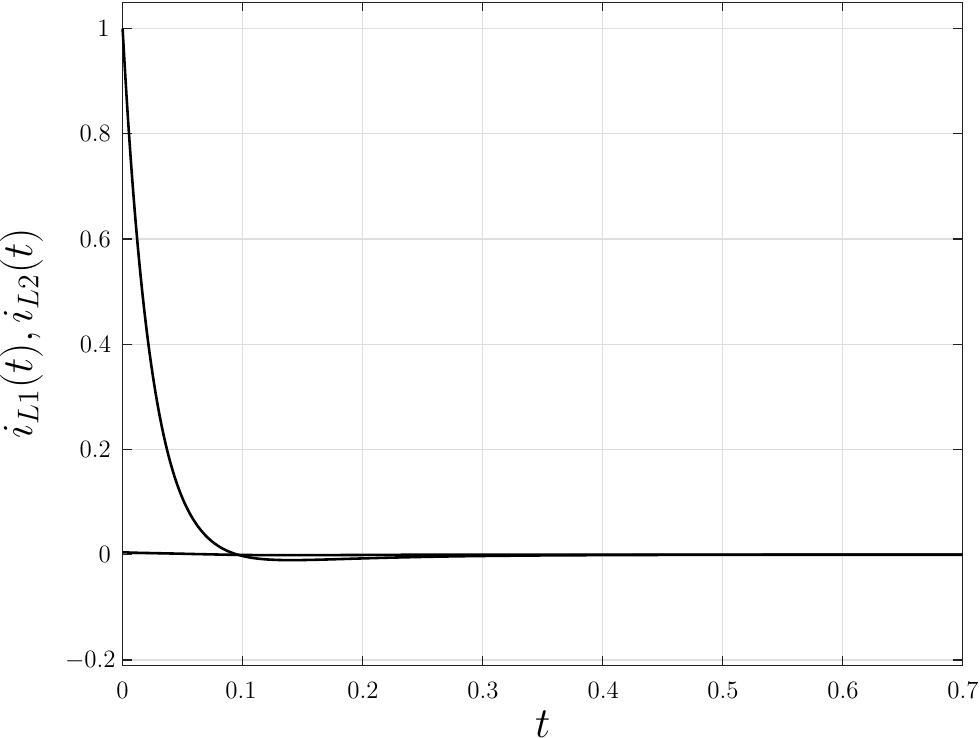}
  \caption{\small \small }
\end{subfigure}
\begin{subfigure}{0.49\columnwidth}
  \includegraphics[width=\linewidth]{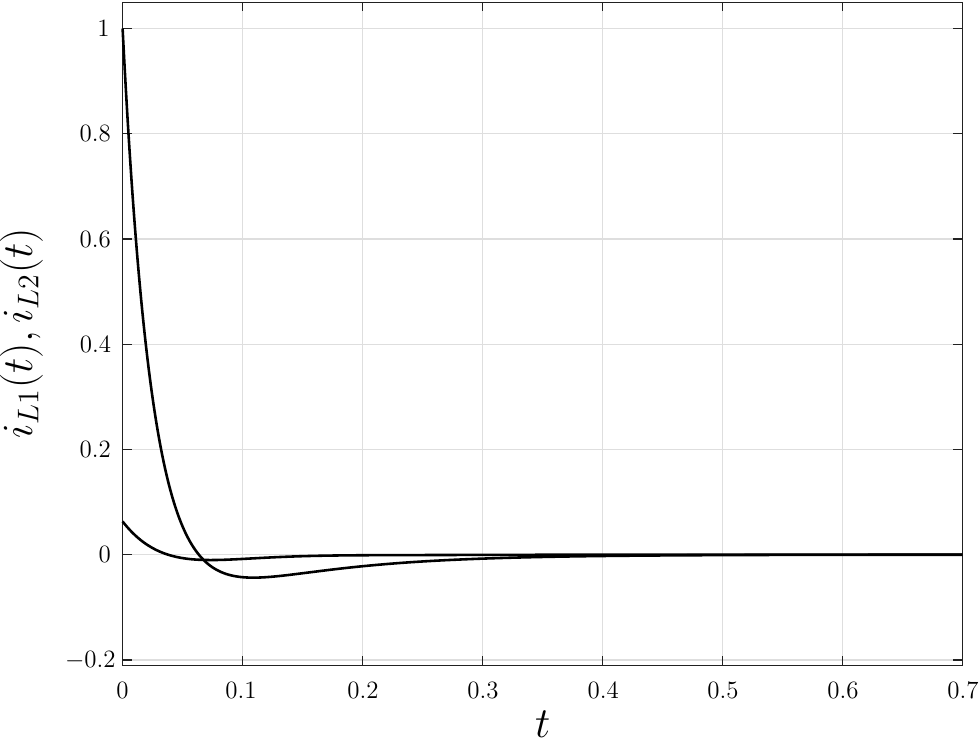}
  \caption{\small \small }
\end{subfigure}\\
\caption{\small Inductor currents $i_{Li}$, $i=1,2$, for four solutions
of the memristor circuit in Example\ 1. Case $k_L=0.01$. Horizontal axis: time. Vertical axis: inductor currents in
normalized values.}
\label{fig:currentscirc1i}
\end{center}
\end{figure}

\begin{figure}[t]
\begin{center}
\begin{subfigure}{0.49\columnwidth}
  \includegraphics[width=\linewidth]{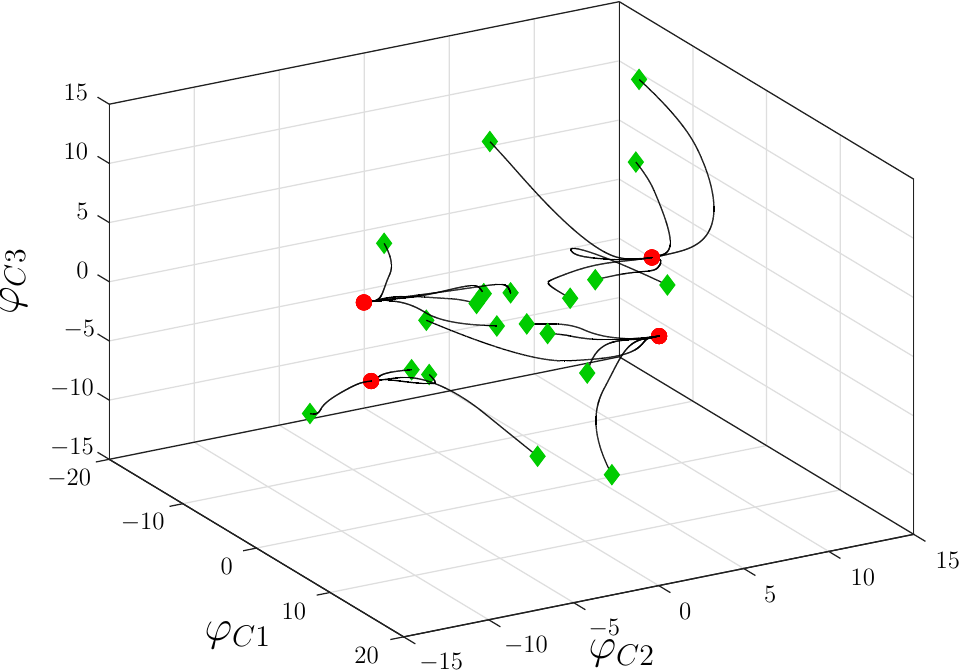}
  \caption{\small \small }
\end{subfigure}
\begin{subfigure}{0.49\columnwidth}
  \includegraphics[width=\linewidth]{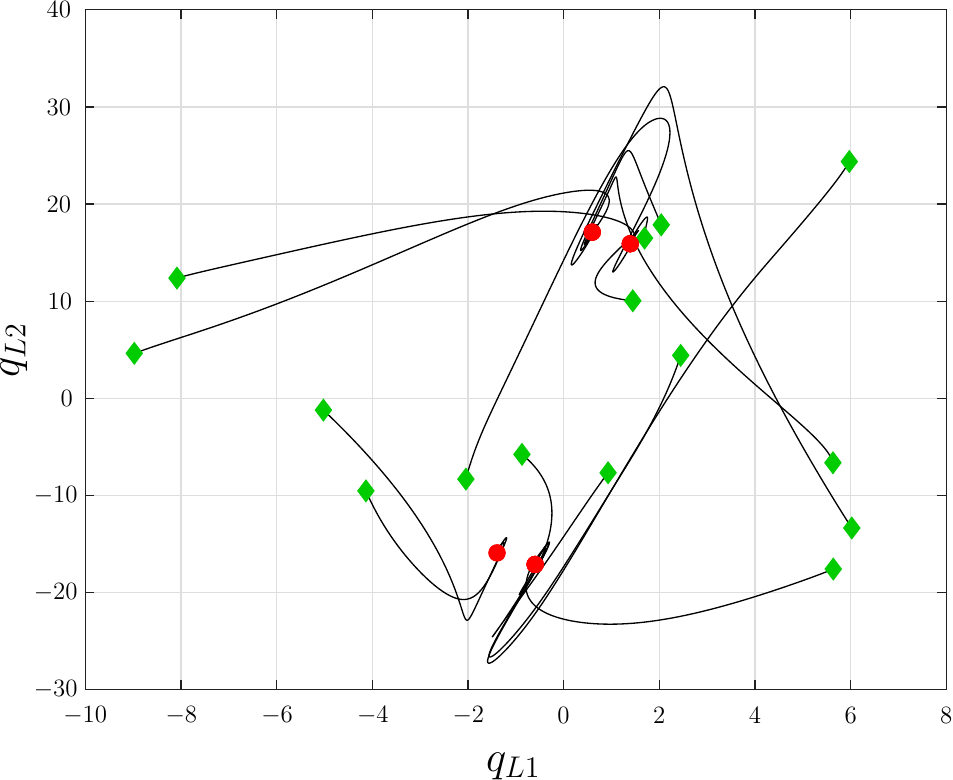}
  \caption{\small \small }
\end{subfigure} \\
\begin{subfigure}{0.49\columnwidth}
  \includegraphics[width=\linewidth]{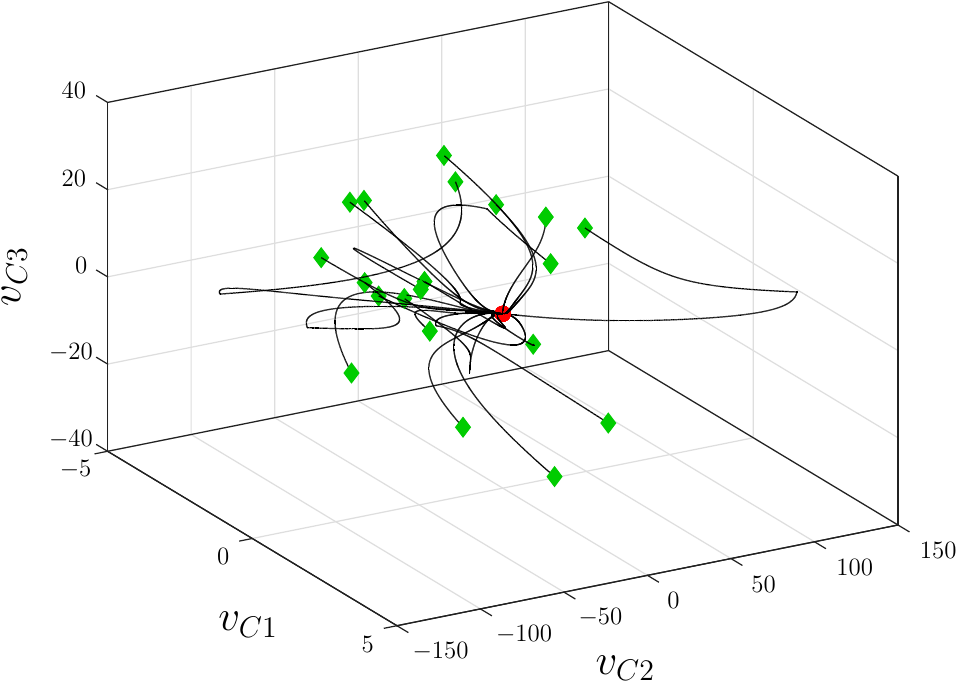}
  \caption{\small \small }
\end{subfigure}
\begin{subfigure}{0.49\columnwidth}
  \includegraphics[width=\linewidth]{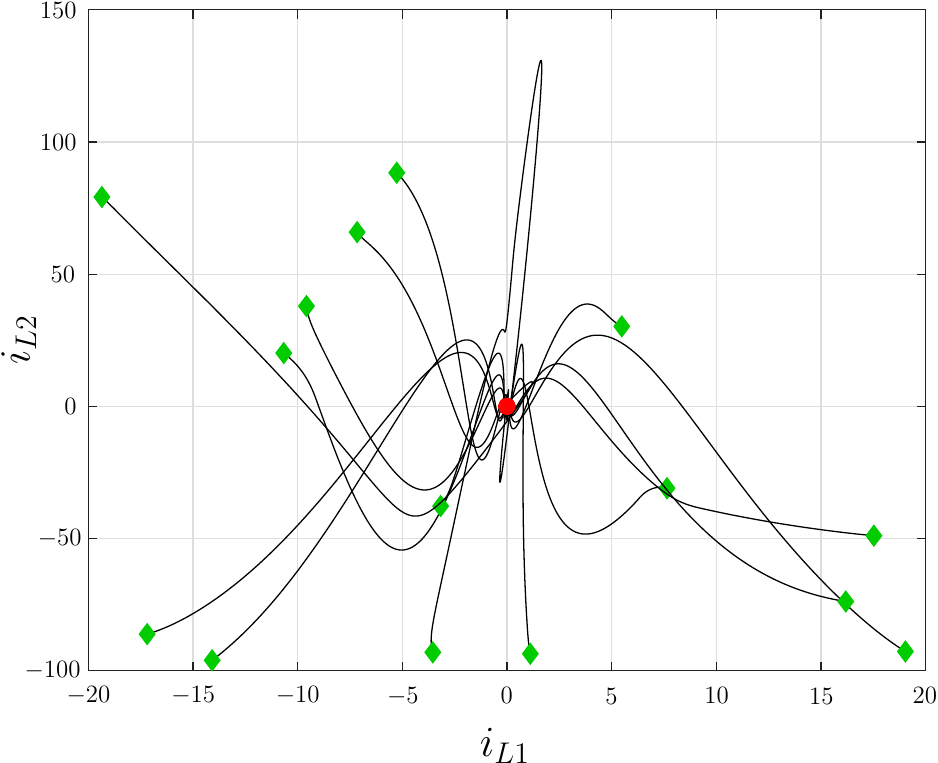}
  \caption{\small \small }
\end{subfigure}\\
\caption{\small Phase-space evolution of solutions starting at different initial conditions
for the memristor circuit in Example\ 1. Case $k_L=0.01$. Initial conditions are denoted by green points and
EPs by red points.
(a) Capacitor fluxes, (b) inductor charges,
(c) capacitor voltages and (d) inductor currents.}
\label{fig:statespacecirc1}
\end{center}
\end{figure}

\begin{figure}[t]
\begin{center}
\begin{subfigure}{0.49\columnwidth}
  \includegraphics[width=\linewidth]{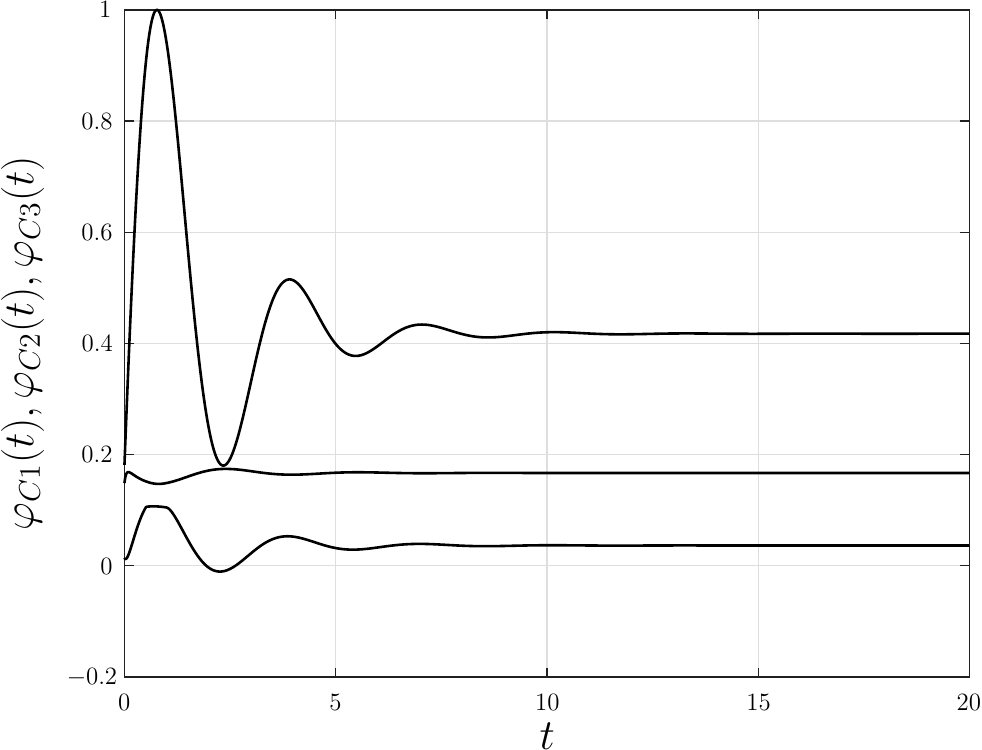}
  \caption{\small \small }
\end{subfigure}
\begin{subfigure}{0.49\columnwidth}
  \includegraphics[width=\linewidth]{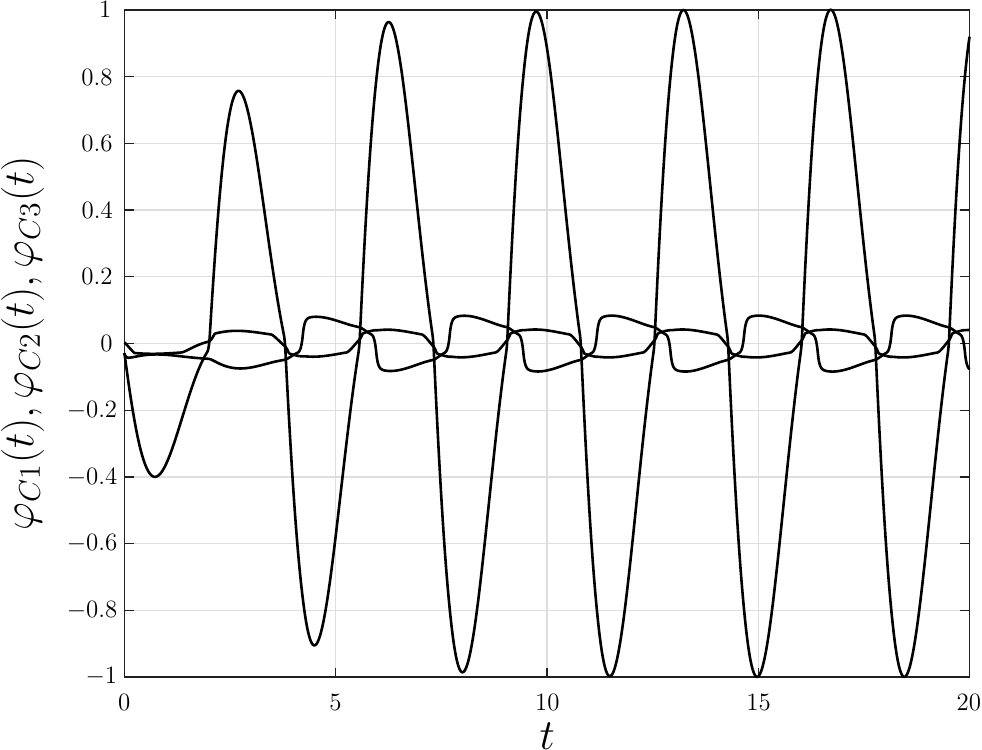}
  \caption{\small \small }
\end{subfigure} \\
\begin{subfigure}{0.49\columnwidth}
  \includegraphics[width=\linewidth]{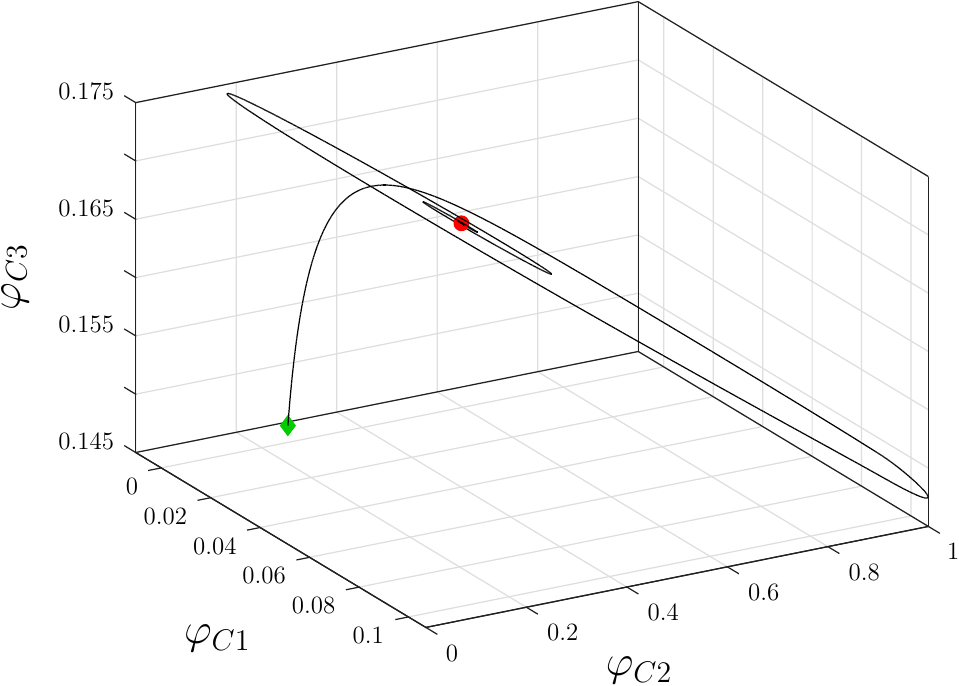}
  \caption{\small \small }
\end{subfigure}
\begin{subfigure}{0.49\columnwidth}
  \includegraphics[width=\linewidth]{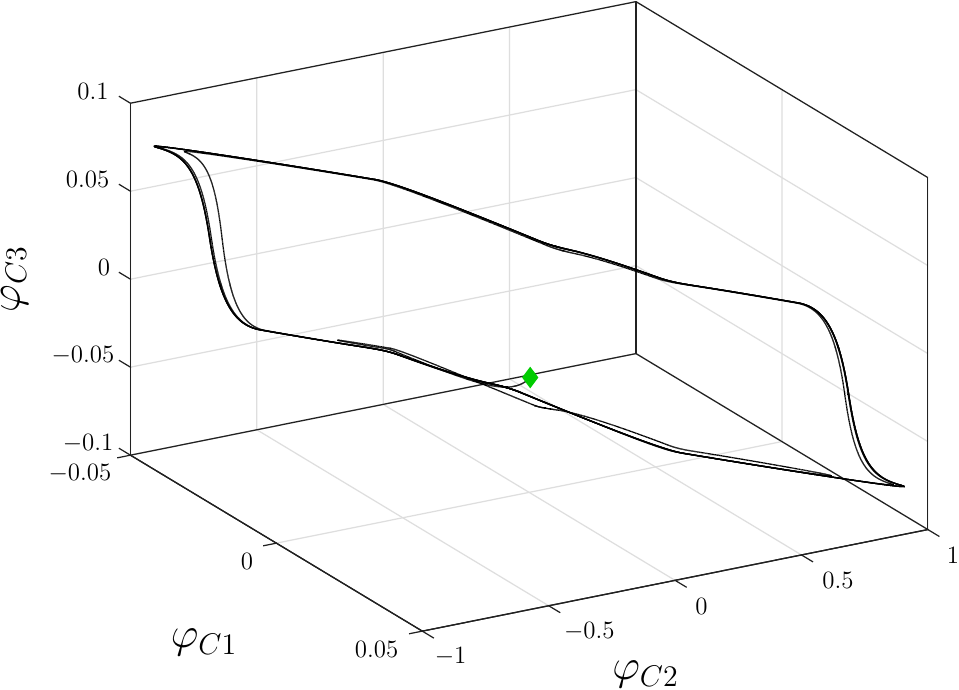}
  \caption{\small \small }
\end{subfigure}\\
\caption{\small Capacitor fluxes $\varphi_{Ci}(t)$, $i=1,2,3,$ of the memristor
circuit in Example\ 1. Case $k_L=1$. (a) Solution that displays vanishing oscillations
and converges to a stable EP. (b) A different solution that
displays non-vanishing oscillations and converges to a
stable limit cycle. (c) State-space
evolution of the solution in (a). (d) State-space evolution of the solution in (b).}
\label{fig:circ1oscill}
\end{center}
\end{figure}

\subsection{Example\ 2}

Consider the memristor circuit $\mathfrak{N}$ in the class RLCM$c$ shown in Fig.\ \ref{fig:fluxcirc2} (cf.\ Sect.\ \ref{sect:subc RLCMc}).
There are 3 capacitors, 2 inductors and 6 linear resistors, moreover, any capacitor has in parallel a flux-controlled memristor and any inductor has in series a charge-controlled memristor.
Choose $\cT=\{ C_1,C_2,C_3,R_4,R_5,R_6 \}$, $\cL=\{ L_1, L_2, R_1,R_2,R_3 \}$, $\cT'=\{ C_1, C_2, C_3 \}$,
$\cL'=\{ R_1,R_2,R_3 \}$, $\cT-\cT'=\{ R_4,R_5,R_6 \}$ and $\cL-\cL'=\{ L_1,L_2 \}$. It can be checked that Assumptions\ \ref{assu:case 2 RLCM 1}, \ref{assu:case 2 RLCM 1b} are satisfied, hence the capacitor fluxes together with the inductor charges are a complete set of variables in the FCD.
The topological matrix $A$ in (\ref{P partition}) is given by
$$
    A=\begin{pmatrix}
        A_{CL} & A_{CR} \\
        A_{RL} & A_{RR} \\
      \end{pmatrix}=
      \left(
      \begin{array}{c c | c c c}
        -1 & 0 & -1 & 0 & 0 \\
        -1 & 0 & 0 & -1 & 0 \\
        0 & -1 & 0 & 0 & -1 \\
        \hline
        1 & 1 & 0 & 0 & 0 \\
        1 & 0 & 0 & 0 & 0 \\
        0 & 1 & 0 & 0 & 0 \\
      \end{array}
      \right).
$$

Linear resistors belonging to $\cT-\cT'$ are chosen as $R_4=0.04,R_5=0.02,R_6=0.05$ Ohm, while
resistors in $\cL'$ are $R_1=-0.6,R_2=-0.8,R_3=-0.5$ Ohm.
On this basis, matrix $H$ is expressed as
\begin{equation}\label{Hex1}
    H\! \! = \! \! \begin{pmatrix}
        H_{\alpha \alpha} & H_{\alpha \beta} \\
        -H_{\alpha \beta}^\top & H_{\beta \beta} \\
      \end{pmatrix} \! \! = \! \!
      \left (
      \begin{array}{c c c | c c}
        -1.667 & 0 & 0 & -1 & 0 \\
        0 & -1.25 & 0 & -1 & 0 \\
        0 & 0 & -2 & 0 & -1 \\
        \hline
        1 & 1 & 0 & 0.06 & 0.04 \\
        0 & 0 & 1 & 0.04 & 0.09 \\
      \end{array}
      \right ).
\end{equation}

The capacitor values have been set to $C_1=C_2=C_3=1$ F,
while the inductors are $L_1=L_2=k_L$ H, where $k_L>0$
is a parameter.
Any flux-controlled memristor (resp., charge-controlled memristor)
has the same characteristic given by a $C^1$ approximation
of the piecewise-linear function $Q=\hat Q(\Phi)=3.5 \Phi -(2.5/2)
(|\Phi+2|-|\Phi-2|)$ (resp., $\Phi=\hat \Phi(Q)=3.5 Q -(2.5/2)
(|Q+2|-|Q-2|)$). Assumptions\ \ref{assu:memp}, \ref{assu:memc} are
satisfied. In
particular, we have $G_{mi}=G_\mathrm{\mathrm{m}}=3.5$ Ohm$^{-1}$, $i=1,2,3$
and $R_{mi}=3.5$ Ohm, $i=1,2$.

Matrix $H_{\beta \beta}$ is positive definite. For these values we have that
conditions (\ref{mu1}), (\ref{mu2}) in Theorem\ \ref{th_RLCMc} are satisfied
if $k_L< k_{L\rm{max}}=1.588$. Moreover, since $G_\mathrm{\mathrm{m}}=3.5
> \| H_{\alpha \alpha} , H_{\alpha \beta} \|_\infty=3$ and
$
R_\mathrm{\mathrm{m}}=3.5> \|- H_{\alpha \beta}^\top  ,  H_{\beta \beta}  \|_\infty
=2.1,
$
condition (\ref{RLMC_condition1}) in Proposition\ \ref{th_RLCMc-bound}
is satisfied.

We have simulated with MATLAB the dynamic behavior for different values of $k_L$.
All results have been in agreement with those predicted by the theory.
Below, we briefly illustrate the simulations in the case $k_L=1.1<k_{L\rm{max}}$
and the manifold index is $(Q_0,\Phi_0)=(0,0)$.
Via a suitable MATLAB program, we have found that the number of EPs of
the SEs (\ref{SEs FCD RLCM special}) is equal to 9, moreover, there are
4 stable EPs and 5 unstable EPs. Figure\ \ref{fig:statespacecirc2}
shows the behavior of a number of convergent solutions starting at different
initial conditions in the phase space in the FCD. As predicted by
Proposition\ \ref{th_RLCMc-bound} and Theorem\ \ref{th_RLCMc}, all solutions
are bounded and they converge to an EP.

\begin{figure}[t]
\begin{center}
\begin{subfigure}{0.49\columnwidth}
  \includegraphics[width=\linewidth]{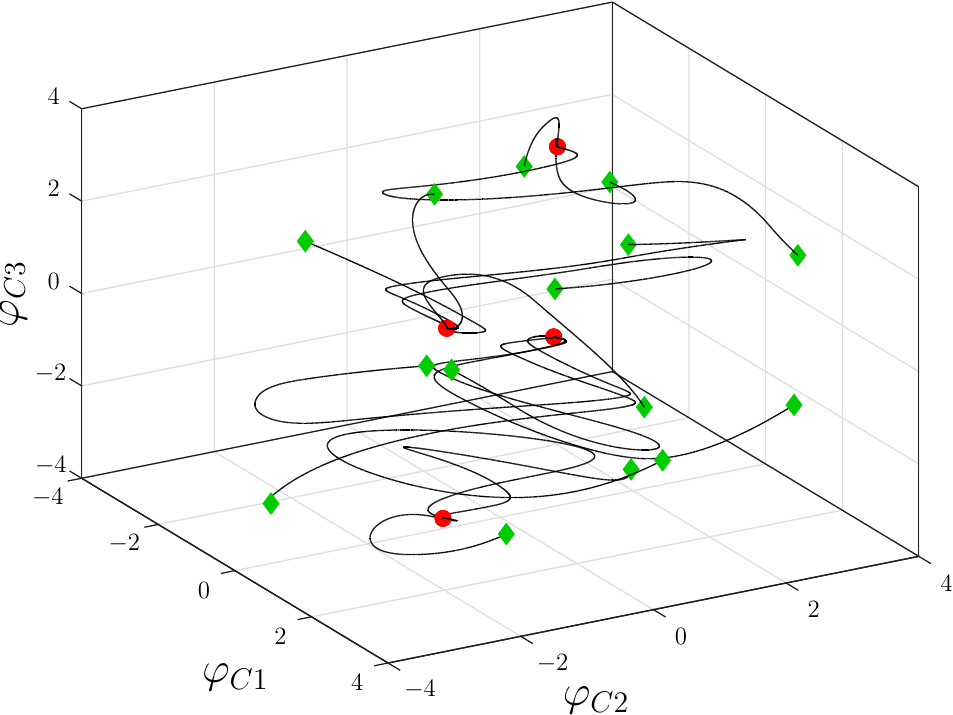}
  \caption{\small \small }
\end{subfigure}
\begin{subfigure}{0.49\columnwidth}
  \includegraphics[width=\linewidth]{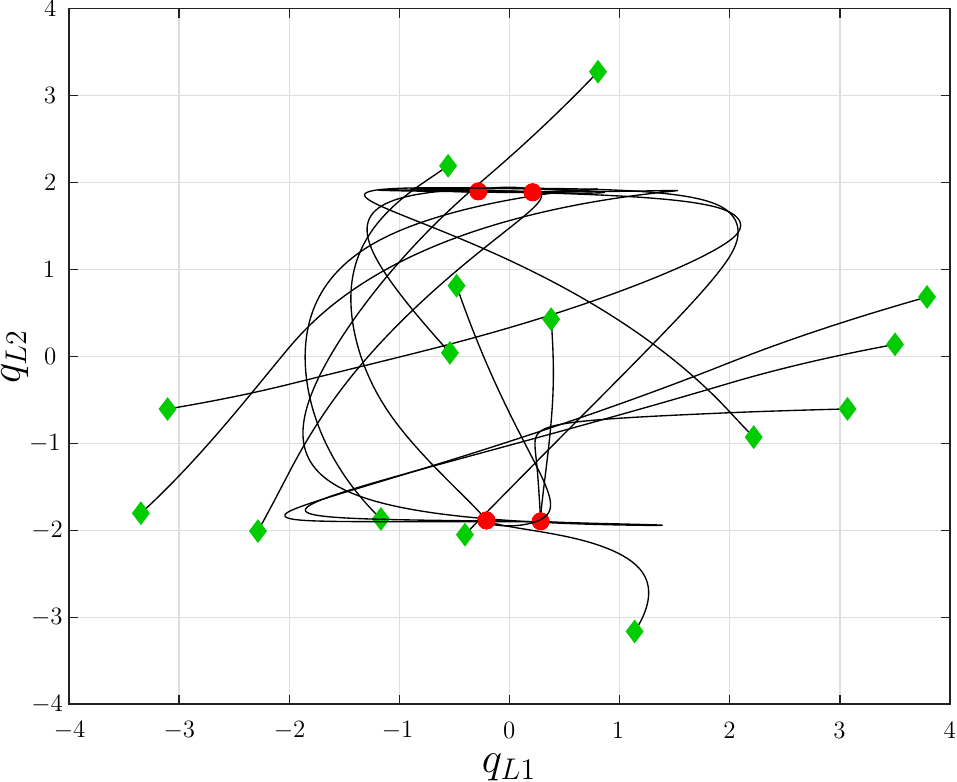}
  \caption{\small \small }
\end{subfigure} \\
\caption{\small State-space evolution of solutions starting at different initial conditions
for the memristor circuit in Example\ 2. Case $k_L=1.5$. Initial conditions are denoted by green points and
EPs by red points.
(a) Capacitor fluxes, (b) inductor charges.}
\label{fig:statespacecirc2}
\end{center}
\end{figure}

\section{Conclusion}
\label{sect:concl}
The paper has considered a large class of nonlinear circuits, termed RLCM, containing
all four basic circuit elements, i.e., resistors, inductors, capacitors and memristors.
By using the mixed potential introduced in a companion paper \cite{DiMarco2026BraytonTheor},
systematic results
on convergence for RLCM circuits have been established. The results generalize
previous results on convergence for nonlinear RLC circuits without memristors
and nonlinear memristor circuits without inductors and, to the authors
knowledge, they are the only existing convergence results for
circuits containing all four basic circuit elements. The obtained convergence results
are robust, i.e., they hold in an open set of circuit parameters and for typical
nonlinearities used to model the memristors. Conditions
to have multiple stable EPs are clearly identified based on the use and location
of active resistors. Multiple stable EPs are of importance in several engineering
applications as the implementation of CAMs. The results are illustrated via the
application to some specific RLCM circuits. Future extensions will address the
possibility to relax some of the assumptions made in the convergence theorems in
order to broaden the field of applicability. Further work will also be devoted to the
application to high-dimensional circuits with a neural-like architecture in view
of the implementation of CAMs.

%
%
%

\section*{Appendix A}
Function $V(Z(\cdot))$ is convex and Lipschitz with Lipschitz constant $1$. Then, by the chain rule as in \cite[Property\ 1]{FGNP06} we have
\begin{align}
\label{eq:scalarprod}
\dot V(Z(t)) & = \langle \xi,  Z(t) \rangle, \ \ \forall \xi \in \partial V (Z(t))
\end{align}
for almost all $t\in [0,\tau]$, where $\partial V(Z(t))$ is the subgradient in the sense of convex analysis of $V$ evaluated at $Z(t)$ \cite[Def.\ 1.2.1]{convex} and $\langle \cdot , \cdot \rangle$ is the inner product.
Recall that $k(t) \in \{1,\dots, n_C+n_L\}$ is such that $|Z_{k(t)}(t)|=\|Z(t)\|_\infty$ for $t \geq 0$. It can be checked that, for any $Z(t) \neq 0$, vector $\xi(t) \in \R^{n_C+n_L}$ such that $\xi_{k(t)}(t) = \mathrm{sgn}(Z_{k(t)}(t))$ and $\xi _i(t) = 0$ for any $i \in \{1,\dots, n_C+n_L\}$, $i \neq k(t)$, is such that $\xi(t) \in \partial V (Z(t))$.
Then, using~(\ref{SEs FCD RLCM special}) to evaluate~(\ref{eq:scalarprod}) we obtain
\begin{equation}
\label{eq:valuescalprodC}
\begin{split}
\dot V(Z(t))=&-\hat{Q}_{Mk(t)}(Z_{k(t)}(t)) \, \mathrm{sgn} (Z_{k(t)}(t))\\
 &+ Q_{0k(t)} \, \mathrm{sgn} (Z_{k(t)}(t)) \\
& - \left(\sum_{j=1}^{n_C+n_L}h_{k(t)j} Z_j(t)\right) \mathrm{sgn} (Z_{k(t)}(t))
\end{split}
\end{equation}
if $k(t) \in {1,\dots,n_C}$, whereas
\begin{equation}
\label{eq:valuescalprodL}
\begin{split}
\dot V(Z(t))= &- \hat{\Phi}_{M(k(t)-n_C)}( Z_{k(t)}(t)) \,
 \mathrm{sgn} (Z_{k(t)}(t))\\
 &+ \Phi_{0(k(t)-n_C)} \, \mathrm{sgn} (Z_{k(t)}(t))\\
 &- \left(\sum_{j=1}^{n_C+n_L}h_{k(t)j}Z_j(t)\right) \mathrm{sgn} (Z_{k(t)}(t))
\end{split}
\end{equation}
if $k(t) \in {n_C+1,\dots,n_C+n_L}$.
Let us first focus on the case $k(t) \in {1,\dots,n_C}$ (cf.~(\ref{eq:valuescalprodC}))
and assume that
\begin{equation}
\label{eq:conditionZC}
|Z_{k(t)}(t)| > \max_{i=1,\dots,n_C}\{ \tilde \Phi_{Mi} \}.
\end{equation}
First, note that
\begin{align*}
Q_{0k(t)} \mathrm{sgn} (Z_{k(t)}(t)) \le  \| Q_0\|_\infty .
\end{align*}
Moreover, since $|Z_{k(t)}| \ge |Z_j|$, it holds
\begin{align*}
-& \left(\sum_{j=1}^{n_C+n_L} h_{k(t)j} Z_j(t)\right) \mathrm{sgn} (Z_{k(t)}(t)) \\
\leq & \sum_{j=1}^{n_C+n_L} |h_{k(t)j}| |Z_k(t)| =  \| H_{\alpha \alpha}, H_{\alpha \beta} \|_\infty \|Z_{k(t)}(t)\|_\infty.
\end{align*}
Finally, considering that $\hat Q_{Mi}(0)=0$, $\hat Q'_{Mi}(\Phi_{Mi})\geq 0$, and conditions~(\ref{Gmi}) holds, then, under condition~(\ref{eq:conditionZC}), we have
\begin{align*}
&\hat{Q}_{Mk(t)}\left(Z_{k(t)}(t) \right) \mathrm{sgn} (Z_{k(t)}(t)) =  \left| \hat{Q}_{Mk(t)}\left( Z_{k(t)}(t)  \right) \right| \\
& \geq G_{Mk(t)} |Z_{k(t)}(t)| = G_{Mk(t)} \|Z_{k(t)}(t)\|_\infty.
\end{align*}
Using the last three inequalities in~(\ref{eq:valuescalprodC}), we obtain
\begin{align*}
\begin{split}
\dot V(Z(t))
\leq &  - \| Z_{k(t)}(t)\|_\infty \\
 & \cdot \left( G_{Mk(t)} - \frac{ \| Q_0\|_\infty}{ \|Z_{k(t)}(t)\|_\infty} -\| H_{\alpha \alpha}, H_{\alpha \beta} \|_\infty  \right).
\end{split}
\end{align*}
By a similar argument, we can tackle the case $k(t) \in n_C+1,\dots,n_C+n_L$, under the assumption
$$
|Z_{k(t)}(t)| > \max_{i=1,\dots, n_L} \{\tilde Q_{Mi}\}.
$$
This completes the proof. \qed

\bibliographystyle{unsrt}

\end{document}